\theoremstyle{plain}
\newtheorem{theorem}{Theorem}[section]
\newtheorem{lemma}[theorem]{Lemma}
\newtheorem{proposition}[theorem]{Proposition}
\theoremstyle{definition}
\newtheorem{definition}[theorem]{Definition}
\theoremstyle{remark}
\newtheorem{remark}{Remark}
\renewcommand{\phi}{\varphi}
\newenvironment{customthm}[1]
  {\innercustomthm}
  {\endinnercustomthm}
\newcommand{\iffi}{\textit{iff} }
\newcommand{\dfn}{Definition}
\newcommand{\thm}{Theorem}
\newcommand{\lem}{Lemma}
\newcommand{\prp}{Proposition}
\newcommand{\sect}{Section}
\newcommand{\cptr}{Chapter}
\newcommand{\fig}{Figure}
\newcommand{\rmk}{Remark}
\newcommand{\app}{Appendix}
\newcommand{\nipc}{\mathsf{N}_{\ipc}}
\newcommand{\ngd}{\mathsf{N}_{\gd}}
\newcommand{\nnd}{\mathsf{N}_{\nd}}
\newcommand{\ncdl}{\mathsf{N}_{\cdl}}
\newcommand{\nndl}{\mathsf{N}_{\ndl}}
\newcommand{\ncd}{\mathsf{N}_{\cd}}
\newcommand{\ncalc}{\mathsf{N}}
\newcommand{\hqlc}{\mathsf{HQLC}}
\newcommand{\hif}{\mathsf{HIF}}
\newcommand{\X}{\mathsf{X}}
\newcommand{\lab}{\mathrm{Lab}}
\newcommand{\prop}{\Upphi}
\newcommand{\pred}{\Uppsi}
\newcommand{\vars}{\mathrm{Var}}
\newcommand{\parama}{y}
\newcommand{\imp}{\supset}
\newcommand{\langp}{\mathcal{L}_{P}}
\newcommand{\langq}{\mathcal{L}_{Q}}
\newcommand{\ari}[1]{ar(#1)}
\newcommand{\cd}{\textrm{CD}}
\newcommand{\nd}{\textrm{ND}}
\newcommand{\cdl}{\textrm{GC}}
\newcommand{\ndl}{\textrm{GN}}
\newcommand{\gd}{\textrm{GD}}
\newcommand{\ipc}{\textrm{I}}
\newcommand{\logic}{\textrm{L}}
\newcommand{\lin}{(lin)}
\newcommand{\nsa}{\Sigma}
\newcommand{\nsb}{\Gamma}
\newcommand{\nsc}{\Delta}
\newcommand{\nsd}{\Pi}
\newcommand{\ns}{\Sigma}
\newcommand{\nant}{\Gamma}
\newcommand{\ncon}{\Delta}
\newcommand{\nx}{\Sigma}
\newcommand{\ny}{\Gamma}
\newcommand{\nest}[2]{[#1]_{#2}}
\newcommand{\onest}[2]{[#1]_{#2}}
\newcommand{\lb}{[}
\newcommand{\rb}{]}
\newcommand{\hol}{\{}
\newcommand{\hor}{\}}
\newcommand{\inp}{\bullet}
\newcommand{\outp}{\circ}
\newcommand{\io}{\star}
\newcommand{\rable}[2]{#1 \twoheadrightarrow^{*} #2}
\newcommand{\rtable}[2]{#1 \twoheadrightarrow^{+} #2}
\newcommand{\id}{(id)}
\newcommand{\idfo}{(id_{Q})}
\newcommand{\botl}{(\bot^{\inp})}
\newcommand{\botr}{(\bot^{\outp})}
\newcommand{\disr}{(\lor^{\outp})}
\newcommand{\conr}{(\land^{\outp})}
\newcommand{\impl}{(\imp^{\inp})}
\newcommand{\impr}{(\imp^{\outp})}
\newcommand{\existsri}{(\exists^{\outp}_{1})}
\newcommand{\existsrii}{(\exists^{\outp}_{2})}
\newcommand{\allr}{(\forall^{\outp})}
\newcommand{\disl}{(\lor^{\inp})}
\newcommand{\conl}{(\land^{\inp})}
\newcommand{\existsl}{(\exists^{\inp})}
\newcommand{\allli}{(\forall^{\inp}_{1})}
\newcommand{\alllii}{(\forall^{\inp}_{2})}
\newcommand{\cdr}{(cd)}
\newcommand{\ddr}{(dd)}
\newcommand{\ndr}{(nd)}
\newcommand{\doms}{(ds)}
\newcommand{\psub}{(ps)}
\newcommand{\lsub}{(ls)}
\newcommand{\wk}{(wk)}
\newcommand{\wkv}{(wv)}
\newcommand{\ctrv}{(cv)}
\newcommand{\ex}{(ex)}
\newcommand{\ctrr}{(ctr^{\outp})}
\newcommand{\ctrl}{(ctr^{\inp})}
\newcommand{\cut}{(cut)}
\newcommand{\ctr}{(ctr^{\io})}
\newcommand{\mrg}{(mrg)}
\newcommand{\lwr}{(lwr)}
\newcommand{\lft}{(lft)}
\newcommand{\ec}{(ec)}
\newcommand{\nec}{(n)}
\newcommand{\ru}{(r)}
\newcommand{\rone}{(r_{1})}
\newcommand{\rtwo}{(r_{2})}
\newcommand{\prf}{\mathcal{D}}
\newcommand{\va}{\mathrm{X}}
\newcommand{\vb}{\mathrm{Y}}
\newcommand{\vc}{\mathrm{Z}}
\newcommand{\fint}[1]{f(#1)}
\newcommand{\branch}{\mathcal{B}}
\newcommand{\prove}{\mathtt{Prove}}
\newcommand{\success}{\mathtt{True}}
\newcommand{\dlc}{(dlc)}
\newcommand{\dbc}{(dbc)}
\newcommand{\bdtwo}{\mathrm{Bd_{2}}}
\begin{document}


\title{Nested Sequents for Intermediate Logics: The Case of G\"odel-Dummett Logics}

\author{
\name{Tim S. Lyon\thanks{CONTACT Tim S. Lyon. Email: timothy\_stephen.lyon@tu-dresden.de}}
\affil{Institute of Artificial Intelligence, Technische Universit\"at Dresden, Dresden, Germany}
}

\maketitle

\begin{abstract}
We present nested sequent systems for propositional G\"odel-Dummett logic and its first-order extensions with non-constant and constant domains, built atop nested calculi for intuitionistic logics. To obtain nested systems for these G\"odel-Dummett logics, we introduce a new structural rule, called the \emph{linearity rule}, which (bottom-up) operates by linearizing branching structure in a given nested sequent. In addition, an interesting feature of our calculi is the inclusion of \emph{reachability rules}, which are special logical rules that operate by propagating data and/or checking if data exists along certain paths within a nested sequent. Such rules require us to generalize our nested sequents to include \emph{signatures} (i.e. finite collections of variables) in the first-order cases, thus giving rise to a generalization of the usual nested sequent formalism. Our calculi exhibit 
 favorable properties, admitting the height-preserving invertibility of every logical rule and the (height-preserving) admissibility of a large collection of structural and reachability rules. We prove all of our systems sound and cut-free complete, and show that syntactic cut-elimination obtains for the intuitionistic systems. We conclude the paper by discussing possible extensions and modifications, putting forth an array of structural rules that could be used to provide a sizable class of intermediate logics with cut-free nested sequent systems.
\end{abstract}

\begin{keywords}
Admissibility; cut-elimination; first-order; intermediate logic; invertibility; nested sequent: proof theory; reachability rule, signature
\end{keywords}

\section{Introduction}

 Intermediate logics are fragments of classical logic subsuming intuitionistic logic, and thus exist as logics `intermediate' between the former and the latter. In this paper, we study the proof theory of a set of intermediate logics referred to as \emph{G\"odel-Dummett logics} as well as the intuitionistic logics they are based upon. Such logics have attracted considerable attention in the literature. For instance, propositional G\"odel-Dummett logic was used by \citet{God32} to prove that intuitionistic logic does not have a finite characteristic matrix, \citet{Vis82} used the logic in an analysis of Heyting arithmetic, and \citet{LifPeaVal01} employed a variation of the logic to study the strong equivalence of logic programs. Moreover, G\"odel-Dummett logics have been recognized as blending the paradigms of fuzzy and constructive reasoning~\citep{Avr91,BaaPreZac07,BaaZac00,Haj98,TakTit84}. 
 
 We consider the proof theory of three G\"odel-Dummett logics in this paper, namely,
\begin{description}

\item[(1)] Propositional G\"odel-Dummett logic ($\gd$),

\item[(2)] First-order G\"odel-Dummett logic with non-constant domains ($\ndl$), and 

\item[(3)] First-order G\"odel-Dummett logic with constant domains ($\cdl$).

\end{description}
  We also consider the proof theory of their base intuitionistic logics; in particular, 
\begin{description}

\item[(4)] Propositional intuitionistic logic ($\ipc$), 

\item[(5)] First-order intuitionistic logic with non-constant domains ($\nd$), and 

\item[(6)] First-order intuitionistic logic with constant domains ($\cd$).

\end{description}
  The logics $\gd$, $\ndl$, and $\cdl$ can be obtained by extending the axiomatizations of $\ipc$, $\nd$, and $\cd$, respectively, with the \emph{linearity axiom} $(\phi \imp \psi) \lor (\psi \imp \phi)$ (cf.~\citet{GabSheSkv09}). In semantic terms, the logics $\gd$, $\ndl$, and $\cdl$ can be seen as the set of valid formulae over Kripke frames for $\ipc$, $\nd$, and $\cd$, respectively, which additionally satisfy the \emph{connectivity condition}, i.e. for worlds $w$, $u$, and $v$, if $w$ relates to $u$ and $v$ via the accessibility relation, then either $u$ relates to $v$ or $v$ relates to $u$ (cf.~\citet{GabSheSkv09}). As Kripke frames for intuitionistic logics have an accessibility relation that is a pre-order, this constraint has the effect that Kripke frames for G\"odel-Dummett logics are \emph{linear}. Moreover, in the first-order setting every world of a Kripke frame is associated with a non-empty domain of elements used to interpret quantificational formulae. For the non-constant domain logics $\nd$ and $\ndl$, these domains are permitted to grow along the accessibility relation, whereas for the constant domain logics $\cd$ and $\cdl$ they are held constant at each world. This behavior is reflected in the axiomatizations of $\nd$, $\ndl$, $\cd$, and $\cdl$ whereby the latter two logics include the \emph{quantifier shift axiom} $\forall x (\phi \lor \psi) \imp \forall x \phi \lor \psi$ where $x$ does not occur free in $\psi$~\citep{GabSheSkv09,Grz64}, and the former two logics omit it.

 The central aim of this paper is to provide a uniform and modular proof theory for the above six logics in the style of \emph{nested sequents}. A nested sequent is a formula encoding a \emph{tree} whose nodes are (pairs of) multisets of formulae, and a nested sequent system (or, calculus) is a set of inference rules that operate over such. The paradigm of nested sequents serves as a proper generalization of Gentzen's sequent calculus formalism~\citep{Gen35a,Gen35b} and was initiated by \citet{Bul92} and \citet{Kas94}. The framework of nested sequents was then subsequently expanded upon in a sequence of works by Br\"unnler and Poggiolesi~\citep{Bru06,Bru09,Pog09,Pog09b,Pog10} whereby the authors explored admissibility, invertibility, and cut-elimination properties in the context of modal logics. The introduction of these systems was largely motivated by the search for \emph{analytic} proof systems, which operate by step-wise (de)composing logical formulae. This method of deduction has the effect that proofs generated within such systems typically exhibit the \emph{sub-formula property}, i.e. every formula occurring in a proof is a subformula of the conclusion of the proof. As the well-known cut rule deletes formulae from the premises to the conclusion in an inference, thus violating the sub-formula property, analytic systems are normally \emph{cut-free} and do not require the cut rule for completeness. The analytic quality of nested systems has proven them useful in a variety of cases; for instance, nested sequent calculi have been employed in constructive proofs of interpolation~\citep{FitKuz15,LyoTiuGorClo20}, in writing decision procedures~\citep{Lyo21thesis,TiuIanGor12}, and in establishing complexity-hardness results~\citep{LyoGom22}. 
 
 A characteristic feature of nested calculi is the incorporation of \emph{propagation rules}~\citep{CasCerGasHer97,Fit72,GorPosTiu08} and/or \emph{reachability rules}~\citep{Fit14,Lyo21thesis}. Propagation rules function by propagating formulae along certain paths within the tree structure of a nested sequent, whereas reachability rules have the additional functionality of checking if data exists along certain paths within a nested sequent (see~\citet{Lyo21thesis} for a discussion). The latter class of rules was motivated by the work of \citet{Fit14}, who provided a mechanism (referred to as \emph{availability}) for capturing both non-constant and constant domain variants of first-order intuitionistic logic in a single nested sequent framework (and equivalent prefixed tableaux). In essence, Fitting shows that through the imposition or dismissal of a certain side condition on quantifier rules one can capture \nd \ and \cd, respectively, in a single nested calculus presentation. Thus, enforcing or not enforcing this side condition permits one to `toggle' between the non-constant and constant domain variants of first-order intuitionistic logic.
 
 In this paper, we present a modified version of Fitting's account to capture both non-constant and constant domain reasoning within a single (analytic and cut-free) nested sequent framework. In particular, we generalize the structure of nested sequents to include a multiset of variables (called a \emph{signature}) at every node in the tree encoded by a nested sequent, similar to what was done with hypersequents to capture the logic \ndl \ in~\citet{Tiu11}. We then define reachability rules that operate by searching for variables through 
 paths of a nested sequent, instantiating quantificational formulae with such terms when applied bottom-up. We note that our use of signatures in nested sequents is helpful for proving cut-free completeness, namely, in extracting counter-models from failed proof-search. 
 As a consequence of our formulation, we show that we obtain nested systems with desirable proof-theoretic properties, such as the (height-preserving) admissibility of a sizable class of structural rules and the (height-preserving) invertibility of all logical rules. Similar properties were first shown to hold for nested systems in the context of modal logics~\citep{Bru09,Pog09}, and typically endow nested systems with certain advantages over calculi built within other proof-theoretic formalisms. For example, the hypersequent systems of \citet{BaaZac00} and \cite{Tiu11} for $\cdl$ and $\ndl$, respectively, include non-invertible logical rules, which obfuscates the extraction of counter-models from failed proof-search. By contrast, our nested systems circumvent this issue as all logical rules are invertible.
 
 Beyond generalizing nested sequents with signatures and formulating a new class of reachability rules, we also introduce a novel structural rule, called the \emph{linearity rule} $\lin$, which lets us pass from nested systems for intuitionistic logics to nested systems for G\"odel-Dummett logics. The linearity rule offers a unique functionality, which \emph{linearizes} nested sequents when applied bottom-up, and differs from the various rules given in the literature on G\"odel-Dummett logics, being distinct from the communication rule used in hypersequents~\citep{Avr91}, the connected rule used in labeled sequents~\citep{DycNeg12}, and the $\rightarrow_{R}^{2}$ rule used in linear nested sequents~\citep{KuzLel18}. We show that our nested systems for \ipc, \nd, and \cd \ become sound and complete for \gd, \ndl, and \cdl, respectively, when extended with the $\lin$ rule.
 
 To summarize, our paper accomplishes the following: (1) We generalize the formalism of nested sequents to include signatures, permitting us to define a new class of reachability rules suitable for toggling between non-constant and constant domain reasoning. (2) We provide a uniform, modular, and analytic nested sequent presentation of the above six logics showing all systems sound and cut-free complete. (3) We introduce a novel structural rule capturing linear reasoning, whose presence or omission lets us pass between nested calculi for intuitionistic and G\"odel-Dummett logics. (4) We show that useful structural and reachability rules are (height-preserving) admissible in our nested calculi, that all logical rules are (height-preserving) invertible, and show that the nested calculi for \ipc, \nd, and \cd \ satisfy a syntactic cut-elimination theorem. (5) We define a novel class of nested structural rules, conjecturing that a wide array of intermediate logics can be captured by extending the nested calculi for \ipc, \nd, and \cd \ with such rules, thus making progress toward the development of a general theory of nested sequents for intermediate logics.

The paper is organized as follows: In \sect~\ref{sec:log-prelims-I}, we give the semantics and axiomatizations for the six logics mentioned above. In the subsequent section (\sect~\ref{sec:nested-calculi}), we define our nested sequent calculi showing them sound and cut-free complete. In \sect~\ref{sec:properties}, we establish (height-preserving) admissibility and invertibility results for our nested systems, and in \sect~\ref{sec:cut-elim}, we prove syntactic cut-elimination for the \ipc, \nd, and \cd \ systems. In the final section (\sect~\ref{sec:conclusion}) we discuss possible extension of our framework to capture other intermediate logics with nested sequents, and conclude.

\section{Logical Preliminaries}
\label{sec:log-prelims-I}

 We now introduce the intermediate logics that will be discussed throughout the paper. In the first subsection, we explain the semantics and give the axiomatizations for two logics: intuitionistic propositional logic and G\"odel-Dummett logic. In the second subsection, we extend propositional intuitionistic logic and G\"odel-Dummett logic to the first-order case, giving the semantics and axiomatizations for the non-constant and constant domain versions.

\subsection{Intuitionistic and G\"odel-Dummett Logic}

 We let $\prop = \{p, q, r, \ldots\}$ be a set of denumerably many \emph{propositional variables} and we define our language $\langp$ to be the set of all formulae generated from the following grammar in BNF:
$$
\phi ::= p \ \vert \ \bot \ \vert \ (\phi \lor \phi) \ \vert \ (\phi \land \phi)  \ \vert \ (\phi \imp \phi)
$$
 where $p$ ranges over $\prop$. We use lower-case Greek letters $\phi$, $\psi$, $\chi$, $\ldots$ to denote formulae and define $\top = p \imp p$ for a fixed propositional variable $p$. The complexity of a formula $\phi$, written $\vert\phi\vert$, is recursively defined as follows: (i) $\vert p \vert = \vert \bot \vert := 0$ and (ii) $\vert \phi \star \psi \vert := \vert \phi \vert + \vert \psi \vert + 1$ for $\star \in \{\lor, \land, \imp\}$. We present a Kripke-style semantics for our logics (cf.~\citet{GabSheSkv09}), defining the frames and models used first, and explaining how formulae are evaluated over them second.

\begin{definition}[Frame, Model]\label{def:frame-model} We define two types of frames with the second type extending the properties of the first:
\begin{itemize}

\item An \emph{\ipc-frame} is a pair $F = (W,\leq)$ such that $W$ is a non-empty set $\{w, u, v, \ldots\}$ of \emph{worlds} and $\leq \ \subseteq W \times W$ is a reflexive and transitive binary relation on $W$.\footnote{The properties imposed on $\leq$ are defined as follows: (reflexivity) for all $w \in W$, $w \leq w$, and (transitivity) for all $w, u, v \in W$, if $w \leq v$ and $v \leq u$, then $w \leq u$.}

\item A \emph{\gd-frame} is an \ipc-frame that also satisfies the following \emph{connectivity condition}: if $w \leq u$ and $w \leq v$, then either $u \leq v$ or $v \leq u$.
\end{itemize}

 We define an \emph{\ipc-model} and \emph{\gd-model} $M$ to be an ordered pair $(F,V)$ where $F$ is an \ipc-frame or \gd-frame, respectively, and where $V$ is a \emph{valuation function} such that $V(p,w) \subseteq \{()\}$, where $()$ is the empty tuple, meaning $V(p,w) = \{()\}$ or $V(p,w) = \emptyset$, and which satisfies the \emph{monotonicity condition}: (M) If $w \leq v$, then $V(p,w) \subseteq V(p,v)$.\footnote{As specified in \dfn~\ref{def:semantic-clauses}, we interpret $() \in V(p,w)$ to mean that $p$ holds at $w$, and $() \not\in V(p,w)$ to mean that $p$ does not hold at $w$. We define the valuation function in the manner described as it easily generalizes to the first-order case.}
\end{definition}


 We remark that the connectivity condition imposed on \gd-frames/models implies that the $\leq$ relation is \emph{linear}, i.e. for any two worlds $w$ and $u$, either $w \leq u$ or $u \leq w$.

\begin{definition}[Semantic Clauses]
\label{def:semantic-clauses} Let $M$ be an \ipc- or \gd-model with $w \in W$. We interpret formulae by means of the following clauses:

\begin{itemize}



\item $M,w \Vdash p$ \iffi $() \in V(p,w)$;

\item $M,w \not\Vdash \bot$;

\item $M,w \Vdash \phi \lor \psi$ \iffi $M,w \Vdash \phi$ or $M,w \Vdash \psi$;

\item $M,w \Vdash \phi \land \psi$ \iffi $M,w \Vdash \phi$ and $M,w \Vdash \psi$;

\item $M,w \Vdash \phi \imp \psi$ \iffi for all $u \in W$, if $w \leq u$ and $M,u \Vdash \phi$, then $M,u \Vdash \psi$;

\item $M \Vdash \phi$ \iffi $M,u \Vdash \phi$ for worlds $u \in W$ of $M$.

\end{itemize}
A formula $\phi \in \langp$ is \emph{\ipc-valid} or \emph{\gd-valid} \iffi $M \Vdash \phi$ for all \ipc-models or \gd-models $M$, respectively.
\end{definition}

 The following generalized version of the monotonicity property holds on \ipc- and \gd-models, and can be shown by induction on the complexity of $\phi$~\citep{GabSheSkv09}.

\begin{proposition}\label{prop:monotonicity} Let $M$ be an \ipc- or \gd-model. For any formula $\phi \in \langp$, if $M,w \Vdash \phi$ and $w \leq v$, then $M,v \Vdash \phi$.
\end{proposition}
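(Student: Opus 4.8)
The plan is to proceed by induction on the complexity $|\phi|$ of the formula $\phi$, exactly as the paper suggests. Fix an \ipc- or \gd-model $M$, and suppose $M,w \Vdash \phi$ with $w \leq v$; I aim to show $M,v \Vdash \phi$.

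For the base cases, if $\phi = p$ is a propositional variable, then $M,w \Vdash p$ means $() \in V(p,w)$; by the monotonicity condition (M) on valuations, $V(p,w) \subseteq V(p,v)$, so $() \in V(p,v)$, i.e. $M,v \Vdash p$. If $\phi = \bot$, the hypothesis $M,w \Vdash \bot$ is vacuous since no world forces $\bot$, so there is nothing to prove. For the inductive step, I would split on the main connective. The cases $\phi = \psi \lor \chi$ and $\phi = \psi \land \chi$ are immediate: the semantic clauses reduce forcing of $\phi$ at $w$ to forcing of the strictly smaller subformulae $\psi$, $\chi$ at $w$, the induction hypothesis transports each such fact from $w$ to $v$, and the clause at $v$ reassembles them. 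The only genuinely interesting case is $\phi = \psi \imp \chi$: assume $M,w \Vdash \psi \imp \chi$ and let $u$ be any world with $v \leq u$ and $M,u \Vdash \psi$; since $\leq$ is transitive and $w \leq v \leq u$, we have $w \leq u$, so the clause for $\imp$ at $w$ yields $M,u \Vdash \chi$, which is exactly what is needed to conclude $M,v \Vdash \psi \imp \chi$.

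It is worth noting that the implication case does not even invoke the induction hypothesis; it relies only on transitivity of $\leq$. So the main (and only) point requiring care is simply organizing the induction correctly and invoking (M) in the atomic case. There is no real obstacle here: the proposition is essentially a direct unfolding of the semantic clauses together with the reflexivity/transitivity of $\leq$ and condition (M), and the argument is uniform across \ipc- and \gd-models since the connectivity condition plays no role.
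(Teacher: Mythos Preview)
Your proposal is correct and follows exactly the approach the paper indicates: the paper simply states that the result ``can be shown by induction on the complexity of $\phi$'' (citing Gabbay--Shehtman--Skvortsov) without spelling out the cases. Your case analysis is the standard one, and your observation that the $\imp$ case uses only transitivity of $\leq$ (not the induction hypothesis) is accurate.
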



\begin{definition}[Axioms]\label{def:axioms} We define the propositional logics by means of the following set of axioms:
\begin{multicols}{2}
\begin{description}

\item[A0] $\phi \supset (\psi \supset \phi)$

\item[A1] $(\phi \supset (\psi {\supset} \chi)) \supset ((\phi {\supset} \psi) \supset (\phi {\supset} \chi))$

\item[A2] $\phi \supset (\psi \supset (\phi \land \psi))$

\item[A3] $(\phi \land \psi) \supset \phi$

\item[A4] $(\phi \land \psi) \supset \psi$

\item[A5] $\phi \supset (\phi \lor \psi)$

\item[A6] $\psi \supset (\phi \lor \psi)$

\item[A7] $(\phi {\supset} \chi) \supset ((\psi {\supset} \chi) \supset ((\phi {\lor} \psi) \supset \chi))$

\item[A8] $\bot \imp \phi$

\item[A9] $(\phi \imp \psi) \lor (\psi \imp \phi)$

\item[R0] \AxiomC{$\phi$}\AxiomC{$\phi \imp \psi$}\RightLabel{mp}\BinaryInfC{$\psi$}\DisplayProof

\end{description}
\end{multicols}
 We define \emph{intuitionistic propositional logic} \ipc \ to be the smallest set of formulae from $\langp$ closed under substitutions of the axioms A0--A8 and applications of the inference rule R0. We define \emph{G\"odel-Dummett logic} \gd \ to be the smallest set of formulae from $\langp$ closed under the axioms A0--A9 and applications of the inference rule R0. We refer to axiom A9 as the \emph{linearity axiom}. For $\logic \in \{\ipc, \gd\}$, we write $\vdash_{\logic} \phi$ to denote that $\phi$ is an element, or \emph{theorem}, of $\logic$.
\end{definition}


 The following soundness and completeness results are well-known; cf.~\citet{GabSheSkv09}.

\begin{theorem}[Soundness and Completeness]\label{thm:sound-complete-logics}  For $\phi \in \langp$, $\vdash_{\logic} \phi$ \iffi $\phi$ is \logic-valid with $\logic \in \{\ipc,\gd\}$.
\end{theorem}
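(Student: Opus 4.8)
The plan is to establish both directions of the biconditional separately, treating $\ipc$ and $\gd$ in parallel where possible and then handling the extra ingredient (axiom A9) for $\gd$. For the soundness direction ($\vdash_{\logic}\phi$ implies $\phi$ is $\logic$-valid), I would proceed by induction on the length of a derivation of $\phi$. The base case requires checking that each axiom A0--A8 is $\ipc$-valid (hence also $\gd$-valid, since every $\gd$-model is an $\ipc$-model), and that A9 is $\gd$-valid. The axiom validations are routine unfoldings of \dfn~\ref{def:semantic-clauses}: the only subtle ones are A0, A1, A8 and A9, where the clause for $\imp$ quantifies over $\leq$-successors, so one must use Proposition~\ref{prop:monotonicity} to transport truth of hypotheses to the relevant successor world. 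For A9 specifically, fix a $\gd$-model $M$ and a world $w$; suppose toward a contradiction that $M,w\not\Vdash(\phi\imp\psi)\lor(\psi\imp\phi)$. Then there are $u,v$ with $w\leq u$, $M,u\Vdash\phi$, $M,u\not\Vdash\psi$, and $w\leq v$, $M,v\Vdash\psi$, $M,v\not\Vdash\phi$. By the connectivity condition, either $u\leq v$ or $v\leq u$; in the first case Proposition~\ref{prop:monotonicity} gives $M,v\Vdash\phi$, contradicting $M,v\not\Vdash\phi$, and symmetrically in the second case. For the inductive step, the single rule R0 (modus ponens) preserves validity at every world: if $M,u\Vdash\phi$ and $M,u\Vdash\phi\imp\psi$ for all $u$, then instantiating the $\imp$-clause at $u$ with $u\leq u$ (reflexivity) yields $M,u\Vdash\psi$.

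For the completeness direction ($\phi$ $\logic$-valid implies $\vdash_{\logic}\phi$), I would use the standard canonical model construction via prime theories. Define a $\logic$-theory to be a set $\Gamma\subseteq\langp$ closed under $\vdash_{\logic}$-consequence, and call it prime if additionally $\phi\lor\psi\in\Gamma$ implies $\phi\in\Gamma$ or $\psi\in\Gamma$, and $\bot\notin\Gamma$. The key Lindenbaum-style lemma states that whenever $\Gamma\not\vdash_{\logic}\chi$ there is a prime $\logic$-theory $\Gamma^{+}\supseteq\Gamma$ with $\chi\notin\Gamma^{+}$; this is proved by enumerating all formulae and greedily adding them while preserving underivability of $\chi$, using A5--A7 to handle disjunctions. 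The canonical model $M_{\logic}$ has as worlds all prime $\logic$-theories, ordered by inclusion, with $V(p,\Gamma)=\{()\}$ iff $p\in\Gamma$; monotonicity (M) is immediate, and reflexivity and transitivity of $\subseteq$ are trivial. The Truth Lemma, $M_{\logic},\Gamma\Vdash\psi$ iff $\psi\in\Gamma$, is proved by induction on $|\psi|$, the $\imp$-case again invoking the Lindenbaum lemma to produce a prime extension witnessing a failed implication. Then if $\not\vdash_{\logic}\phi$, take $\Gamma=\ipc$ (or $\gd$), apply the Lindenbaum lemma with $\chi=\phi$ to get a prime theory refuting $\phi$, and conclude $\phi$ is not $\logic$-valid.

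The step that needs genuine care is verifying that the canonical $\gd$-model actually satisfies the connectivity condition; everything else in the completeness argument is logic-agnostic. Suppose $\Gamma\subseteq\Delta$ and $\Gamma\subseteq\Theta$ for prime $\gd$-theories; I must show $\Delta\subseteq\Theta$ or $\Theta\subseteq\Delta$. If neither holds, pick $\psi\in\Delta\setminus\Theta$ and $\chi\in\Theta\setminus\Delta$. Since $\Gamma$ is a $\gd$-theory it contains the A9-instance $(\psi\imp\chi)\lor(\chi\imp\psi)$, hence so do $\Delta$ and $\Theta$, and by primeness each of $\Delta,\Theta$ contains one of the two disjuncts. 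One then derives a contradiction: e.g., if $\psi\imp\chi\in\Delta$ then since $\psi\in\Delta$ and $\Delta$ is closed under $\vdash_{\logic}$ (via R0) we get $\chi\in\Delta$, contradicting $\chi\notin\Delta$; the symmetric sub-cases are analogous, using that $\Delta\subseteq\Theta$ would be contradicted by $\chi\in\Theta\setminus\Delta$ only after pushing the implication into the right theory. Assembling these sub-cases correctly — making sure the chosen disjunct lands in a theory that also contains the matching antecedent — is the one place where a slip is easy, so I would write it out carefully. I expect this to be the main obstacle; the rest follows the textbook template (cf.~\citet{GabSheSkv09}), and I would cite that source for the routine portions.
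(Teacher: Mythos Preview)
The paper does not actually prove this theorem; it simply records it as well-known and cites \citet{GabSheSkv09}. Your proposal therefore goes well beyond what the paper supplies, and your outline is the standard canonical-model argument one finds in that reference. The soundness direction and the generic completeness machinery (Lindenbaum lemma, Truth Lemma) are fine as sketched.

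One genuine correction to your connectivity verification: splitting on which disjunct of $(\psi\imp\chi)\lor(\chi\imp\psi)$ lies in $\Delta$ (or in $\Theta$) leaves a sub-case that does not close. If $\chi\imp\psi\in\Delta$ while $\psi\imp\chi\in\Theta$, neither theory contains the antecedent matching its implication, and no contradiction follows from those facts alone. The clean fix is to use primeness of $\Gamma$ itself: $\Gamma$ is a world of the canonical model, hence prime, so already $\psi\imp\chi\in\Gamma$ or $\chi\imp\psi\in\Gamma$. In the first case $\Gamma\subseteq\Delta$ gives $\psi\imp\chi\in\Delta$, and with $\psi\in\Delta$ closure under R0 yields $\chi\in\Delta$, contradicting $\chi\notin\Delta$; the second case is symmetric via $\Gamma\subseteq\Theta$. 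With this adjustment your argument is correct.
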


\subsection{First-order Intuitionistic and G\"odel-Dummett Logics}

 We let $\vars := \{x, y, z, \ldots\}$ be a denumerable set of \emph{variables}. Our first-order language includes \emph{atomic formulae} of form $p(x_{1}, \ldots, x_{n})$, which are obtained by prefixing an $n$-ary predicate $p$ from a set $\pred := \{p, q, r, \ldots\}$ of denumerably many predicates of each arity $n \in \mathbb{N}$ to a tuple of variables of length $n$. We let $\ari{p}$ denote the arity of a predicate and refer to predicates of arity $0$ as \emph{propositional variables}. We will often write a list of variables $x_{1}, \ldots, x_{n}$ as $\vec{x}$, and similarly, will write atomic formulae of the form $p(x_{1}, \ldots, x_{n})$ as $p(\vec{x})$. The first-order language $\langq$ is defined to be the set of all formulae generated from the following grammar in BNF:
$$
\phi ::= p(\vec{x}) \ \vert \ \bot \ \vert \ (\phi \lor \phi) \ \vert \ (\phi \land \phi)  \ \vert \ (\phi \imp \phi) \ \vert \ (\exists x \phi) \ \vert \ (\forall x \phi)
$$
 where $p$ ranges over $\pred$, and the variables $\vec{x} = x_{1}, \ldots, x_{n}$ and $x$ range over the set $\vars$. We use lower-case Greek letters $\phi$, $\psi$, $\chi$, $\ldots$ to denote formulae.

 As usual, we say that the occurrence of a variable $x$ in $\phi$ is \emph{free} given that $x$ does not occur within the scope of a quantifier. 
 We say that \emph{$y$ is free for $x$ in $\phi$} when substituting $y$ for $x$ in $\phi$ does not cause $y$ to become bound by a quantifier; e.g. $y$ is free for $x$ in $\forall z (p(x) \land q(z))$, but not in $\forall y (p(x) \land q(y))$. In addition, we let $\phi(y/x)$ denote the substitution of the variable $y$ for all free occurrences of the variable $x$ in $\phi$, possibly renaming bound variables to ensure that $y$ is free for $x$ in $\phi$. We extend the definition of the \emph{complexity} of a formula from the previous section with the following case: $\vert Q x \phi \vert := \vert \phi \vert + 1$ for $Q \in \{\forall, \exists\}$.

 As before, we follow the work of ~\citet{GabSheSkv09}, and define a Kripke-style semantics for our first-order logics.

\begin{definition}[Frame, Model]\label{def:frame-model-fo} We define four types of frames:
\begin{itemize}

\item An \emph{\nd-frame} is a triple $F = (W,\leq,D)$ such that $(W,\leq)$ is an \ipc-frame and $D$ is a \emph{domain function} mapping a world $w \in W$ to a non-empty set $D(w)$ satisfying the \emph{nested domain condition}: (ND) If $d \in D(w)$ and $w \leq v$, then $d \in D(v)$. 

\item A \emph{\cd-frame} is an \emph{\nd-frame} that additionally satisfies the \emph{constant domain condition}: (CD) If $w,u \in W$, then $D(w) = D(u)$. 

\item An \emph{\ndl-frame} is a triple $F = (W,\leq,D)$ such that $(W,\leq)$ is a \gd-frame and $D$ is a \emph{domain function} mapping a world $w \in W$ to a non-empty set $D(w)$ satisfying the nested domain condition (ND). 

\item A \emph{\cdl-frame} is an \emph{\ndl-frame} that additionally satisfies the constant domain condition (CD).\footnote{Note that the (ND) condition becomes redundant in the presence of the (CD) condition.} 
\end{itemize}
 For $\logic \in \{\nd,\cd,\ndl,\cdl\}$, we define an \emph{\logic-model} $M$ to be an ordered pair $(F,V)$ where $F$ is an \logic-frame, and where $V$ is a \emph{valuation function} such that $V(p,w) \subseteq D(w)^{n}$ with $n \in \mathbb{N}$, which satisfies the following \emph{monotonicity condition}: (M) If $w \leq v$, then $V(p,w) \subseteq V(p,v)$. We make the simplifying assumption that for each world $w \in W$, $D(w)^{0} = \{()\}$, where $()$ is the empty tuple, meaning $V(p,w) = \{()\}$ or $V(p,w) = \emptyset$, for any propositional variable $p$ (as in \dfn~\ref{def:frame-model}). Thus, the first-order semantics extends the propositional semantics.
 
 Given an \logic-model $M = (W,\leq,D,V)$ with $w \in W$ for $\logic \in \{\nd,\cd,\ndl,\cdl\}$, we define an \emph{$M$-assignment} $\mu : \vars \to D(W)$ to be a function mapping variables to elements of $D(W) = \bigcup_{w \in W} D(w)$. We let $\mu[d/x]$ be the same as $\mu$, but where the variable $x$ is mapped to the element $d \in D(W)$.
\end{definition}


\begin{definition}[Semantic Clauses]
\label{def:semantic-clauses-fo} Let $\logic \in \{\nd,\cd,\ndl,\cdl\}$ and $M$ be an \logic-model with $w \in W$. We interpret formulae by means of the following clauses:
\begin{itemize}

\item $M,w,\mu \Vdash p(x_{1}, \ldots, x_{n})$ \iffi $(\mu(x_{1}), \ldots, \mu(x_{n})) \in V(p,w)$; 

\item $M,w,\mu \not\Vdash \bot$;

\item $M,w,\mu \Vdash \phi \lor \psi$ \iffi $M,w,\mu \Vdash \phi$ or $M,w,\mu \Vdash \psi$;

\item $M,w,\mu \Vdash \phi \land \psi$ \iffi $M,w,\mu \Vdash \phi$ and $M,w,\mu \Vdash \psi$;

\item $M,w,\mu \Vdash \phi \imp \psi$ \iffi for all $u \in W$, if $w \leq u$ and $M,u,\mu \Vdash \phi$, then $M,u,\mu \Vdash \psi$;

\item $M,w,\mu \Vdash \exists x \phi$ \iffi there exists a $d \in D(w)$ such that $M,w,\mu[d/x] \Vdash \phi$;

\item $M,w,\mu \Vdash \forall x \phi$ \iffi for all $u \in W$ and $d \in D(u)$, if $w \leq u$, then $M, u , \mu[d/x] \Vdash \phi$;

\item $M,w \Vdash \phi$ \iffi $M,w,\mu \Vdash \phi$ for all $M$-assignments $\mu$;

\item $M \Vdash \phi$ \iffi $M,u \Vdash \phi$ for worlds $u \in W$ of $M$.

\end{itemize}
 A formula $\phi \in \langq$ is \emph{\logic-valid} \iffi $M \Vdash \phi$ for all \logic-models $M$. 
\end{definition}

 As in the propositional setting, a generalized form of monotonicity holds and may be proven by induction on the complexity of $\phi$~\citep{GabSheSkv09}.

\begin{proposition} \label{prop:monotonicity-fo} Let $\logic \in \{\nd,\cd,\ndl,\cdl\}$ with $M$ an \logic-model. For any formula $\phi \in \langq$, if $M,w,\mu \Vdash \phi$ and $w \leq v$, then $M,v,\mu \Vdash \phi$.
\end{proposition}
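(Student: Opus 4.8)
The plan is to proceed by induction on the complexity $\vert\phi\vert$ of the formula, exactly as in the propositional case (Proposition~\ref{prop:monotonicity}), adding the two new cases for the quantifiers. Fix $\logic \in \{\nd,\cd,\ndl,\cdl\}$, an \logic-model $M = (W,\leq,D,V)$, worlds $w \leq v$, and an $M$-assignment $\mu$; assume $M,w,\mu \Vdash \phi$, and show $M,v,\mu \Vdash \phi$. The base case $\phi = p(x_1,\ldots,x_n)$ is immediate from the monotonicity condition (M) on $V$: if $(\mu(x_1),\ldots,\mu(x_n)) \in V(p,w)$ and $w \leq v$ then $(\mu(x_1),\ldots,\mu(x_n)) \in V(p,v)$. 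The case $\phi = \bot$ is vacuous since $M,w,\mu \not\Vdash \bot$. The cases $\phi = \psi \lor \chi$ and $\phi = \psi \land \chi$ follow directly from the induction hypothesis applied to the (smaller-complexity) subformulae, and the case $\phi = \psi \imp \chi$ uses transitivity of $\leq$: if the implication holds at $w$ and $w \leq v$, then for any $u$ with $v \leq u$ we also have $w \leq u$, so the defining condition carries over to $v$ (no appeal to the induction hypothesis is even needed here).

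For the new case $\phi = \exists x\, \psi$: suppose $M,w,\mu \Vdash \exists x\, \psi$, so there is $d \in D(w)$ with $M,w,\mu[d/x] \Vdash \psi$. Since $w \leq v$, the nested domain condition (ND) gives $d \in D(v)$; and since $\vert\psi\vert < \vert\phi\vert$, the induction hypothesis applied to $\psi$ with the assignment $\mu[d/x]$ yields $M,v,\mu[d/x] \Vdash \psi$. Hence $d \in D(v)$ witnesses $M,v,\mu \Vdash \exists x\, \psi$. Note that this is the one place where (ND) is used; it holds for all four frame classes (and is trivial under (CD)).

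For the case $\phi = \forall x\, \psi$: suppose $M,w,\mu \Vdash \forall x\, \psi$, i.e. for all $u \in W$ and all $e \in D(u)$, if $w \leq u$ then $M,u,\mu[e/x] \Vdash \psi$. To show $M,v,\mu \Vdash \forall x\, \psi$, take any $u$ with $v \leq u$ and any $e \in D(u)$; by transitivity of $\leq$ we have $w \leq u$, so the assumption gives $M,u,\mu[e/x] \Vdash \psi$ directly. In fact, as with implication, the universal case goes through purely by transitivity without invoking the induction hypothesis. Thus the only genuinely new ingredient relative to the propositional proof is the existential clause, and the only potential subtlety — ensuring the witness $d$ survives in the larger domain $D(v)$ — is precisely what the nested domain condition (ND) is designed to supply; so there is no real obstacle, the argument being a routine structural induction.
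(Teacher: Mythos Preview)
Your proof is correct and follows exactly the approach the paper indicates: the paper's proof is simply the one-line ``By induction on the complexity of $\phi$,'' and you have faithfully supplied the details of that induction, including the correct use of (ND) in the existential case and of transitivity of $\leq$ in the implication and universal cases.
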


\begin{proof}
By induction on the complexity of $\phi$.
\end{proof}

\begin{definition}[Axioms]\label{def:axioms-fo} We extend the axiomatizations in \dfn~\ref{def:axioms} to provide axiomatizations for the four first-order (intermediate) logics we consider.
 \begin{multicols}{2}
\begin{description}

\item[A10] $\forall x \phi \supset \phi(y/x)~[\textit{y free for x}]$

\item[A11] $\phi(y/x) \supset \exists x \phi~[\textit{y free for x}]$

\item[A12] $\forall x (\psi \imp \phi(x)) \imp (\psi \imp \forall x \phi(x))$

\item[A13] $\forall x (\phi(x) \imp \psi) \imp (\exists x \phi(x) \imp \psi)$

\item[A14] $\forall x (\phi(x) {\vee} \psi) \imp \forall x \phi(x) {\vee} \psi~[x \not\in \psi]$

\item[R1] \AxiomC{$\phi$}\RightLabel{gen}\UnaryInfC{$\forall x \phi$}\DisplayProof

\end{description}
\end{multicols}
\noindent
 We provide syntactic definitions of each first-order logic accordingly:
\begin{itemize}

\item We define \emph{first-order intuitionistic logic with non-constant domains} \nd \ to be the smallest set of formulae from $\langq$ closed under substitutions of the axioms A0--A8 and A10--A13, and applications of the inference rules R0 and R1.

\item We define \emph{first-order intuitionistic logic with constant domains} \cd \ to be the smallest set of formulae from $\langq$ closed under substitutions of the axioms A0--A8 and A10--A14, and applications of the inference rules R0 and R1.

\item We define \emph{first-order G\"odel-Dummett logic with non-constant domains} \ndl \ to be the smallest set of formulae from $\langq$ closed under substitutions of the axioms A0--A13 and applications of the inference rules R0 and R1.

\item We define \emph{first-order G\"odel-Dummett logic with constant domains} \cdl \ to be the smallest set of formulae from $\langq$ closed under substitutions of the axioms A0--A14 and applications of the inference rules R0 and R1.

\end{itemize}
 We note that axioms A10 and A11 are subject to the side condition (shown in brackets) that $y$ must be free for $x$ and A14 is subject to the side condition (also shown in brackets) that $x$ does not occur free in $\psi$. We refer to axiom A14 as the \emph{constant domain axiom}. For $\logic \in \{\nd, \cd, \ndl, \cdl\}$, we write $\vdash_{\logic} \phi$ to denote that $\phi$ is an element, or \emph{theorem}, of $\logic$.
\end{definition}

 The soundness and completeness of the above logics is well-known~\citep{GabSheSkv09}.

\begin{theorem}[Soundness and Completeness]\label{thm:sound-complete-logics} Let $\logic \in \{\nd,\cd,\ndl,\cdl\}$. For $\phi \in \langq$, $\vdash_{\logic} \phi$ \iffi $\phi$ is \logic-valid.
\end{theorem}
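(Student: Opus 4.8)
The result is classical and due to \citet{GabSheSkv09}; I outline the argument I would give. For \textbf{soundness}, the plan is to induct on the length of a derivation witnessing $\vdash_{\logic}\phi$. The propositional axioms A0--A8 are valid on every \ipc-frame, hence on every \logic-frame; axioms A10--A13 are validated using \prp~\ref{prop:monotonicity-fo} together with the nested domain condition (ND); axiom A14 is valid exactly on frames satisfying the constant domain condition (CD), which handles the \cd\ and \cdl\ cases; and axiom A9 is valid exactly on frames satisfying connectivity, which handles the \ndl\ and \cdl\ cases. The rule R0 preserves validity directly by the semantic clause for $\imp$, and R1 preserves validity because $M \Vdash \psi$ quantifies over all worlds and all $M$-assignments.

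For \textbf{completeness} I would argue contrapositively via a canonical term-model construction. Assuming $\not\vdash_{\logic}\phi$, I would first expand the language with denumerably many fresh witness variables and prove a Lindenbaum-style saturation lemma: any set $\Gamma$ of formulae with $\Gamma\not\vdash_{\logic}\psi$ that omits infinitely many of the fresh witnesses extends to a \emph{prime} \logic-theory $\Gamma^{*}$ with $\Gamma^{*}\not\vdash_{\logic}\psi$, where \emph{prime} means deductively closed under $\vdash_{\logic}$, consistent, and closed under both the disjunction property ($\chi_{1}\lor\chi_{2}\in\Gamma^{*}$ implies $\chi_{1}\in\Gamma^{*}$ or $\chi_{2}\in\Gamma^{*}$) and the witness property ($\exists x\,\chi\in\Gamma^{*}$ implies $\chi(y/x)\in\Gamma^{*}$ for some witness $y$). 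Fixing a prime \logic-theory $\Gamma_{0}$ with $\phi\notin\Gamma_{0}$, I would define the model $M$ whose worlds are the prime \logic-theories extending $\Gamma_{0}$, with $\leq$ given by inclusion, $D(\Gamma)$ the set of witnesses occurring in $\Gamma$, and $V(p,\Gamma)=\{\vec{y}:p(\vec{y})\in\Gamma\}$; the conditions (ND) and (M) are then immediate. The core is a Truth Lemma, proved by induction on $|\psi|$, stating that $M,\Gamma\Vdash\psi$ if and only if $\psi\in\Gamma$ under the assignment fixing every witness; the $\imp$ and $\forall$ cases invoke the saturation lemma to separate the relevant formulae in a prime extension, and the $\exists$ case uses the witness property. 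Since $\phi\notin\Gamma_{0}$, the model $M$ refutes $\phi$, so $\phi$ is not \logic-valid.

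Two points require extra care. For \ndl\ and \cdl\ I must check that the canonical frame is connected: given prime theories $\Gamma_{0}\subseteq\Delta_{1}$ and $\Gamma_{0}\subseteq\Delta_{2}$ that are $\leq$-incomparable, choose $\chi_{1}\in\Delta_{1}\setminus\Delta_{2}$ and $\chi_{2}\in\Delta_{2}\setminus\Delta_{1}$; the linearity axiom A9 puts $(\chi_{1}\imp\chi_{2})\lor(\chi_{2}\imp\chi_{1})$ in $\Gamma_{0}$, so by the disjunction property one disjunct lies in $\Gamma_{0}$ and hence in both $\Delta_{1}$ and $\Delta_{2}$, and either choice, combined with deductive closure (modus ponens) inside the appropriate $\Delta_{i}$, contradicts the choice of $\chi_{1},\chi_{2}$; thus the canonical frame is linear, a fortiori connected. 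For \cd\ and \cdl\ I must ensure all worlds share a domain; here the constant domain axiom A14 should be deployed to strengthen the saturation lemma so that, whenever $\Gamma\subseteq\Delta$ are prime and $\forall x\,\psi\notin\Gamma$, one can find a prime $\Delta'\supseteq\Delta$ together with a witness $y$ \emph{already present in $D(\Gamma)$} with $\psi(y/x)\notin\Delta'$, keeping $D$ constant along $\leq$.

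The step I expect to be the main obstacle is the completeness direction, and within it the careful bookkeeping of fresh witnesses in the saturation lemma — and especially, in the constant-domain cases, reconciling the need to supply new witnesses for existential and universal formulae with the requirement that every world carry the same domain, which is exactly what A14 must be made to deliver. By comparison, the G\"odel-Dummett adaptation (the connectivity argument above) is mild once the disjunction property is in place.
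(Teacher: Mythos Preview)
The paper does not give its own proof of this theorem: it simply records the result as well-known and cites \citet{GabSheSkv09}. So there is nothing in the paper to compare your argument against beyond the reference.

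That said, your outline is the standard Henkin-style canonical model construction one finds in that reference, and the soundness direction and the connectivity argument for the \ndl/\cdl\ cases are correct as written. Your own diagnosis of the weak point is accurate: the constant-domain cases are where the sketch becomes hand-wavy. The formulation you give---strengthening saturation so that a falsifying witness for $\forall x\,\psi\notin\Gamma$ can be found \emph{inside} $D(\Gamma)$---is not quite how the argument is usually run, and as stated it is not obviously achievable. The more common route is to fix $D(\Gamma)$ to be the full set of terms of the witness-expanded language at every world, and then the delicate step is not the $\forall$ clause but ensuring that the prime theories you build are simultaneously saturated for existentials \emph{and} have the disjunction property; this is exactly where A14 earns its keep, since it lets one push a $\forall$ past a $\lor$ during the stage-by-stage saturation so that adding a witness for one disjunct does not spoil primeness. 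If you want to firm up the sketch, that is the place to invest the effort; the rest is routine.
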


\section{Nested Sequent Systems}
\label{sec:nested-calculi}

 We present the nested sequent systems for propositional intuitionistic and G\"odel-Dummett logic first, and then show how these nested calculi can be extended to cover the first-order cases. 

\subsection{Systems for Intuitionistic and G\"odel-Dummett Logic}\label{subsec:prop-nested-systems}

 Motivated by the notation and terminology of~\citet{Str13}, we define $\phi^{\inp}$ to be an \emph{input formula} and $\phi^{\outp}$ to be an \emph{output formula}, for $\phi \in \langp$. We refer to a formula $\phi^{\io}$ with $\io \in \{\inp,\outp\}$ as a \emph{polarized formula} more generally. We call a finite (potentially empty) multiset $\nsb$ of polarized formulae a \emph{flat sequent} and we sometimes write $\nsb^{\io}$ to denote a flat sequent whose polarized formulae are of polarity $\io \in \{\inp,\outp\}$. Last, we let $\lab = \{w_{i} \ \vert \ i \in \mathbb{N} \setminus \{0\}\}$ be a denumerable set of \emph{labels}, and we recursively define a \emph{nested sequent} $\ns$ as follows:
\begin{itemize}

\item Each flat sequent is a nested sequent, and

\item Any object of the form $\nsb, [\nsc_{1}]_{w_{1}}, \ldots, [\nsc_{n}]_{w_{n}}$, where $\nsb$ is a flat sequent and $\nsc_{i}$ is a nested sequent for $1 \leq i \leq n$, is a nested sequent.

\end{itemize}
 We will often use $w$, $u$, $v$, $\ldots$ (occasionally annotated) to denote labels and we make the simplifying assumption that every occurrence of a label in a nested sequent is unique. Note that the incorporation of labels in our nested systems is useful as it simplifies the presentation of our reachability rules below. We use upper-case Greek letters $\nsa$, $\nsb$, $\nsc$, $\ldots$ (occasionally annotated) to denote nested sequents.

 A nice feature of nested sequents is that such objects normally permit a \emph{formula interpretation}~\citep{Bru09,Bul92,Kas94,Pog09}, i.e. each nested sequent may be read as an equivalent formula in the language of the logic. We may utilize this property to lift the semantics from the language $\langp$ to our nested sequents, which proves useful in establishing soundness (\thm~\ref{thm:soundness} below).

\begin{definition}[Formula Interpretation]\label{def:formula-interpretation} Let $\nsb := \nsc^{\inp},\nsd^{\outp}$ with $\nsc^{\inp}$ and $\nsd^{\outp}$ multisets of input and output formulae, respectively. The \emph{formula interpretation} $\fint{\nsa}$ of a nested sequent $\nsa$ of the form $\nsb, [\nsc_{1}]_{w_{1}}, \ldots, [\nsc_{n}]_{w_{n}}$ is recursively defined as follows:
\begin{itemize}

\item $\fint{\nsb} := \displaystyle{\bigwedge \nsc \imp \bigvee \nsd}$

\item $\fint{\nsb, [\nsc_{1}]_{w_{1}}, \ldots, [\nsc_{n}]_{w_{n}}} := \displaystyle{\bigwedge \nsb \imp \bigvee \nsc \lor (\fint{\nsc_{1}} \lor \cdots \lor \fint{\nsc_{n}})}$

\end{itemize}
 We use $\bigwedge$ and $\bigvee$ to denote a conjunction and disjunction of all formulae in a multiset, respectively. As is conventional, we define $\bigwedge \emptyset = \top$ and $\bigvee \emptyset = \bot$. We define a nested sequent $\nsa$ to be \ipc-valid or \gd-valid \iffi $\fint{\nsa}$ is \ipc-valid or \gd-valid, respectively.
\end{definition}

 As witnessed in the definition above, input formulae serve the same purpose as the antecedent of a (traditional) sequent~\citep{Gen35a,Gen35b} and output formulae serve the same purpose as a consequent, that is, the flat sequent $\phi_{1}^{\inp}, \ldots, \phi_{n}^{\inp}, \psi_{1}^{\outp}, \ldots, \psi_{k}^{\outp}$ is simply a mutliset representation of the (traditional) sequent $\phi_{1}, \ldots, \phi_{n} \vdash \psi_{1}, \ldots, \psi_{k}$; we make use of polarized formulae however as it simplifies our presentation and is consistent with notation employed in the literature~\citep{Str13,Lyo21b}. Nested sequents are multisets encoding trees whose nodes are multisets of polarized formulae (i.e. flat sequents) as recognized in the subsequent definition (cf.~\citet{Bru09,Bul92,Kas94,Pog09}).
 
  

 
\begin{definition}[Tree of a Nested Sequent]\label{def:tree-of-nest-seq} Let $\nsa = \nsb, \nest{\nsc_{1}}{w_{1}}, \ldots, \nest{\nsc_{n}}{w_{n}}$ be a nested sequent. We define the \emph{tree of $\nsa$}, denoted $tr(\nsa) = (V,E)$, recursively on the structure of $\nsa$ as follows:
$$
V = \{(w_{0},\nsb)\} \cup \bigcup_{1 \leq i \leq n} V_{i}
\qquad
E = \{(w_{0},w_{i}) \ | \ 1 \leq i \leq n\} \cup \bigcup_{1 \leq i \leq n} E_{i}
$$
 where $tr(\nsc_{i}) = (V_{i},E_{i})$ for $1 \leq i \leq n$.
\end{definition}

  Given a nested sequent $\nsa = \nsb, \nest{\nsc_{1}}{w_{1}}, \ldots, \nest{\nsc_{n}}{w_{n}}$, we can graphically depict the tree $tr(\nsa)$ of the nested sequent as shown below.
\begin{center}
\begin{tabular}{c c c}
\xymatrix@C=1em{
 & & \overset{w_{0}}{\boxed{\nsb}}\ar@{->}[dll]\ar@{->}[drr] &  &   		\\
 tr_{w_{1}}(\nsc_{1}) & & \hdots  & & tr_{w_{n}}(\nsc_{n})
}
\end{tabular}
\end{center}
We refer to a flat sequent $\nsc_{i}$ (i.e. a node in the tree above) as a \emph{$w_{i}$-component}, or as a \emph{component} more generally if we do not wish to specify its label. We note that the \emph{root} $\nsb$ is always assumed to be associated with the label $w_{0}$, e.g. $\nsb$ is the $w_{0}$-component in the tree above. Moreover, we use the notation $\nsa\{\nsb_{1}\}_{w_{1}}\cdots\{\nsb_{n}\}_{w_{n}}$ to denote a nested sequent $\nsa$ such that in $tr(\nsa)$ the data $\nsb_{1}, \ldots, \nsb_{n}$ is rooted at $w_{1}, \ldots, w_{n}$, respectively. For example, if $\nsa = p^{\inp},[q^{\inp},[\emptyset]_{u}]_{w},[p \imp q^{\inp}, r \lor \bot^{\outp}]_{v}$, then $\nsa\{p^{\inp}\}_{w_{0}}$, $\nsa\{q^{\inp},[\emptyset]_{u}\}_{w}\{p \imp q^{\inp}\}_{v}$, and $\nsa\{q^{\inp}\}_{w}\{p \imp q^{\inp}\}_{v}$ are all correct representations of $\nsa$ in our notation. In other words, the notation $\nsa\{\nsb_{1}\}_{w_{1}}\cdots\{\nsb_{n}\}_{w_{n}}$ lets us specify data rooted at $w_{i}$-components of a nested sequent. 
  We also define a \emph{reachability relation $\rable{}{}$} and \emph{strict reachability relation $\rtable{}{}$} on nested sequents by means of the trees they encode:

\begin{definition}[$\rable{}{}$,$\rtable{}{}$]\label{def:reachability-relation} Let $\nsa$ be a nested sequent with $tr(\ns) = (V,E)$. For two labels $w$ and $u$ occurring in $\nsa$, we say that $u$ is \emph{reachable} from $w$ (written $\rable{w}{u}$) \iffi $w = u$ or there exists a path $(w,v_{1}), \ldots, (v_{n},u) \in E$ from $w$ to $u$. We define $\rtable{w}{u}$ \iffi $\rable{w}{u}$ and $w \neq u$.
\end{definition}

 The nested calculi $\nipc$ and $\ngd$ for \ipc \ and \gd \ are displayed in \fig~\ref{fig:nested-calculi} and consist of the \emph{initial rules} $\id$ and $\botl$. With the exception of $\lin$, which we refer to as a \emph{structural rule} (as it only affects the structure of nested sequents), all other rules are \emph{logical rules}. The two nested calculi are defined as collections of these rules:
 
\begin{definition}[$\nipc$, $\ngd$] We define $\nipc$ to be the set consisting of the $\id$, $\botl$, $\disl$, $\disr$, $\conl$, $\conr$, $\impl$, and $\impr$ rules from \fig~\ref{fig:nested-calculi}. We define $\ngd$ to be the set $\nipc \cup \{\lin\}$.
\end{definition}

 In~\citet[\cptr~5]{Lyo21thesis}, nested sequent calculi (referred to as $\mathsf{DIntQ}$ and $\mathsf{DIntQC}$) were provided for first-order intuitionistic logics with non-constant and constant domains. The calculus $\nipc$ serves as the propositional fragment of these systems, and as exhibited in~\citet[\sect~5]{Lyo21thesis}, possesses favorable properties (which will also be discussed in \sect~\ref{sec:properties}). A unique feature of $\nipc$ (and its extension $\ngd$) is the incorporation of the reachability rule $\id$ and the propagation rule $\impl$.\footnote{These rules are referred to as $(id_{*})$ and $(Pr_{\imp})$, respectively, in~\citet[\fig~5.8]{Lyo21thesis}.} Both rules are applicable only if $\rable{w}{u}$ holds, i.e. $\id$ checks if $u$ is reachable from $w$ in a nested sequent with $p^{\inp}$ occurring in the $w$-component and $p^{\outp}$ occurring in the $u$-component, while $\impl$ propagates $\phi^{\outp}$ and $\psi^{\inp}$ along reachable paths when applied bottom-up. Such rules endow our systems with a degree of modularity as changing this side condition yields a nested calculus for another logic. For instance, if we stipulate that $\impl$ is applicable only if $\rtable{w}{u}$ holds (rather than $\rable{w}{u}$), then $\nipc$ becomes a calculus for a sub-intuitionistic logic (cf.~\citet{Res94}).

\begin{figure}[t]

\begin{center}
\begin{tabular}{c c}
\AxiomC{}
\RightLabel{$\id^{\dag}$}
\UnaryInfC{$\ns \hol p^{\inp} \hor_{w} \hol p^{\outp} \hor_{u}$}
\DisplayProof

&

\AxiomC{}
\RightLabel{$\botl$}
\UnaryInfC{$\nsa \hol  \bot^{\inp} \hor_{w} $}
\DisplayProof
\end{tabular}
\end{center}


\begin{center}
\begin{tabular}{c c c}
\AxiomC{$\nsa \hol \phi^{\inp} \hor_{w} $}
\AxiomC{$\nsa \hol \psi^{\inp} \hor_{w} $}
\RightLabel{$\disl$}
\BinaryInfC{$\nsa \hol \phi \lor \psi^{\inp} \hor_{w} $}
\DisplayProof

&

\AxiomC{$\nsa \hol  \phi^{\outp}, \psi^{\outp} \hor_{w}  $}
\RightLabel{$\disr$}
\UnaryInfC{$\nsa \hol  \phi \lor \psi^{\outp} \hor_{w} $}
\DisplayProof

&

\AxiomC{$\nsa \hol \phi^{\inp}, \psi^{\inp} \hor_{w}  $}
\RightLabel{$\conl$}
\UnaryInfC{$\nsa \hol \phi \land \psi^{\inp} \hor_{w} $}
\DisplayProof
\end{tabular}
\end{center}


\begin{center}
\begin{tabular}{c c}
\AxiomC{$\nsa \hol \phi^{\outp} \hor_{w} $}
\AxiomC{$\nsa \hol \psi^{\outp} \hor_{w} $}
\RightLabel{$\conr$}
\BinaryInfC{$\nsa \hol \phi \land \psi^{\outp} \hor_{w} $}
\DisplayProof

&

\AxiomC{$\nsa \hol \Gamma, \nest{\phi^{\inp}, \psi^{\outp}}{u} \hor_{w} $}
\RightLabel{$\impr$} 
\UnaryInfC{$\nsa \hol \Gamma, \phi \imp \psi^{\outp} \hor_{w} $}
\DisplayProof
\end{tabular}
\end{center}


\begin{center}
\begin{tabular}{c}
\AxiomC{$\nsa \hol \phi \imp \psi^{\inp} \hor_{w} \hol \ny, \phi^{\outp} \hor _{u}$}
\AxiomC{$\nsa \hol \phi \imp \psi^{\inp} \hor_{w} \hol \ny, \psi^{\inp} \hor _{u}$}
\RightLabel{$\impl^{\dag}$}
\BinaryInfC{$\nsa \hol \phi \imp \psi^{\inp} \hor_{w} \hol \ny \hor _{u}$}
\DisplayProof
\end{tabular}
\end{center}


\begin{center}
\AxiomC{$\nsa\{[\nsb,[\nsc]_{u}]_{v}\}_{w}$}
\AxiomC{$\nsa\{[\nsc,[\nsb]_{v}]_{u}\}_{w}$}
\RightLabel{$\lin$}
\BinaryInfC{$\nsa\{[\nsb]_{v},[\nsc]_{u}\}_{w}$}
\DisplayProof
\end{center}

\noindent
\textbf{Side conditions:}\\
The side condition $\dag$ stipulates that the rule is applicable only if $\rable{w}{u}$.

\caption{Inference rules for $\nipc$ and $\ngd$.} 
\label{fig:nested-calculi}
\end{figure}

 Proofs (or, derivations) are constructed in $\nipc$ and $\ngd$ in the traditional manner by successively applying inference rules starting from the initial rules. The \emph{height} of a derivation is defined in the usual fashion as the number of nested sequents occurring in a longest branch of a proof from the conclusion to an initial rule. We define the \emph{active} formulae of an inference to be those formulae that the rule operates on, and we define the \emph{active} components to be those components that the rule operates within. We define the \emph{principal} formula of a logical rule to be the (polarized) logical formula displayed in the conclusion. For example, the $\phi^{\outp}$, $\psi^{\outp}$, and $\phi \land \psi^{\outp}$ formulae are active in $\conr$ with $\phi \land \psi^{\outp}$ principal, and the $w$-, $v$-, and $u$-components are active in $\lin$. We consider a formula $\phi$ derivable in $\nipc$ or $\ngd$ \iffi $\phi^{\outp}$ is derivable in $\nipc$ or $\ngd$, respectively.

 A unique feature of $\ngd$ is the incorporation of the $\lin$ rule. This rule operates by (bottom-up) \emph{linearizing} branching structure in a nested sequent and corresponds to the fact that models for \gd \ are connected and linear (as discussed in \sect~\ref{sec:log-prelims-I}). More precisely, if we have a branching structure in our nested sequent as shown below left, then this structure can be linearized in two possible ways as shown below middle (corresponding to the left premise of $\lin$) and as shown below right (corresponding to the right premise of $\lin$). 
\begin{center}
\begin{tabular}{c @{\hskip 3em} c @{\hskip 3em} c}
\xymatrix@C=1em{
 &  w\ar@{->}[dl]\ar@{->}[dr]   & \\
 v &     & u 
}

&

\xymatrix@C=1em{
 &  w\ar@{->}[dl]  & \\
 v\ar@{->}[rr] &    & u 
}

&

\xymatrix@C=1em{
 &  w\ar@{->}[dr]  &\\
 v  &   & u\ar@{->}[ll] 
}
\end{tabular}
\end{center}
 This new rule differs from other rules given in the literature for G\"odel-Dummett logics; e.g. the communication rule in hypersequents~\citep{Avr91}, the connected rule in labeled sequents~\citep{DycNeg12}, and the $\rightarrow_{R}^{2}$ rule in the context of linear nested sequents~\citep{KuzLel18}.\footnote{We remark that in certain contexts axioms and their corresponding frame properties may be straightforwardly encoded in logical rules; e.g.~\citet{Bru09,Pog09,GorPosTiu11,Lyo21b}. Nevertheless, it is unclear what (if any) logical rule captures the linearity axiom A9 in our setting.} Also, we note that preliminary algorithmic approaches to transforming axioms into inference rules have been put forth in the context of nested calculi for propositional intermediate logics~\citep{CiaTesStr22}. Nevertheless, such approaches---as currently stated---fail to generate the $\lin$ rule. As shown below, the linearity axiom $(\phi \imp \psi) \lor (\psi \imp \phi)$ can be derived by means of this rule, by using \lem~\ref{lem:general-id}, which is proven below.
 
\begin{center}
\AxiomC{}
\RightLabel{\lem~\ref{lem:general-id}}
\UnaryInfC{$[\phi^{\inp}, \psi^{\outp},[\psi^{\inp}, \phi^{\outp}]_{u}]_{v}$}

\AxiomC{}
\RightLabel{\lem~\ref{lem:general-id}}
\UnaryInfC{$[\psi^{\inp}, \phi^{\outp},[\phi^{\inp}, \psi^{\outp}]_{v}]_{u}$}

\RightLabel{$\lin$}
\BinaryInfC{$[\phi^{\inp}, \psi^{\outp}]_{u},[\psi^{\inp}, \phi^{\outp}]_{v}$}
\RightLabel{$\impr \times 2$}
\UnaryInfC{$(\phi \imp \psi)^{\outp},(\psi \imp \phi)^{\outp}$}
\RightLabel{$\disr$}
\UnaryInfC{$(\phi \imp \psi) \lor (\psi \imp \phi)^{\outp}$}
\DisplayProof
\end{center}

\begin{theorem}[Soundness]\label{thm:soundness} Let $\ns$ be a nested sequent. (1) If $\ns$ is derivable in $\nipc$, then $\ns$ is \ipc-valid; (2) If $\ns$ is derivable in $\ngd$, then $\ns$ is \gd-valid.
\end{theorem}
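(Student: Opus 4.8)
The plan is to prove both statements simultaneously by induction on the height of the derivation in $\nipc$ (resp.\ $\ngd$), showing the contrapositive-flavored statement that every rule preserves validity of the formula interpretation downward (i.e.\ if all premises are $\logic$-valid, so is the conclusion), together with the base cases that the conclusions of the initial rules $\id$ and $\botl$ are valid. Since $\ngd = \nipc \cup \{\lin\}$ and every $\gd$-model is in particular an $\ipc$-model, it suffices to handle (i) the initial rules, (ii) each logical rule of $\nipc$ with respect to arbitrary $\ipc$-models, and (iii) the rule $\lin$ with respect to arbitrary $\gd$-models; part (2) then follows by combining (i), (ii) specialized to $\gd$-models, and (iii).

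The key technical device will be a semantic reading of nested sequents that is slightly finer than the formula interpretation of Definition~\ref{def:formula-interpretation}: given an $\ipc$- or $\gd$-model $M$, I would say $M$ \emph{falsifies} a nested sequent $\nsa$ if there is an assignment of worlds to the labels of $\nsa$ — a root world $w_0$ for the root component, and for each edge $(w,u)$ in $tr(\nsa)$ a world related by $\leq$ to the world assigned to $w$ — such that at each $w_i$-component every input formula is forced and no output formula is forced at the assigned world. A routine induction on the structure of $\nsa$ (unwinding Definition~\ref{def:formula-interpretation} and using the semantic clause for $\imp$ together with Proposition~\ref{prop:monotonicity}) shows that $\nsa$ is $\logic$-valid iff no $\logic$-model falsifies it. Working with this falsification notion makes the rule-by-rule verification almost mechanical: each logical rule, read top-down, has the property that a falsifying configuration for the conclusion yields a falsifying configuration for (at least) one premise. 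For $\id$: the conclusion requires $p$ forced at the world for $w$ and not forced at the world for $u$ with $\rable{w}{u}$, which is impossible by Proposition~\ref{prop:monotonicity} since the world for $u$ is $\leq$-above the world for $w$; $\botl$ is immediate since $\bot$ is never forced. For $\disl,\disr,\conl,\conr$ the argument is local to one component and identical to the classical/intuitionistic sequent calculus case. For $\impr$, a falsification of the conclusion forcing $\phi\imp\psi$ to fail at the world $x$ assigned to $w$ produces, by the semantic clause for $\imp$, some $x' \geq x$ forcing $\phi$ and not $\psi$; assigning $x'$ to the fresh label $u$ gives a falsification of the premise. For $\impl$, given a falsification of the conclusion with $\phi\imp\psi$ input at $w$ and the world $x_u$ assigned to $u$ (with $x_u \geq x_w$ by $\rable{w}{u}$): applying the $\imp$-clause at $x_w$ to the pair $x_w \leq x_u$, either $\phi$ fails at $x_u$, falsifying the left premise, or $\psi$ holds at $x_u$, falsifying the right premise.

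The step I expect to be the main obstacle is the soundness of $\lin$, since it is the one genuinely new rule and the one that uses connectivity. Here I would argue: suppose a $\gd$-model $M$ falsifies the conclusion $\nsa\{[\nsb]_v,[\nsc]_u\}_w$ via a labeling that sends $w$ to $x$, $v$ to some $x_v \geq x$, and $u$ to some $x_u \geq x$ (and handles the rest of $\nsa$ as before). By the connectivity condition on the $\gd$-frame, either $x_v \leq x_u$ or $x_u \leq x_v$. In the first case, the \emph{same} world assignments witness a falsification of the left premise $\nsa\{[\nsb,[\nsc]_u]_v\}_w$: the new edge $(v,u)$ in that nested sequent is satisfied because $x_v \leq x_u$, and every component's local condition is unchanged. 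In the second case, symmetrically, $x_u \leq x_v$ witnesses a falsification of the right premise. Hence if both premises are $\gd$-valid, so is the conclusion. A small bookkeeping point to get right is that relabeling is harmless because every label in a nested sequent is assumed unique, so the label sets of premise and conclusion differ only in the placement of the already-present labels $v,u$ within the tree, and all reachability side conditions elsewhere in $\nsa$ are unaffected (indeed $\lin$ only adds edges, so $\rable{}{}$ can only grow, which is why the $\id$ and $\impl$ side conditions remain satisfiable in the premises whenever they were in the conclusion — though for soundness we only need the direction that falsifiability of the conclusion implies falsifiability of a premise). With these pieces the induction closes and both (1) and (2) follow.
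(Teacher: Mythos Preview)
Your proposal is correct and follows essentially the same approach as the paper: induction on the height of the derivation, with the contrapositive argument for each rule (if the conclusion is falsifiable in a model, then so is some premise), and the $\lin$ case handled via connectivity exactly as the paper does. The only presentational difference is that you make explicit a ``falsification via a world-labeling'' semantics for nested sequents, whereas the paper in the propositional case works directly with the formula interpretation (Definition~\ref{def:formula-interpretation}) and only introduces the equivalent labeling semantics later for the first-order systems (Definition~\ref{def:sequent-semantics}); the arguments are interchangeable, and indeed the paper's propositional proof tacitly extracts such a labeling when it speaks of worlds $o_1,o_2$ ``corresponding to'' the $w$- and $u$-components.
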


\begin{proof} Both claims are shown by induction on the height of the given derivation. The first claim follows by \thm~29 and \thm~34 in~\citet[\cptr~5]{Lyo21thesis}, which establishes the soundness of each rule beside $\lin$ in $\ngd$. 
 We therefore focus on the second claim, showing the soundness of $\id$ in the base case and the soundness of $\lin$ in the inductive step.
 
\textit{Base case.} Suppose for a contradiction that an instance $\nsa\{p^{\inp}\}_{w}\{p^{\outp}\}_{u}$ of $\id$ is not \gd-valid. Then, there exists a \gd-model $M = (W,\leq,V)$ and a world $o$ such that $M,o \not\Vdash \fint{\nsa\{p^{\inp}\}_{w}\{p^{\outp}\}_{u}}$, implying the existence of a world $o_{1}$ (corresponding to the $w$-component of $\nsa$) and a world $o_{2}$ (corresponding to the $u$-component of $\nsa$) such that $M, o_{1} \Vdash p$ and $M, o_{2} \not\Vdash p$. By the side condition imposed on $\id$, we know that $\rable{w}{u}$, implying that $o_{1} \leq o_{2}$. Hence, by the monotonicity condition (M) (see \dfn~\ref{def:frame-model}), it folows that $M, o_{2} \Vdash p$, which is a contradiction. 

\textit{Inductive step.} To show the soundness of $\lin$, we assume that the conclusion $\nsa\{[\nsb]_{v},[\nsc]_{u}\}_{w}$ of the rule is not \gd-valid and argue that at least one premise is not \gd-valid. We let $\nsb = \nsb_{1}^{\inp},\nsb_{2}^{\outp},\nsb_{3}$ and $\nsc = \nsc_{1}^{\inp},\nsc_{2}^{\outp},\nsc_{3}$ be the partitions of $\nsb$ and $\nsc$ such that $\nsb_{1}^{\inp}$ and $\nsc_{1}^{\inp}$ contain all input formulae, $\nsb_{2}^{\outp}$ and $\nsc_{2}^{\outp}$ contain all output formulae, and $\nsb_{3}$ and $\nsc_{3}$ contain all nestings at the $v$- and $u$-components, respectively. By our assumption, there exists a \gd-model $M = (W,\leq,V)$ and a world $o$ such that $M,o \not\Vdash \fint{\nsa\{[\nsb]_{v},[\nsc]_{u}\}_{w}}$. In particular, by \dfn~\ref{def:formula-interpretation}, we know that there exist worlds $o_{1}$ and $o_{2}$ such that $o \leq o_{1}$ and $o \leq o_{2}$ in $M$ with $M, o_{1} \Vdash \bigwedge \nsb_{1}$, $M, o_{2} \Vdash \bigwedge \nsc_{2}$, $M, o_{1} \not\Vdash \bigvee \nsb_{2}$, $M, o_{2} \not\Vdash \bigvee \nsc_{2}$. By the connectivity condition imposed on \gd-frames (\dfn~\ref{def:frame-model}), we know that either $o_{1} \leq o_{2}$ or $o_{2} \leq o_{1}$. The first case falsifies the left premise of $\lin$ and the second case falsifies the right premise of $\lin$.
\end{proof}

\begin{lemma}\label{lem:general-id}
Let $\ncalc \in \{\nipc,\ngd\}$. For any formula $\phi \in \langp$, the nested sequent $\nsa\{\phi^{\inp}\}_{w}\{\phi^{\outp}\}_{u}$ is derivable in $\ncalc$ with $\rable{w}{u}$.
\end{lemma}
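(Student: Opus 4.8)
The plan is to prove the statement by induction on the complexity $|\phi|$ of the formula, generalizing the "atomic" initial rule $\id$ to arbitrary formulae. The crucial point is that the induction hypothesis must be available at \emph{every} pair of components $w'$, $u'$ with $\rable{w'}{u'}$, not just at a fixed pair, since decomposing a subformula (e.g.\ an implication) may force us to look at a component $u'$ strictly reachable from $u$. So I would phrase the claim as: for every $\phi$, every nested sequent context $\nsa$, and all labels $w$, $u$ occurring in it with $\rable{w}{u}$, the sequent $\nsa\{\phi^{\inp}\}_{w}\{\phi^{\outp}\}_{u}$ is derivable in $\ncalc$.

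First I would handle the base case $|\phi| = 0$. If $\phi = p$ is atomic, the sequent is precisely an instance of the $\id$ rule, whose side condition $\rable{w}{u}$ is exactly our hypothesis. If $\phi = \bot$, then $\bot^{\inp}$ occurs in the $w$-component, so the sequent is an instance of $\botl$. Next, for the inductive step I would go through each connective. For $\phi = \psi_1 \land \psi_2$: apply $\conl$ (to split $\psi_1 \land \psi_2{}^{\inp}$ at $w$) and $\conr$ (to split $\psi_1 \land \psi_2{}^{\outp}$ at $u$) reading bottom-up; this produces premises containing $\psi_i{}^{\inp}$ at $w$ and $\psi_i{}^{\outp}$ at $u$, each of which is derivable by the induction hypothesis (for $i=1$, say, use the branch that keeps $\psi_1^{\inp}$ at $w$ and $\psi_1^\outp$ at $u$). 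The case $\phi = \psi_1 \lor \psi_2$ is dual, using $\disl$ and $\disr$. For $\phi = \psi_1 \imp \psi_2$: apply $\impr$ bottom-up at the $u$-component, which creates a fresh child component $u'$ of $u$ containing $\psi_1{}^{\inp}, \psi_2{}^{\outp}$; since $\rable{u}{u'}$ and $\rable{w}{u}$, transitivity of reachability gives $\rable{w}{u'}$. Now apply $\impl$ (which is licensed because $\rable{w}{u'}$ and $\psi_1 \imp \psi_2{}^{\inp}$ is at $w$): the two premises contain $\psi_1{}^{\outp}$ and $\psi_2{}^{\inp}$ respectively at $u'$. In the first premise we have $\psi_1{}^{\outp}$ at $u'$ and $\psi_1{}^{\inp}$ at $w$ (inherited from the principal formula's decomposition—more precisely, $\impl$ keeps $\psi_1 \imp \psi_2^{\inp}$ at $w$, and we instead want $\psi_1^\inp$ at $w$; so one should first apply $\impl$ and then note the first premise has $\psi_1^\outp$ at $u'$, and we can use the IH on $\psi_1$ with the pair $(u', \text{a component where }\psi_1^\inp\text{ sits})$)—this bookkeeping is the fiddly part and I address it below. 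In the second premise we have $\psi_2{}^{\inp}$ at $u'$ and $\psi_2{}^{\outp}$ at $u'$ (the latter from the $\impr$ step), so $\rable{u'}{u'}$ trivially holds and the IH on $\psi_2$ closes it.

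The main obstacle is precisely the implication case, and specifically making sure the input half of each subformula lands in a component that reaches the output half. The clean way is: after $\impr$ produces $u'$ with $\psi_1^\inp,\psi_2^\outp$, apply $\impl$ at the pair $(w,u')$, which is licensed by $\rable{w}{u'}$. The left premise of $\impl$ is $\nsa'\{\psi_1\imp\psi_2^\inp\}_w\{\psi_1^\inp,\psi_2^\outp,\psi_1^\outp\}_{u'}$: here $\psi_1^\inp$ and $\psi_1^\outp$ both sit in the $u'$-component, so $\rable{u'}{u'}$ and the IH on $\psi_1$ applies directly. The right premise is $\nsa'\{\psi_1\imp\psi_2^\inp\}_w\{\psi_1^\inp,\psi_2^\outp,\psi_2^\inp\}_{u'}$: here $\psi_2^\inp$ and $\psi_2^\outp$ both sit in $u'$, so again the IH on $\psi_2$ closes it. Thus both premises are handled \emph{without} needing $\psi_1^\inp$ or $\psi_2^\inp$ back at $w$ — the $\impl$ rule conveniently copies the needed input formula down to $u'$. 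This sidesteps the bookkeeping worry entirely. Finally, since $\nipc \subseteq \ngd$, a derivation in $\nipc$ is also a derivation in $\ngd$, so the single induction covers both systems simultaneously, and I would state it that way from the outset.
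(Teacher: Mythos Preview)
Your proposal is correct and follows essentially the same approach as the paper: induction on the complexity of $\phi$, with the implication case handled by first applying $\impr$ at $u$ to create a fresh child component (your $u'$, the paper's $v$) containing $\psi_1^{\inp},\psi_2^{\outp}$, and then applying $\impl$ along the path from $w$ to that new component so that both premises have the matched input/output pair of a subformula sitting in the \emph{same} component. Your second paragraph's ``clean way'' is exactly the paper's argument; the initial bookkeeping worry you raised (about needing $\psi_1^{\inp}$ back at $w$) is indeed a non-issue, since the $\impr$ step already places $\psi_1^{\inp}$ at the child component.
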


\begin{proof} We prove the result by induction on the complexity of $\phi$. The base case is trivial as $\ns\{p^{\inp}\}_{w}\{p^{\outp}\}_{u}$ is an instance of $\id$ and $\ns\{\bot^{\inp}\}_{w}\{\bot^{\outp}\}_{u}$ is an instance of $\botl$. We therefore argue the inductive step.

If we suppose that $\phi$ is of the form $\psi \lor \chi$, then the desired nested sequent is derivable as shown below. 
\begin{center}
\AxiomC{$\ns\{\psi^{\inp}\}_{w},\{ \psi^{\outp}, \chi^{\outp}\}_{u}$}
\AxiomC{$\ns\{\chi^{\inp}\}_{w},\{\psi^{\outp}, \chi^{\outp}\}_{u}$}
\RightLabel{$\disl$}
\BinaryInfC{$\ns\{\psi \lor \chi^{\inp}\}_{w},\{ \psi^{\outp}, \chi^{\outp}\}_{u}$}
\RightLabel{$\disr$}
\UnaryInfC{$\ns\{\psi \lor \chi^{\inp}\}_{w},\{ \psi \lor \chi^{\outp}\}_{u}$}
\DisplayProof
\end{center}
 Since the case where $\phi$ is of the form $\psi \land \chi$ is similar to the proof above, we omit it. If $\phi$ is of the form $\psi \imp \chi$, then the desired nested sequent is derivable as shown below:
\begin{center}
\AxiomC{$\ns\{\psi \imp \chi^{\inp}\}_{w}\{ \nest{\emptyset;\psi^{\inp}, \chi^{\inp}, \chi^{\outp}}{v}\}_{u}$}
\AxiomC{$\ns\{\psi \imp \chi^{\inp}\}_{w}\{ \nest{\emptyset;\psi^{\inp}, \psi^{\outp}, \chi^{\outp}}{v}\}_{u}$}
\RightLabel{$\impl$}
\BinaryInfC{$\ns\{\psi \imp \chi^{\inp}\}_{w}\{ \nest{\emptyset; \psi^{\inp}, \chi^{\outp}}{v}\}_{u}$}
\RightLabel{$\impr$}
\UnaryInfC{$\ns\{\psi \imp \chi^{\inp}\}_{w}\{ \psi \imp \chi^{\outp}\}_{u}$}
\DisplayProof
\end{center}
\end{proof}

\subsection{Systems for the First-order Logics}

 In the first-order case, we make use of polarized formulae over the language $\langq$ in our nested sequents, for which the same terminology and notation used in the previous section applies. Let $\nsb$ be a (potentially empty) multiset of these polarized formulae, $\lab$ be a set of labels as before, and let $\va \subset \vars$ be a multiset of variables, referred to as a \emph{signature}; we recursively define a \emph{nested sequent} $\ns$ in the first-order setting as follows:
\begin{itemize}

\item Each object of the form $\va; \nsb$ (referred to more specifically as a \emph{flat sequent}) is a nested sequent, and

\item Any object of the form $\va; \nsb, [\nsc_{1}]_{w_{1}}, \ldots, [\nsc_{n}]_{w_{n}}$, where $\nsc_{i}$ is a nested sequent for $1 \leq i \leq n$, is a nested sequent.

\end{itemize}
 We use the same notation as in the previous section to denote labels and nested sequents, and use $\va, \vb, \vc, \ldots$ to denote signatures occurring in nested sequents. The incorporation of variables into the syntax of our nested sequents assists us in capturing all considered first-order logics within a single formalism, as explained below.  This feature is reminiscent of the hypersequent calculus $\hqlc$ for first-order G\"odel-Dummett logic with non-constant domains~\citep{Tiu11}, which likewise incorporates terms into the syntax of sequents. 
  
 As in the propositional setting, nested sequents encode trees. We define the tree $tr(\nsa)$ of a nested sequent $\nsa$ in the first-order setting as in \dfn~\ref{def:tree-of-nest-seq}, albeit with one difference: in the first-order setting signatures must also be taken into account, and thus, nested sequents are taken to be multisets denoting trees with nodes that are pairs of the form $(w,\va; \nsb)$ such that $w$ is a label, $\va$ is a signature, and $\nsb$ is a multiset of polarized formulae. 
 For instance, the nested sequent $\nsa = \va; \nsb, \nest{\nsc_{1}}{w_{1}}, \ldots, \nest{\nsc_{n}}{w_{n}}$ corresponds to the tree $tr(\nsa)$ shown below.
\begin{center}
\begin{tabular}{c c c}
\xymatrix@C=1em{
 & & \overset{w_{0}}{\boxed{\va; \nsb}}\ar@{->}[dll]\ar@{->}[drr] &  &   		\\
 tr_{w_{1}}(\nsc_{1}) & & \hdots  & & tr_{w_{n}}(\nsc_{n})
}
\end{tabular}
\end{center}
 Similar to the propositional case, we use the notation $\nsa\{\va_{1}; \nsb_{1}\}_{w_{1}}\cdots\{\va_{n}; \nsb_{n}\}_{w_{n}}$ to denote a nested sequent $\nsa$ such that the data $\va_{1}; \nsb_{1}, \ldots, \va_{n}; \nsb_{n}$ is rooted at $w_{1}, \ldots, w_{n}$, respectively, in $tr(\nsa)$. To simplify notation in certain cases, we sometimes disregard the presentation of signatures; for example, if $\nsa = x; p^{\inp},[x,y; q^{\inp},[\emptyset; p \imp q^{\outp}]_{u}]_{w}$, then $\nsa\{x; p^{\inp}\}_{w_{0}}$, $\nsa\{p^{\inp}\}_{w_{0}}$, $\nsa\{q^{\inp},[\emptyset; p \imp q^{\outp}]_{u}\}_{w}$, and $\nsa\{x,y; q^{\inp}\}_{w}\{p \imp q^{\outp}\}_{v}$ are all correct representations of $\nsa$ in our notation. Likewise, we define the relations $\rable{}{}$ and $\rtable{}{}$ over first-order nested sequents in the same manner as in \dfn~\ref{def:reachability-relation}. While in the propositional setting the reachability relation $\rable{}{}$ is sufficient to formalize our propagation and reachability rules, in the first-order setting we must also make use of the notion of \emph{availability} (cf.~\citet{Fit14,Lyo21a,Lyo21thesis}), a tool utilized to formalize the quantifier rules in our first-order nested calculi.
 
\begin{definition}[Available]\label{def:available} We say that a variable $x$ is \emph{available} for a label $w$ in a nested sequent $\ns$ \iffi there exists a label $u$ in $\ns$ such that (1) $\rable{u}{w}$ and (2) $x$ occurs in the signature $\va$ in the $u$-component of $\ns$.
\end{definition}

 To provide further intuition concerning the notion of availability, we give an example, and let $\ns = y; p(y)^{\inp}, \lb x; \exists y \forall x p(x,y)^{\outp}, \lb z; r(y) \imp p(z)^{\outp} \rb_{w_{2}} \rb_{w_{1}}$. Then, we have that $y$ is available for $w_{0}$, $w_{1}$, and $w_{2}$ (recall that the root, which is $y; p(y)$ in this example, is always associated with the fixed label $w_{0}$), while $x$ is only available for $w_{1}$ and $w_{2}$, and $z$ is only available for $w_{2}$.

 
 The nested calculi $\nnd$, $\ncd$, $\nndl$, and $\ncdl$ for first-order intuitionistic and G\"odel-Dummett logics are defined in \dfn~\ref{def:nns-ncd} below, and extend the nested calculi $\nipc$ and $\ngd$ by employing first-order nested sequents in rules while including the first-order rules shown in \fig~\ref{fig:nested-calculi-fo}.\footnote{We note that $\nnd$ and $\ncd$ differ from the nested sequent systems given in~\citet{Fit14} and~\citet{Lyo21thesis} for first-order intuitionistic logics. Specifically, $\nnd$ and $\ncd$ employ signatures in nested sequents with reachability rules that rely on such. The use of signatures is helpful in extracting counter-models from failed proof-search, which is crucial for the proof of cut-free completeness (see \thm~\ref{thm:cut-free-comp} below).} The rules in \fig~\ref{fig:nested-calculi-fo} are subject to a variety of side conditions which rely on the reachability relation $\rable{}{}$, the notion of availability, and the notion of \emph{freshness}. In particular, we say that a variable $x$ (or, label $w$) is \emph{fresh} in a rule \iffi it does not occur in the conclusion of the rule. When we refer to a variable or label as \emph{fresh} we mean that it is fresh in the context of the rule where it appears. 
 
 
\begin{figure}[t]

\begin{center}
\begin{tabular}{c c}
\AxiomC{}
\RightLabel{$\idfo^{\dag_{1}(\X)}$}
\UnaryInfC{$\nsa \hol p(\vec{x})^{\inp} \hor_{w} \hol p(\vec{x})^{\outp} \hor_{u}$}
\DisplayProof

&

\AxiomC{$\nsa \hol \va, \vec{x}; p(\vec{x})^{\inp} \hor_{w}$}
\RightLabel{$\doms$}
\UnaryInfC{$\nsa \hol \va; p(\vec{x})^{\inp} \hor_{w}$}
\DisplayProof
\end{tabular}
\end{center}


\begin{center}
\begin{tabular}{c c}
\AxiomC{$\nsa \hol  \phi(y/x)^{\outp}, \exists x \phi^{\outp}  \hor _{w}$}
\RightLabel{$\existsri^{\dag_{3}(\X)}$}
\UnaryInfC{$\nsa \hol \exists x \phi^{\outp} \hor _{w}$}
\DisplayProof

&

\AxiomC{$\nsa \hol \va, y; \nsb \hor_{w} \hol \phi(y/x)^{\outp}, \exists x \phi^{\outp}  \hor _{u}$}
\RightLabel{$\existsrii^{\dag_{2}(\X)}$}
\UnaryInfC{$\nsa \hol \va; \nsb \hor_{w} \hol \exists x \phi^{\outp}  \hor _{u}$}
\DisplayProof
\end{tabular}
\end{center}


\begin{center}
\begin{tabular}{c c}
\AxiomC{$\nsa \hol  \va, y;  \phi(y/x)^{\inp}  \hor_{w} $}
\RightLabel{$\existsl^{\dag_{4}(\X)}$}
\UnaryInfC{$\nsa \hol  \va; \exists x \phi^{\inp}\hor_{w} $}
\DisplayProof

&

\AxiomC{$\nsa \hol \nsb, \nest{y ; \phi(y/x)^{\outp}}{u} \hor_{w} $}
\RightLabel{$\allr^{\dag_{4}(\X)}$}
\UnaryInfC{$\nsa \hol \nsb, \forall x \phi^{\outp} \hor_{w} $}
\DisplayProof
\end{tabular}
\end{center}


\begin{center}
\begin{tabular}{c c}
\AxiomC{$\nsa \hol \forall x \phi^{\inp} \hor_{w} \hol  \nsb, \phi(y/x)^{\inp} \hor_{u}$}
\RightLabel{$\allli^{\dag_{5}(\X)}$}
\UnaryInfC{$\nsa \hol \forall x \phi^{\inp} \hor_{w} \hol  \nsb \hor_{u}$}
\DisplayProof

&

\AxiomC{$\nsa \hol \va,y; \nsb \hor_{w} \hol \forall x \phi^{\inp} \hor_{u} \hol \nsc, \phi(y/x)^{\inp} \hor_{v}$}
\RightLabel{$\alllii^{\dag_{6}(\X)}$}
\UnaryInfC{$\nsa \hol \va; \nsb \hor_{w} \hol \forall x \phi^{\inp} \hor_{u} \hol \nsc \hor_{v}$}
\DisplayProof
\end{tabular}
\end{center}

\noindent
\textbf{\nd \ side conditions}:
\begin{center}
\begin{minipage}{.425\textwidth}
\begin{itemize}

\item[] $\dag_{1}(\nd) :=$ $\rable{w}{u}$

\item[] $\dag_{2}(\nd) := $ $\rable{w}{u}$ \& $y$ is fresh

\item[] $\dag_{3}(\nd) := $ $y$ is available for $w$ 

\end{itemize}
\end{minipage}
\begin{minipage}{.55\textwidth}
\begin{itemize}

\item[] $\dag_{4}(\nd) := $ $y$ is fresh 

\item[] $\dag_{5}(\nd) := $ $\rable{w}{u}$ \& $y$ is available for $u$ 

\item[] $\dag_{6}(\nd) := $ $\rable{w}{u}\rable{}{v}$ \& $y$ is fresh

\end{itemize}
\end{minipage}
\end{center}

\noindent
\textbf{\cd \ side conditions}:
\begin{center}
\begin{minipage}{.425\textwidth}
\begin{itemize}

\item[] $\dag_{1}(\cd) :=$ $\rable{w}{u}$

\item[] $\dag_{2}(\cd) := $ $y$ is fresh

\item[] $\dag_{3}(\cd) := $ $y \in \vars$

\end{itemize}
\end{minipage}
\begin{minipage}{.55\textwidth}
\begin{itemize}

\item[] $\dag_{4}(\cd) := $ $y$ is fresh

\item[] $\dag_{5}(\cd) := $ $\rable{w}{u}$ \& $y \in \vars$

\item[] $\dag_{6}(\cd) := $ $\rable{u}{v}$ \& $y$ is fresh

\end{itemize}
\end{minipage}
\end{center}

\caption{First-order inference rules. When $\X$ is $\nd$, the rule is subject to the associated ND side condition, and when $\X$ is $\cd$, the rule is subject to the associated CD side condition. The first-order calculi that employ such rules are specified in \dfn~\ref{def:nns-ncd}.}
\label{fig:nested-calculi-fo}
\end{figure}

\begin{definition}[$\nnd$, $\ncd$, $\nndl$, $\ncdl$]\label{def:nns-ncd} We define both $\nnd$ and $\ncd$ to be the set consisting of the rules $\idfo$, $\botl$, $\doms$, $\disl$, $\disr$, $\conl$, $\conr$, $\impl$, $\impr$, $\existsl$, $\existsri$, $\existsrii$, $\allli$, $\alllii$, and $\allr$, but where the first-order rules in $\nnd$ are subject to the \emph{\nd \ side conditions}, and where the first-order rules in $\ncd$ are subject to the \emph{\cd \ side conditions} (see \fig~\ref{fig:nested-calculi-fo} for a description of the \nd \ and \cd \ side conditions). We define $\nndl = \nnd \cup \{\lin\}$ and $\ncdl = \ncd \cup \{\lin\}$.
\end{definition}

\begin{remark}\label{rmk:fo-over-prop}
 The $\id$ rule is a special case of the $\idfo$ rule where the principal formulae are propositional atoms. Thus, $\nnd$ and $\ncd$ can be seen as properly extending $\nipc$, and $\nndl$ and $\ncdl$ can be see as properly extending $\ngd$. We may therefore view $\nipc$ as a fragment of both $\nnd$ and $\ncd$, and $\ngd$ as a fragment of $\nndl$ and $\ncdl$, where the signature of each nested sequent is set to $\emptyset$.
\end{remark}

 The \emph{initial rules} of our first-order nested calculi are $\idfo$ and $\botl$, while $\doms$ and $\lin$ are the only \emph{structural rules} (with the latter rule occurring only in $\nndl$ and $\ncdl$), and all remaining rules are \emph{logical rules} as they bottom-up decompose complex logical formulae. The \emph{domain shift rule} $\doms$ encodes the semantic property that $V(p,w) \subseteq D(w)^{n}$ in any model, while the $\existsrii$ and $\alllii$ rules encode the fact that domains are non-empty as both rules bottom-up introduce fresh variables to signatures. Such rules are required for completeness in the first-order setting as shown in \app~\ref{app:completeness}. The notion of a \emph{proof} (or, \emph{derivation}), its \emph{height}, and \emph{active} and \emph{principal formulae/components} are defined as in the previous section. We consider a formula $\phi$ derivable in a first-order nested calculus \iffi $\va; \phi^{\outp}$ is derivable, where $\va$ is the set of free variables occurring in $\phi$.
 
 A distinctive feature of our first-order nested calculi is the inclusion of \emph{(first-order) reachability rules}. This class of rules (introduced in~\citet{Lyo21thesis}) serves as a generalization of the well-known class of \emph{propagation rules} (cf.~\citet{CasCerGasHer97,Fit72,GorPosTiu11}). Whereas propagation rules may propagate formulae throughout (the tree of) a nested sequent when applied bottom-up, reachability rules can additionally check to see if data exists along certain paths within (the tree of) a nested sequent. The rules $\idfo$, $\impl$, $\existsri$, $\existsrii$, $\allli$, and $\alllii$ serve as our reachability rules, though it should be noted that $\impl$ qualifies as a \emph{proper} propagation rule as it omits any search for data. To demonstrate the functionality of such rules, we provide a proof of the constant domain axiom (A14 in \dfn~\ref{def:axioms-fo}) in $\ncd$.

 We let $\nsa_{1}$ and $\nsa_{2}$ be the following two nested sequents, both of which are derivable by \lem~\ref{lem:general-id-fo} (proven below) since $\phi(y)^{\inp}$ and $\phi(y)^{\outp}$ occur in $\nsa_{1}$ with $\rable{w}{u}$, and $\psi^{\inp}$ and $\psi^{\outp}$ occur in $\nsa_{2}$.
\begin{itemize}
\item $\nsa_{1} = \emptyset; [\emptyset; \forall x (\phi(x) \lor \psi)^{\inp}, \phi(y)^{\inp}, \psi^{\outp}, [y; \phi(y)^{\outp}]_{u}]_{w}$

\item $\nsa_{2} = \emptyset; [\emptyset; \forall x (\phi(x) \lor \psi)^{\inp}, \psi^{\inp}, \psi^{\outp}, [y; \phi(y)^{\outp}]_{u}]_{w}$
\end{itemize}
 The derivability of the above nested sequents implies the derivability of the constant domain axiom, as shown below.\\

\begin{center}
\AxiomC{$\nsa_{1}$}

\AxiomC{$\nsa_{2}$}

\RightLabel{$\disl$}
\BinaryInfC{$\emptyset; [\emptyset; \forall x (\phi(x) \lor \psi)^{\inp}, \phi(y) \lor \psi^{\inp}, \psi^{\outp}, [y; \phi(y)^{\outp}]_{u}]_{w}$}
\RightLabel{$\allli$}
\UnaryInfC{$\emptyset; [\emptyset; \forall x (\phi(x) \lor \psi)^{\inp}, \psi^{\outp}, [y; \phi(y)^{\outp}]_{u}]_{w}$}
\RightLabel{$\allr$}
\UnaryInfC{$\emptyset; [\emptyset; \forall x (\phi(x) \lor \psi)^{\inp}, \forall x \phi(x)^{\outp}, \psi^{\outp}]_{w}$}
\RightLabel{$\disr$}
\UnaryInfC{$\emptyset; [\emptyset; \forall x (\phi(x) \lor \psi)^{\inp}, \forall x \phi(x) \lor \psi^{\outp}]_{w}$}
\RightLabel{$\impr$}
\UnaryInfC{$\emptyset; \forall x (\phi(x) \lor \psi) \imp \forall x \phi(x) \lor \psi^{\outp}$}
\DisplayProof
\end{center}


 We observe that the constant domain axiom is \emph{not} derivable in $\nnd$ as the side condition $\dag_{5}(\nd)$ imposed on $\allli$ is not satisfied above, i.e. $y$ is not available for $w$ in the conclusion of $\allli$. However, $\allli$ is applicable in the above proof with the calculus $\ncd$ as $\dag_{5}(\cd)$ is satisfied. 
 
 As mentioned previously, alternative structural proof systems have been introduced for first-order G\"odel-Dummett logics, most notably, the hypersequent calculus $\hif$ introduced by \citet{BaaZac00} for $\cdl$, and the hypersequent calculus $\hqlc$ introduced by \citet{Tiu11} for $\ndl$.\footnote{The logic $\cdl$ is referred to as \emph{intuitionistic fuzzy logic} ($\mathsf{IF}$) in \citet{BaaZac00}, and $\ndl$ is referred to as \emph{G\"odel-Dummett logic} ($\mathsf{QLC}$) in \citet{Tiu11}.} There are a few notable differences between the nested calculi presented here and the aforementioned hypersequent systems. First, both $\hif$ and $\hqlc$ employ hypersequents of the shape $\Phi_{1} \vdash \Psi_{1} \ | \ \cdots \ | \ \Phi_{n} \vdash \Psi_{n}$ such that each $\Phi_{i}$ is a finite multiset of formulae and/or terms and each $\Psi_{i}$ is either empty or contains a single formula, i.e. the sequents $\Phi_{i} \vdash \Psi_{i}$ occurring within the hypersequents of $\hif$ and $\hqlc$ are \emph{single-conclusioned}. This is distinct from the nested sequents employed in our nested calculi, which are \emph{multi-conclusioned}, that is, they allow for multiple output formulae to occur within a component. (NB. As discussed at the beginning of \sect~\ref{subsec:prop-nested-systems}, output formulae are analogous to the consequent of a traditional, Gentzen-style sequent.) This provides our nested calculi with a distinct advantage over their hypersequent counterparts as the multi-conclusioned nature of sequents necessitates the invertibility of all logical rules (as detailed in \sect~\ref{sec:properties} below), having the effect that counter-models may be readily extracted from failed proof-search (as detailed in \app~\ref{app:completeness}). The invertibility of all logical rules is a property which fails to hold for both $\hif$ and $\hqlc$. 
 Furthermore, despite that fact that both $\hif$ and $\hqlc$ are hypersequent systems, $\hqlc$ employs a much richer syntax than $\hif$ and the intuitionistic nature of such systems is obtained from the use of single-conclusioned sequents rather than explicitly encoding intuitionistic model-theoretic properties (e.g. monotonicity of atomic formulae; see \dfn~\ref{def:frame-model}) into the functionality of rules. Consequently, our nested systems for G\"odel-Dummett logics enjoy a higher degree of uniformity and modularity as we may pass from a nested system for one logic to another by simply modifying the side conditions imposed on reachability rules. In fact, it is the use of reachability rules that permits us to capture the nested domain condition (ND) for \nd \ and \ndl, and the constant domain condition (CD) for \cd \ and \cdl \ within a single formalism (these conditions are given in \dfn~\ref{def:frame-model-fo}). 
 We could even modify the side conditions of our reachability rules to obtain logics beyond \nd, \cd, \ndl, and \cdl \ (see \citet{Lyo21thesis} for a discussion). 
 
 We note that our notion of availability imposed on quantifier rules is strongly connected to Fitting's notion thereof~\citep{Fit14}. Fitting introduced nested calculi for first-order intuitionistic logics employing variants of our existential and universal quantifier rules, which differ from our formulation in at least two significant ways: (1) Fitting omits the use of signatures in nested sequents and (2) a term $y$ is defined to be \emph{available} for a $w$-component in a nested sequent \iffi there exists a $u$-component containing a polarized formula $\phi(y)^{\io}$ such that $\rable{u}{w}$.\footnote{To ease our presentation, we have modified Fitting's formulation of available terms to fit within our nomenclature. See \citet[\dfn~9.1]{Fit14} for the original formulation.} Thus, a term is available in Fitting's framework when it occurs within a \emph{formula} as opposed to a \emph{signature}. If we were to formulate our availability condition and associated quantifier rules in such a manner, this would impede our proof of cut-free completeness (see \thm~\ref{thm:cut-free-comp} below) as we use signatures to extract counter-models from failed proof-search (see \app~\ref{app:completeness}) and it is not clear how this extraction is to take place otherwise. 
 

 We now establish the soundness of our calculi by interpreting nested sequents over \logic-models, for $\logic \in \{\nd,\cd,\ndl,\cdl\}$. We could provide a formula interpretation of our first-order nested sequents by employing the  \emph{existence predicate} as in \citet{Tiu11}, however, this would require us to consider and define extensions of our first-order logics with such a predicate. We therefore opt to interpret nested sequents directly over models.

\begin{definition}[Nested Semantics]\label{def:sequent-semantics} Let $\logic \in \{\nd,\cd,\ndl,\cdl\}$, $M = (W,\leq,D,V)$ be an \logic-model, and $\ns$ be a nested sequent. We define an \emph{$M$-interpretation} to be a function $\iota$ mapping every label $w \in \lab$ to a world $\iota(w) \in W$. 
 Let $\nsb = \nsc^{\inp},\nsd^{\outp}$ be a multiset of polarized formulae and $\mu$ be an $M$-assignment. We recursively define the satisfaction of a nested sequent $\ns$ with $\iota$ and $\mu$, 
 written $M,\iota,\mu,w_{0} \models \ns$, accordingly:
\begin{itemize}

\item if $\nsa = \va; \emptyset$, then $M,\iota,\mu,u \not\models \ns$;


\item if $\nsa = \va; \nsb$, then $M,\iota,\mu,u \models \nsa$ \iffi 
 if $\mu(x) \in D(\iota(u))$ for all $x \in \va$ and $M, \iota(u), \mu \Vdash \phi$ for every $\phi^{\inp} \in \nsc^{\inp}$, then $M, \iota(u), \mu \Vdash \psi$ for some $\psi^{\outp} \in \nsd^{\outp}$;

\item if $\nsa = \va; \nsb, [\nsc_{1}]_{u_{1}}, \ldots, [\nsc_{n}]_{u_{n}}$, then $M,\iota,\mu,u \models \ns$ \iffi 
 if for every $1 \leq i \leq n$, $\iota(u) \leq \iota(u_{i})$, then either $M, \iota,\mu,u \models \va; \nsb$ or $M, \iota,\mu,u_{i} \models \nsc_{i}$.
\end{itemize}
 We write $M,\iota,\mu,w_{0} \not\models \ns$ when a nested sequent $\ns$ is \emph{not satisfied} on $M$ with $\iota$ and $\mu$. We say that a nested sequent $\ns$ is \logic-valid \iffi for every \logic-model $M$, every $M$-interpretation $\iota$, and every $M$-assignment $\mu$, we have $M,\iota,\mu,w_{0} \models \ns$, and we say that $\ns$ is \logic-invalid otherwise.
\end{definition}

\begin{theorem}[Soundness]\label{thm:soundness} Let $\logic \in \{\nd,\cd,\ndl,\cdl\}$ and $\ns$ be a nested sequent. If $\ns$ is derivable in $\mathsf{N}_{\logic}$, then $\ns$ is \logic-valid.
\end{theorem}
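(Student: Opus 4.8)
For the base case, I would verify that the conclusions of the initial rules $\idfo$ and $\botl$ are $\logic$-valid. For $\botl$, any component containing $\bot^{\inp}$ is vacuously satisfied since $M,\iota(u),\mu \not\Vdash \bot$, so the antecedent of the satisfaction clause fails. For $\idfo$, I would argue contrapositively: if some $M$-interpretation $\iota$ and $M$-assignment $\mu$ falsify the conclusion, then there are worlds $\iota(w)$ and $\iota(u)$ with $M,\iota(w),\mu \Vdash p(\vec x)$ and $M,\iota(u),\mu \not\Vdash p(\vec x)$; but the side condition $\dag_{1}(\X)$ guarantees $\rtable{w}{u}$ or $w=u$, hence $\iota(w) \leq \iota(u)$, and by Proposition~\ref{prop:monotonicity-fo} (or just the monotonicity condition (M) on atoms) we get $M,\iota(u),\mu \Vdash p(\vec x)$, a contradiction.

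**For the inductive step, I would show that each rule preserves $\logic$-validity from premises to conclusion, again arguing contrapositively for each rule: assuming the conclusion is falsified, I construct a falsification of some premise (possibly modifying $\iota$, $\mu$, or both).** The propositional rules $\disl$, $\disr$, $\conl$, $\conr$, $\impr$, $\impl$, together with $\lin$, are handled essentially as in the proof of Theorem~\ref{thm:soundness} for the propositional systems; in particular $\lin$ uses the connectivity condition on $\gd$-frames exactly as before (only needed for $\nndl$ and $\ncdl$). The genuinely new cases are the first-order rules $\doms$, $\existsl$, $\existsri$, $\existsrii$, $\allr$, $\allli$, $\alllii$, which must be checked separately for the ND and CD side conditions. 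For $\doms$, falsification of the conclusion gives $\mu(x) \in D(\iota(w))$ for $x \in \va$ and $M,\iota(w),\mu \Vdash p(\vec x)$ with the output formulae failing; since $M,\iota(w),\mu \Vdash p(\vec x)$ forces $(\mu(x_1),\dots,\mu(x_n)) \in V(p,\iota(w)) \subseteq D(\iota(w))^n$, we also get $\mu(x_i) \in D(\iota(w))$ for each $x_i$ in $\vec x$, so the same $\iota,\mu$ falsify the premise. For the quantifier rules introducing a fresh variable $y$ (the rules $\existsl$, $\allr$, $\existsrii$, $\alllii$ under the appropriate side conditions), I would redefine $\mu$ on $y$ to witness the relevant existential or universal statement: freshness ensures this redefinition does not affect the truth of any other formula in the sequent. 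For $\existsri$ and $\allli$, which reuse an available variable $y$, the availability condition combined with monotonicity (Proposition~\ref{prop:monotonicity-fo}) and, in the CD case, the constant domain condition (CD), ensures $\mu(y) \in D(\iota(w))$ at the component where the quantifier is instantiated, so that the witness in the premise is legitimate.

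**The main obstacle will be the bookkeeping around availability, freshness, and the domain membership clause $\mu(x) \in D(\iota(u))$ in the satisfaction definition**—in particular, showing for the reachability rules $\existsri$, $\allli$, $\alllii$ that the variable $y$ they instantiate with actually lies in the domain of the world interpreting the component where it is used. This is where the ND versus CD distinction bites: under ND, availability of $y$ for $w$ means $y$ sits in a signature at some $u$ with $\rable{u}{w}$, so $\mu(y) \in D(\iota(u)) \subseteq D(\iota(w))$ by the nested domain condition; under CD, one simply invokes $D(\iota(u)) = D(\iota(w))$, which also explains why the side conditions $\dag_{3}(\cd)$, $\dag_{5}(\cd)$, $\dag_{6}(\cd)$ can afford to be weaker (merely $y \in \vars$ or $\rtable{u}{v}$) than their ND counterparts. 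Once this domain-membership invariant is established uniformly, the remaining verifications are routine case analysis on the semantic clauses of Definition~\ref{def:semantic-clauses-fo} and Definition~\ref{def:sequent-semantics}.
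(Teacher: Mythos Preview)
Your proposal is correct and follows essentially the same approach as the paper: induction on the height of the derivation, with the base case for $\idfo$ argued contrapositively via Proposition~\ref{prop:monotonicity-fo}, and the inductive step handled rule-by-rule by showing that a falsifying $(M,\iota,\mu)$ for the conclusion yields one for some premise (adjusting $\iota$ and $\mu$ on fresh labels and variables as needed). The paper works out the $\impl$, $\existsri$, $\alllii$, $\allr$, and $\lin$ cases explicitly for $\nndl$ and declares the rest similar; your sketch in fact supplies a bit more detail on $\doms$ and on the ND/CD distinction for the availability-based rules, but the underlying argument is the same.
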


\begin{proof} We argue the claim by induction on the height of the given derivation for $\ndl$. The remaining claims are similar.

\textit{Base case.} It is straightforward to show that any instance of $\botl$ is \ndl-valid. Let us therefore argue that any instance of $\id$ is \ndl-valid. We consider an arbitrary instance of $\id$, where $\rable{w}{u}$ holds by the side condition.
\begin{center}
\AxiomC{}
\RightLabel{$\id$}
\UnaryInfC{$\ns \hol p(\vec{x})^{\inp} \hor_{w} \hol p(\vec{x})^{\outp} \hor_{u}$}
\DisplayProof
\end{center}
 Suppose that $\ns \hol p(\vec{x})^{\inp} \hor_{w} \hol p(\vec{x})^{\outp} \hor_{u}$ is \ndl-invalid. Then, there exists an \ndl-model $M = (W, \leq, D, V)$, $M$-interpretation $\iota$, and $M$-assignment $\mu$ such that $\iota(w) \leq \iota(v_{1}), \ldots, \iota(v_{n}) \leq \iota(u)$ with $v_{1}, \ldots, v_{n}$ the labels along the path from $w$ to $u$ in $\ns$. By \dfn~\ref{def:sequent-semantics}, we know that $M,\iota(w),\mu \Vdash p(\vec{x})$ and $M, \iota(u), \mu \not\Vdash p(\vec{x})$. However, this produces a contradiction  by \prp~\ref{prop:monotonicity-fo} as $M, \iota(u), \mu \Vdash p(\vec{x})$ must hold.

\textit{Inductive step.} We prove the inductive step by contraposition, showing that if the conclusion of the rule is \ndl-invalid, then at least one premise is \ndl-invalid. We consider the $\impl$, $\existsri$, $\alllii$, $\allr$, and $\lin$ cases as the remaining cases are simple or similar.

$\impl$. Assume that $\ns \{\phi \imp \psi^{\inp}\}_{w}\{\Gamma\}_{u}$ is \ndl-invalid and that the side condition $\rable{w}{u}$ holds. By our assumption, there exists an \ndl-model $M$, $M$-interpretation $\iota$, and $M$-assignment $\mu$ such that (1) $\iota(w) \leq \iota(v_{1}), \ldots, \iota(v_{n}) \leq \iota(u)$ with $v_{1}, \ldots, v_{n}$ the labels along the path from $w$ to $u$ and (2) $M, \iota(w), \mu \Vdash \phi \imp \psi$. Since $M, \iota(w), \mu \Vdash \phi \imp \psi$, we know that either $M, \iota(u), \mu \not\Vdash \phi$ or $M,\iota(u), \mu \Vdash \psi$. In the first case, the left premise of $\impl$ is invalid, and in the second case, the right premise of $\impl$ is invalid.

$\existsri$. Suppose that $\ns \{\exists x \phi^{\outp}\}_{w}$ is \ndl-invalid. Then, there exists an \ndl-model $M$, $M$-interpretation $\iota$, and $M$-assignment $\mu$ such that $M, \iota(w), \mu \not\Vdash \exists x \phi$. By the side condition on $\existsri$, $y$ is available for $w$, meaning there exists a path $\iota(u) \leq \iota(v_{1}), \ldots, \iota(v_{n}) \leq \iota(w)$ in $M$ such that $\mu(y) \in D(\iota(u))$ by \dfn~\ref{def:sequent-semantics}. By the (ND) condition, we know that $\mu(y) \in D(\iota(w))$, showing that $M, \iota(w), \mu \not\Vdash \phi(y/x)$, which proves the premise of $\existsri$ \ndl-invalid.

$\alllii$. Suppose that $\nsa \hol \va; \nsb \hor_{w} \hol \forall x \phi^{\inp} \hor_{u} \hol \nsc \hor_{v}$ is \ndl-invalid. Then, there exists an \ndl-model $M$, $M$-interpretation $\iota$, and $M$-assignment $\mu$ such that $M, \iota(u), \mu \Vdash \forall x \phi$. By the side condition on $\alllii$, we know there exists a path $\iota(w) \leq \iota(u_{1}), \ldots, \iota(u_{n}) \leq \iota(u)$ and a path $\iota(u) \leq \iota(v_{1}), \ldots, \iota(v_{k}) \leq \iota(v)$ in $M$. Moreover, by the fact that domains in \ndl-models are non-empty, we know there exists an element $d \in D(\iota(w))$. Hence, by the (ND) condition $d \in D(\iota(v))$. Therefore, $M, \iota(v), \mu[d/y] \Vdash \phi(y/x)$, showing that the premise of $\alllii$ is \ndl-invalid as well.

$\allr$. Let us assume that $\nsa \hol \nsb, \forall x \phi^{\outp} \hor_{w} $ is \ndl-invalid. Then, there exists a model $M$, $M$-interpretation $\iota$, and $M$-assignment $\mu$ such that $M, \iota(w), \mu \not\Vdash \forall x \phi$. Thus, there exists a world $u \in W$ such that $\iota(w) \leq u$, $d \in D(u)$, and $M, u, \mu[d/y] \not\Vdash \phi(y/x)$. Let $\iota'(v) = \iota(v)$ if $v \neq u$ and $\iota'(u) = u$ otherwise, and $\mu'(z) = \mu(z)$ if $z \neq y$ and $\mu'(y) = d$ otherwise. Then, the premise of $\allr$ is not satisfied on $M$ with $\iota'$ and $\mu'$, showing it \ndl-invalid.

$\lin$. Let us assume that $\nsa\{[\nsb]_{v},[\nsc]_{u}\}_{w}$ is \ndl-invalid. Then, there exists an \ndl-model $M$, $M$-interpretation $\iota$, and $M$-assignment $\mu$ such that $M,\iota,\mu,v \not\models \nsb$, $M,\iota,\mu,u \not\models \nsc$, $\iota(w) \leq \iota(v)$, and $\iota(w) \leq \iota(u)$. By the connectivity property (see \dfn~\ref{def:frame-model}), we know that either $\iota(v) \leq \iota(u)$ or $\iota(u) \leq \iota(v)$. The first case proves the left premise of $\lin$ \ndl-invalid, and the second case proves the right premise of $\lin$ \ndl-invalid.
\end{proof}

\begin{lemma}\label{lem:general-id-fo}
Let $\ncalc \in \{\nnd,\ncd,\nndl,\ncdl\}$. For any formula $\phi \in \langq$, the nested sequent $\nsa\{\phi^{\inp}\}_{w}\{\phi^{\outp}\}_{u}$ is derivable in $\ncalc$ with $\rable{w}{u}$.
\end{lemma}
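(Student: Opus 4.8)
The plan is to prove the statement by induction on the complexity $\vert\phi\vert$, following the strategy of \lem~\ref{lem:general-id} for the propositional connectives and adding two new cases for the quantifiers. I would take the induction hypothesis in the strong form: for every formula $\chi$ with $\vert\chi\vert < \vert\phi\vert$, every nested sequent $\nsa'$, and all labels $w',u'$ of $\nsa'$ with $\rable{w'}{u'}$, the nested sequent $\nsa'\{\chi^{\inp}\}_{w'}\{\chi^{\outp}\}_{u'}$ is derivable in $\ncalc$. The elementary fact that makes the quantifier cases go through is that substitution preserves complexity, so $\vert\psi(y/x)\vert = \vert\psi\vert < \vert Q x \psi\vert$ for $Q \in \{\forall,\exists\}$, which lets me invoke the induction hypothesis on the formula obtained after instantiating the principal quantifier.

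For the base case, $\nsa\{p(\vec{x})^{\inp}\}_{w}\{p(\vec{x})^{\outp}\}_{u}$ is an instance of $\idfo$ (its side condition $\dag_{1}(\X)$ is exactly $\rable{w}{u}$ for both $\X = \nd$ and $\X = \cd$, which holds by assumption), and $\nsa\{\bot^{\inp}\}_{w}\{\bot^{\outp}\}_{u}$ is an instance of $\botl$. For the inductive step, the $\lor$, $\land$, and $\imp$ cases are carried out exactly as in the proof of \lem~\ref{lem:general-id}: one decomposes the output occurrence at $u$ (via $\disr$, $\conr$, or $\impr$) and then the input occurrence at $w$ (via $\disl$, $\conl$, or $\impl$), leaving premises with matching input/output subformulae at a common component that close by the induction hypothesis. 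The only first-order adjustments are that newly created child components receive the empty signature, and that each appeal to $\impl$ requires the side condition $\rable{w}{v}$, where $v$ is the child of $u$ introduced bottom-up by $\impr$; this follows from $\rable{w}{u}$ together with $\rtable{u}{v}$.

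The new content, which I expect to be the main obstacle, is the pair of quantifier cases: here the applications of the reachability rules must be sequenced so that all freshness and availability side conditions are met simultaneously for the ND systems $\nnd,\nndl$ and the CD systems $\ncd,\ncdl$. For $\phi = \exists x \psi$ I would, reading bottom-up: (1) apply $\existsl$ to $\exists x \psi^{\inp}$ at $w$ with a fresh variable $y$, inserting $y$ into the signature at $w$ and replacing the principal formula by $\psi(y/x)^{\inp}$ (meeting $\dag_{4}(\X)$); (2) apply $\existsri$ to $\exists x \psi^{\outp}$ at $u$ using the \emph{same} $y$, adding $\psi(y/x)^{\outp}$ to the $u$-component -- for $\nnd,\nndl$ the condition $\dag_{3}(\nd)$ holds because $y$ occurs in the signature at $w$ and $\rable{w}{u}$, so $y$ is available for $u$, while $\dag_{3}(\cd)$ is immediate; (3) observe that $\psi(y/x)^{\inp}$ now sits at $w$ and $\psi(y/x)^{\outp}$ at $u$ with $\rable{w}{u}$ (neither rule alters the tree), so the resulting sequent is derivable by the induction hypothesis applied to $\psi(y/x)$. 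For $\phi = \forall x \psi$ I would, reading bottom-up: (1) apply $\allr$ to $\forall x \psi^{\outp}$ at $u$, creating a fresh child $v$ of $u$ with signature $\{y\}$ ($y$ fresh) containing $\psi(y/x)^{\outp}$; (2) apply $\allli$ to $\forall x \psi^{\inp}$ at $w$ to place $\psi(y/x)^{\inp}$ into the $v$-component -- its side condition demands $\rable{w}{v}$, obtained from $\rable{w}{u}$ and $\rtable{u}{v}$, and for $\nnd,\nndl$ also that $y$ be available for $v$, which holds since $y$ lies in the signature at $v$ (for $\ncd,\ncdl$ the analogous condition $y \in \vars$ is trivial); (3) close by the induction hypothesis applied to $\psi(y/x)$ at the label $v$, using $\rable{v}{v}$.

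Finally, since $\lin$ is never invoked in this construction, the same derivations witness the claim for $\nndl$ and $\ncdl$ as for $\nnd$ and $\ncd$; the only rules used are $\idfo$, $\botl$, $\disl$, $\disr$, $\conl$, $\conr$, $\impl$, $\impr$, $\existsl$, $\existsri$, $\allli$, and $\allr$, and I have checked that their side conditions are satisfied under both the ND and the CD reading of \fig~\ref{fig:nested-calculi-fo}.
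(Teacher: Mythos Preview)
Your proposal is correct and follows essentially the same approach as the paper: induction on $\vert\phi\vert$, with the propositional cases handled as in \lem~\ref{lem:general-id} and the quantifier cases treated via $\allr$ then $\allli$ (respectively $\existsl$ then $\existsri$), using that the freshly introduced variable $y$ sits in the signature at $v$ (respectively $w$) so that availability for the ND side conditions is secured. The paper's proof is simply more terse, displaying only the $\forall$ case; your $\exists$ case and your explicit verification of the side conditions under both the ND and CD readings are correct additions.
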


\begin{proof} The lemma extends the proof of \lem~\ref{lem:general-id}, being shown by induction on the complexity of $\phi$. We show the case where $\phi$ is of the form $\forall x \psi$.
\begin{center}
\AxiomC{$\ns\{\forall \psi^{\inp}\}_{w}\{ \nest{y; \psi(\parama/x)^{\inp}, \psi(\parama/x)^{\outp}}{v}\}_{u}$}
\RightLabel{$\allli$}
\UnaryInfC{$\ns\{\forall \psi^{\inp}\}_{w}\{\nest{y; \psi(\parama/x)^{\outp}}{v}\}_{u}$}
\RightLabel{$\allr$}
\UnaryInfC{$\ns\{\forall \psi^{\inp}\}_{w}\{ \forall \psi^{\outp}\}_{u}$}
\DisplayProof
\end{center}
 Observe that $y$ is available for $v$ in the $\allli$ rule above since $y$ occurs in the $v$-component, showing that the inference is indeed valid.
\end{proof}

 All of our nested calculi can be shown to be cut-free complete. 
 This result is proven by providing a (potentially non-terminating) proof-search algorithm $\prove$ for each nested system. If $\prove$ terminates, then the input nested sequent has a proof, and if $\prove$ does not terminate, then we show that a counter-model can be constructed witnessing the invalidity of the input nested sequent. In the $\gd$, $\ndl$, and $\cdl$ cases, if $\prove$ does not terminate, then the $\lin$ rule plays a crucial role in the extraction of a counter-model. As the $\lin$ rule bottom-up linearizes branching structure in a nested sequent, the rule effectively imposes a linear order on the components of a nested sequent, ultimately yielding a counter-model with a linear accessibility relation in the case of failed proof-search. As the details of this proof are lengthy and tedious, we defer the proof of cut-free completeness to the appendix (\app~\ref{app:completeness}).

\begin{theorem}[Cut-free Completeness]\label{thm:cut-free-comp}
Let $\mathrm{L} \in \{\ipc, \nd, \cd, \gd, \ndl, \cdl\}$ and $\vec{x}; \phi(\vec{x})$ be a nested sequent with $\vec{x}$ all free variables in $\phi(\vec{x})$. If a nested sequent $\vec{x}; \phi(\vec{x})$ is $\mathrm{L}$-valid, then it is derivable in $\ncalc_{\mathrm{L}}$.
\end{theorem}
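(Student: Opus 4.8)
The plan is to establish the contrapositive by a uniform proof-search argument, the lengthy bookkeeping of which belongs in Appendix~\ref{app:completeness}. For each calculus $\ncalc_{\logic}$ I would specify a (possibly non-terminating) proof-search procedure $\prove$ which, given an input nested sequent $\nsa$, grows a derivation tree bottom-up by applying the rules of $\ncalc_{\logic}$ in a \emph{fair} way: every instance of every rule applicable to a component along a branch is eventually scheduled on that branch, with trivially redundant applications suppressed so that branches can become \emph{saturated}. If every branch terminates at an instance of $\idfo$ or $\botl$, then $\prove$ has delivered a cut-free derivation of $\nsa$ and we are done. Otherwise, since the first-order systems need not terminate, K\"onig's Lemma furnishes an infinite branch (in the propositional cases, a finite saturated branch that is not closed), and along it we form the \emph{limit nested sequent} $\nsa_{\infty}$ as the union of all components, signatures, and nestings occurring on the branch; by fairness, $\nsa_{\infty}$ satisfies every saturation condition associated with the rules --- e.g. $\phi \lor \psi^{\inp}$ at a label forces $\phi^{\inp}$ or $\psi^{\inp}$ there, $\forall x \phi^{\inp}$ at $w$ forces $\phi(y/x)^{\inp}$ at every $v$ with $\rable{w}{v}$ for every available $y$, $\exists x \phi^{\outp}$ at $w$ forces $\phi(y/x)^{\outp}$ at $w$ for every available $y$, and so on --- while $\nsa_{\infty}$ contains no instance of an initial rule. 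The propositional cases of the theorem fall out of the first-order argument by taking all signatures empty, in the spirit of Remark~\ref{rmk:fo-over-prop}.

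From $\nsa_{\infty}$ I would read off a counter-model $M$: the worlds are the labels of $\nsa_{\infty}$; the order $\leq$ is the reachability relation $\rable{}{}$, which is automatically reflexive and transitive; in the $\nd,\ndl$ cases the domain $D(w)$ is the set of variables available for $w$ (which is upward-closed along $\rable{}{}$, giving the nested domain condition), and in the $\cd,\cdl$ cases it is the single set of all variables occurring in signatures, constancy being forced by the relaxed $\cd$-side conditions on the quantifier rules; domains are non-empty because $\existsrii$ and $\alllii$ inject fresh variables into signatures, and $\doms$ guarantees $V(p,w) \subseteq D(w)^{n}$ once we set $V(p,w)$ to record exactly the atoms $p(\vec{x})^{\inp}$ occurring at $w$. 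This valuation is monotone precisely because an open saturated branch can never carry $p(\vec{x})^{\inp}$ at $w$ and $p(\vec{x})^{\outp}$ at some $u$ with $\rable{w}{u}$ --- that would be an instance of $\idfo$. The heart of the argument is then a \emph{truth lemma}, proved by induction on formula complexity using the saturation conditions: at each component of $\nsa_{\infty}$ labelled $w$, every input formula is satisfied by $M$ at $w$ (under the assignment sending each variable to itself) and every output formula is not. Applied to the root, this shows $M \not\Vdash \fint{\nsa}$ in the propositional cases and $M,\iota,\mu,w_{0} \not\models \nsa$ in the first-order cases, contradicting the assumed $\logic$-validity of $\nsa$; hence the open-branch alternative cannot arise and $\prove$ must close, producing the derivation.

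It remains to confirm that $M$ is genuinely an $\logic$-model. For $\ipc,\nd,\cd$ nothing further is required. For $\gd,\ndl,\cdl$ the new ingredient is the $\lin$ rule: being a two-premise rule, $\lin$ causes proof-search to branch, and an open saturated branch must, for any label $w$ with two distinct successors $u$ and $v$, have passed through one of the two premises of the relevant $\lin$ instance, so that $\rable{u}{v}$ or $\rable{v}{u}$ holds in $\nsa_{\infty}$; together with reflexivity and transitivity this makes $\leq$ linear, i.e. $M$ meets the connectivity condition of \dfn~\ref{def:frame-model-fo}. I expect the principal obstacle to be the bookkeeping in the quantifier and implication cases of the truth lemma, since the corresponding reachability and propagation rules have saturation conditions phrased in terms of paths and availability rather than single components; the induction therefore has to track carefully how data has been propagated along $\rable{}{}$ in $\nsa_{\infty}$, and one must design the scheduling in $\prove$ so that all of these path-based conditions really are met in the limit. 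By comparison the linearity argument for $\gd,\ndl,\cdl$, though it is the genuinely novel part, is technically short once the saturation framework is in place.
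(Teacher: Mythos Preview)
Your overall strategy---fair bottom-up proof-search, K\"onig's lemma on the resulting tree, counter-model extraction from an open branch, and a truth lemma by induction on formula complexity---is the same as the paper's. Two steps, however, do not go through as written.

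First, your claim that domains are non-empty ``because $\existsrii$ and $\alllii$ inject fresh variables into signatures'' fails in general: those rules fire only in the presence of existential output or universal input formulae, so for an input such as $\emptyset; p^{\outp}$ with $p$ nullary no rule ever adds a variable and the extracted model has $D(w_{0}) = \emptyset$, contradicting the frame conditions of \dfn~\ref{def:frame-model-fo}. The paper repairs this by seeding the root signature with a fresh \emph{starting variable} $z$ before launching $\prove$, running the search on $\va,z;\nsa$ rather than $\va;\nsa$, and then invoking a separate lemma (\lem~\ref{lem:fresh-delete}) to delete $z$ from any proof found; this device is not optional.

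Second, defining $V(p,w)$ to record only the atoms $p(\vec{x})^{\inp}$ occurring \emph{at} $w$ does not give a monotone valuation, because the calculi contain no rule propagating atomic input formulae along $\rable{}{}$. The sentence you offer in justification---that an open branch cannot carry $p(\vec{x})^{\inp}$ at $w$ together with $p(\vec{x})^{\outp}$ at a reachable $u$, since that would be an instance of $\idfo$---is the base case of the truth lemma for output atoms, not an argument for monotonicity of $V$. The paper instead sets $(x_{1},\ldots,x_{n}) \in V(p,w)$ whenever $p(\vec{x})^{\inp}$ occurs at some $u$ with $\rable{u}{w}$ in a sequent on the branch, building the required upward closure into $V$ directly; your construction needs the same amendment.
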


\begin{remark} We note that all axioms and inference rules for \nd \ and \cd \ are respectively derivable in $\nnd$ and $\ncd$ \emph{without the $\doms$ rule}, but with the $\cut$ rule (shown in \fig~\ref{fig:lab-struc-rules}). Therefore, due to the cut-elimination theorem (\thm~\ref{thm:cut-elim-int}) in \sect~\ref{sec:cut-elim}, the nested systems $\nnd$ and $\ncd$ are complete relative to \nd \ and \cd, respectively, without the $\doms$ rule, showing that the rule is admissible in these systems.
\end{remark}

\section{Invertibility and Admissibility Properties}\label{sec:properties}

 We now prove that our nested calculi satisfy a broad range of height-preserving admissibility and invertibility properties. We define a rule $(r)$ to be (height-preserving) admissible in a nested calculus $\ncalc$ \iffi if the premises have proofs (of heights $h_{1},\ldots,h_{n}$), then the conclusion has a proof (of height $h \leq \max\{h_{1}, \ldots, h_{n}\}$). A rule $(r)$ is defined to be (height-preserving) invertible \iffi if the conclusion has a proof (of height $h$), then the premises have a proof (of height $h$ or less). We refer to height-preserving admissible rules as \emph{hp-admissible} rules and height-preserving invertible rules as \emph{hp-invertible} rules.
 
 As will be shown, all rules of $\nipc$, $\nnd$, and $\ncd$ (i.e. all logical rules in $\ngd$, $\nndl$, and $\ncdl$) are hp-invertible. The various (hp-)admissible rules are displayed in \fig~\ref{fig:lab-struc-rules}. The $\ndr$, $\ddr$, $\lwr$, and $\lft$ rules are subject to the side condition that $\rtable{w}{u}$ and the rules $\lsub$, $\nec$, and $\ex$ are subject to the side condition that the label $u$ is fresh (recall that at the beginning of \sect~\ref{sec:nested-calculi} we stipulated that all labels occurring in nested sequents must be unique). The $\psub$ rule substitutes a variable $y$ for every occurrence of a free variable $x$ in a nested sequent $\ns$, possibly renaming bound variables to avoid unwanted variable capture; for example, if $\ns$ is the nested sequent $x; p(x)^{\inp},[\emptyset; q(z) \imp r(x)^{\outp}]_{w}$, then $\ns(y/x) =y; p(y)^{\inp},[\emptyset; q(z) \imp r(y)^{\outp}]_{w}$. We remark that although the (hp-)admissibility of these rules is interesting in its own right, such rules serve a practical role, being used in our proof of syntactic cut-elimination for $\nipc$, $\nnd$, and $\ncd$ in the following section.
 
 We will prove most (hp-)admissibility and (hp-)invertibility results for the nested calculus $\nndl$. By \rmk~\ref{rmk:fo-over-prop}, $\nndl$ properly extends $\nipc$, $\ngd$, and $\nnd$. Therefore, by omitting the first-order cases in the proofs below or disregarding the $\lin$ cases, each proof serves as a corresponding (hp-)admissibility or (hp-)invertibility result for $\nipc$, $\ngd$, or $\nnd$. Moreover, as $\nndl$ is a close variant of $\ncdl$, all proofs below can be straightforwardly modified for $\ncd$ and $\ncdl$. 

\begin{figure}[t]

\begin{center}
\begin{tabular}{c c c}
\AxiomC{$\nsa\{\nsb\}_{w}$}
\RightLabel{$\wk$}
\UnaryInfC{$\nsa\{\nsb, \nsc\}_{w}$}
\DisplayProof

&

\AxiomC{$\nsa\{\va,x,x; \nsb\}_{w}$}
\RightLabel{$\ctrv$}
\UnaryInfC{$\nsa\{\va,x ; \nsb\}_{w}$}
\DisplayProof

&

\AxiomC{$\nsa\{\nsb, \bot^{\outp}\}_{w}$}
\RightLabel{$\botr$}
\UnaryInfC{$\nsa\{\nsb\}_{w}$}
\DisplayProof
\end{tabular}
\end{center}


\begin{center}
\begin{tabular}{c c c}
\AxiomC{$\nsa\{\va; \nsb\}_{w}\{\vb, \vc; \nsc\}_{u}$}
\RightLabel{$\ndr^{\dag_{1}}$}
\UnaryInfC{$\nsa\{\va, \vc; \nsb\}_{w}\{\vb; \nsc\}_{u}$}
\DisplayProof

&

\AxiomC{$\nsa\{\va, \vb; \nsb\}_{w}\{\vc; \nsc\}_{u}$}
\RightLabel{$\ddr^{\dag_{1}}$}
\UnaryInfC{$\nsa\{\va; \nsb\}_{w}\{\vc, \vb; \nsc\}_{u}$}
\DisplayProof

&

\AxiomC{$\nsa\{\va, \vb; \nsb\}_{w}$}
\RightLabel{$\cdr$}
\UnaryInfC{$\nsa\{\va; \nsb\}_{w}$}
\DisplayProof
\end{tabular}
\end{center}


\begin{center}
\begin{tabular}{c c c}
\AxiomC{$\nsa\{\va; \nsb\}_{w}$}
\RightLabel{$\wkv$}
\UnaryInfC{$\nsa\{\va,x ; \nsb\}_{w}$}
\DisplayProof

&

\AxiomC{$\nsa\{\va; \nsb, [\vb; \nsc]_{u}\}_{w}$}
\RightLabel{$\mrg$}
\UnaryInfC{$\nsa\{\va,\vb; \nsb, \nsc\}$}
\DisplayProof

&

\AxiomC{$\nsa\{\nsb, \phi^{\io}, \phi^{\io}\}_{w}$}
\RightLabel{$\ctr$}
\UnaryInfC{$\nsa\{\nsb, \phi^{\io}\}_{w}$}
\DisplayProof
\end{tabular}
\end{center}


\begin{center}
\begin{tabular}{c c c c}
\AxiomC{$\nsa\{\nsb, [\nsc]_{v} \}_{w}$}
\RightLabel{$\ex^{\dag_{2}}$}
\UnaryInfC{$\nsa\{\nsb, [\emptyset; [\nsc]_{v}]_{u} \}_{w}$}
\DisplayProof

&

\AxiomC{$\nsa$}
\RightLabel{$\psub$}
\UnaryInfC{$\nsa(y/x)$}
\DisplayProof

&

\AxiomC{$\nsa$}
\RightLabel{$\lsub^{\dag_{2}}$}
\UnaryInfC{$\nsa(w/u)$}
\DisplayProof

&

\AxiomC{$\nsa$}
\RightLabel{$\nec^{\dag_{2}}$}
\UnaryInfC{$\emptyset; [\nsa]_{u}$}
\DisplayProof
\end{tabular}
\end{center}


\begin{center}
\begin{tabular}{c c}
\AxiomC{$\nsa\{\nsb, \nsc^{\outp}\}_{w}\{\nsd\}_{u}$}
\RightLabel{$\lwr^{\dag_{1}}$}
\UnaryInfC{$\nsa\{\nsb\}_{w}\{\nsd, \nsc^{\outp}\}_{u}$}
\DisplayProof

&

\AxiomC{$\nsa\{\nsb\}_{w}\{\nsc, \nsd^{\inp}\}_{u}$}
\RightLabel{$\lft^{\dag_{1}}$}
\UnaryInfC{$\nsa\{\nsb, \nsd^{\inp}\}_{w}\{\nsc\}_{u}$}
\DisplayProof
\end{tabular}
\end{center}


\begin{center}
\begin{tabular}{c c}
\AxiomC{$\nsa\{[\va; \nsb]_{u}, [\vb; \nsc]_{v}\}_{w}$}
\RightLabel{$\ec$}
\UnaryInfC{$\nsa\{[\va,\vb; \nsb,\nsc]_{u}\}_{w}$}
\DisplayProof

&

\AxiomC{$\nsa\{\nsb, \phi^{\outp}\}_{w}$}
\AxiomC{$\nsa\{\nsb, \phi^{\inp}\}_{w}$}
\RightLabel{$\cut$}
\BinaryInfC{$\nsa\{\nsb\}_{w}$}
\DisplayProof
\end{tabular}
\end{center}

\noindent
\textbf{Side conditions:}\\
$\dag_{1}$ stipulates that the rule is applicable only if $\rtable{w}{u}$.\\
$\dag_{2}$ stipulates that $u$ must be fresh.

\caption{Admissible rules. We let $\io \in \{\inp,\outp\}$ in $\ctr$ and note that $\nsc^{\outp}$ and $\nsd^{\inp}$ represent multisets of output and input formulae, respectively, in the $\lwr$ and $\lft$ rules.}
\label{fig:lab-struc-rules}
\end{figure}

\begin{lemma}\label{lem:sub-admiss}
If $\ncalc \in \{\nipc,\nnd,\ncd,\ngd,\nndl,\ncdl\}$, then the $\psub$ and $\lsub$ rules are hp-admissible in $\ncalc$.
\end{lemma}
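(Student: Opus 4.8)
The plan is to prove both claims by induction on the height of a derivation $\prf$ of the premise, treating the propositional calculus $\nipc$ and its extensions uniformly; the first-order rules and the $\lin$ rule will be handled as additional cases, but the argument is the same in shape for all six systems. For $\psub$, I would show: if $\prf$ is a derivation of $\nsa$ of height $h$, then there is a derivation of $\nsa(y/x)$ of height at most $h$. In the base case, $\nsa$ is an instance of $\id$ or $\botl$ (or $\idfo$ in the first-order systems); applying the substitution $(y/x)$ to such an instance again yields an instance of the same rule—here one uses that $\rable{w}{u}$ is a property of the tree structure $tr(\nsa)$, which is untouched by a variable substitution, so the side condition $\dag$ (resp. $\dag_1(\X)$) is preserved. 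In the inductive step, I would apply the induction hypothesis to the premise derivation(s) of the last rule $(r)$ and then reapply $(r)$ to the substituted premise(s). The only subtlety is when $(r)$ is a quantifier rule with an eigenvariable side condition (e.g. $\existsl$, $\allr$, $\existsrii$, $\alllii$ with a fresh $y$): if the eigenvariable of the rule happens to be the substituted variable $x$ or the incoming variable $y$, I would first rename the eigenvariable to a genuinely fresh variable (invoking the induction hypothesis once to perform this renaming, which is height-preserving) and only then push the substitution through. Substitution commutes with the formula operations—$(\phi \star \psi)(y/x) = \phi(y/x) \star \psi(y/x)$ and, after the eigenvariable renaming, with the quantifier cases—so the rewritten inference is a correct instance of $(r)$ with all side conditions met, and the height does not increase.

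For $\lsub$ the argument is parallel but now one is substituting one label $w$ for another label $u$, subject to $u$ being fresh, i.e. $u$ does not already occur in $\nsa$. Again induct on the height of a derivation of $\nsa$. The key observation is that replacing $u$ by $w$ throughout $tr(\nsa)$ produces a well-formed nested sequent precisely because of the freshness of $u$: the uniqueness-of-labels convention stipulated at the start of Section~\ref{sec:nested-calculi} is maintained, and the reachability relation $\rable{}{}$ (hence $\rtable{}{}$ and the notion of availability) is simply the image of the original under the label renaming, so every side condition referring to reachability or availability transfers verbatim. In the base case, an $\id$/$\botl$/$\idfo$ instance is sent to an instance of the same rule. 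In the inductive step, apply the induction hypothesis to the premise(s) of the last rule and reapply it; the only place to be careful is when the last rule introduces a fresh label (the $\impr$ rule, or the quantifier rules $\allr$, $\existsrii$, $\alllii$ which spawn a nested component): if the rule's own fresh label collides with $w$, first rename that fresh label via the induction hypothesis to avoid the clash, then perform the $(w/u)$ substitution. Since all of these preliminary renamings are height-preserving by the induction hypothesis, the resulting derivation of $\nsa(w/u)$ again has height at most that of the original.

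The main obstacle—though a routine rather than a deep one—is the bookkeeping around eigenvariable and eigenlabel collisions: one must be disciplined about first performing a height-preserving renaming (itself an instance of the very lemma being proved, applied at smaller height) before applying the target substitution, so that the freshness side conditions of $\existsl$, $\allr$, $\existsrii$, $\alllii$, and the fresh-label requirement of $\impr$ are never violated. Once that is handled, every remaining case is a mechanical "apply the IH to the premises, reapply the rule" step, and the height bound $h \leq \max\{h_1,\ldots,h_n\}$ is immediate because no new rule applications are introduced. The $\lin$ case is entirely trivial for both substitutions, since $\lin$ mentions neither variables nor the labels being substituted in any side condition and the substitution commutes with its purely structural action on $tr(\nsa)$.
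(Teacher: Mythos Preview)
Your proposal is correct and follows essentially the same approach as the paper: induction on the height of the derivation, with the only non-trivial cases being the quantifier rules carrying freshness side conditions, which you resolve exactly as the paper does---by first invoking the induction hypothesis to rename the eigenvariable to something genuinely fresh, and then pushing the target substitution through. The paper is terser (it dismisses $\lsub$ as ``trivial'' and displays only the $\alllii$ case for $\psub$), but your more explicit bookkeeping around eigenlabel collisions for $\lsub$ is harmless extra care rather than a different argument.
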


\begin{proof} We prove the result by induction on the height of the given derivation for $\nndl$. We argue the $\psub$ case as the $\lsub$ case is trivial.

\textit{Base case.} If $\psub$ is applied to an instance of $\idfo$ or $\botl$, then the conclusion is an instance of each rule resolving the base case.

\textit{Inductive step.} With the exception of the $\existsrii$, $\existsl$, $\alllii$ and $\allr$ rules, which have freshness conditions, the $\psub$ rule freely permutes above every rule of $\nnd$. Hence, these cases are easily resolved. We show how to resolve a non-trivial $\alllii$ case and omit the other cases as they are argued in a similar fashion.

 Suppose we have an application of $\psub$ after an application of $\alllii$ as shown below left, where $\psub$ substitutes in the fresh variable $y$. Then, by applying IH twice to first substitute a fresh variable $z$ for the variable $y$, and then substituting $y$ for $x$, as shown below right, we obtain the desired conclusion.
\begin{flushleft}
\begin{tabular}{c c}
\AxiomC{$\nsa \hol \va,y; \nsb \hor_{w} \hol \forall x \phi^{\inp} \hor_{u} \hol \nsc, \phi(y/x)^{\inp} \hor_{v}$}
\RightLabel{$\alllii$}
\UnaryInfC{$\nsa \hol \va; \nsb \hor_{w} \hol \forall x \phi^{\inp} \hor_{u} \hol \nsc \hor_{v}$}
\RightLabel{$\psub$}
\UnaryInfC{$(\nsa \hol \va; \nsb \hor_{w} \hol \forall x \phi^{\inp} \hor_{u} \hol \nsc \hor_{v})(y/x)$}
\DisplayProof

&

$\leadsto$
\end{tabular}
\end{flushleft}
\begin{flushright}
\AxiomC{$\nsa \hol \va,y; \nsb \hor_{w} \hol \forall x \phi^{\inp} \hor_{u} \hol \nsc, \phi(y/x)^{\inp} \hor_{v}$}
\RightLabel{IH} 
\UnaryInfC{$\nsa \hol \va,z; \nsb \hor_{w} \hol \forall x \phi^{\inp} \hor_{u} \hol \nsc, \phi(z/x)^{\inp} \hor_{v}$}
\RightLabel{IH} 
\UnaryInfC{$(\nsa \hol \va,z; \nsb \hor_{w} \hol \forall x \phi^{\inp} \hor_{u} \hol \nsc, \phi(z/x)^{\inp} \hor_{v})(x/y)$}
\RightLabel{$\alllii$}
\UnaryInfC{$(\nsa \hol \va; \nsb \hor_{w} \hol \forall x \phi^{\inp} \hor_{u} \hol \nsc \hor_{v})(x/y)$}
\DisplayProof
\end{flushright}
\end{proof}

\begin{lemma}\label{lem:botl-nec-admiss}
If $\ncalc \in \{\nipc,\nnd,\ncd,\ngd,\nndl,\ncdl\}$, then the $\botr$ and $\nec$ rules are hp-admissible in $\ncalc$.
\end{lemma}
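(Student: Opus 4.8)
The plan is to prove both statements by induction on the height of the given derivation, working in the representative system $\nndl$; as noted at the beginning of this section, the proofs for $\nipc$, $\ngd$, $\nnd$, $\ncd$, and $\ncdl$ are obtained by discarding the first-order cases and/or the $\lin$ case.

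For $\botr$, the key observation is that $\bot^{\outp}$ is never the principal formula of a rule of $\nndl$: there is no right rule for $\bot$, and $\bot$ is syntactically distinct from every atom $p(\vec{x})$ and from every compound formula. Hence a $\bot^{\outp}$ occurrence deleted by $\botr$ always sits passively in the context of whatever rule was last applied. In the base case, deleting such a context occurrence from an instance of $\idfo$ or $\botl$ again yields an instance of the same initial rule (the displayed atoms, the $\bot^{\inp}$, and any reachability path are untouched). In the inductive step, given the last rule $(r)$ deriving $\nsa\{\nsb, \bot^{\outp}\}_{w}$, every premise of $(r)$ inherits the same passive $\bot^{\outp}$ occurrence(s); I apply the induction hypothesis to each premise to remove it, then re-apply $(r)$. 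None of the side conditions of $(r)$ (reachability via $\rable{}{}$ or $\rtable{}{}$, availability, freshness, or the signature condition of $\doms$) refers to output $\bot$, so they remain satisfied, and the height does not increase.

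For $\nec$, I would take a derivation $\prf$ of $\nsa$ and, after renaming internal fresh labels via hp-admissibility of $\lsub$ (\lem~\ref{lem:sub-admiss}) so that the target label $u$ occurs nowhere in $\prf$, replace every nested sequent $\ncon$ occurring in $\prf$ by $\emptyset; [\ncon]_{u}$. The claim is that this yields a derivation of $\emptyset; [\nsa]_{u}$ of the same height. To verify it, note that passing from $tr(\ncon)$ to $tr(\emptyset; [\ncon]_{u})$ merely adjoins a new root above the old tree: it preserves every path among the original labels, hence preserves $\rable{}{}$ and $\rtable{}{}$ between them; since the new root carries the empty signature it contributes no availability; and the only label it introduces is $u$, which is fresh for all of $\prf$, so every freshness side condition used in $\prf$ (being about labels or variables that actually occur in $\prf$) still holds. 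Consequently each rule application of $\prf$ remains valid after wrapping — instances of $\idfo$ and $\botl$ stay initial, and the $\lin$ case goes through because the three active components merely descend one level — which gives the result.

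Neither step poses a genuine obstacle; the only thing requiring care is the bookkeeping of side conditions, namely verifying for $\botr$ that no rule ever treats $\bot^{\outp}$ as an active principal formula, and for $\nec$ that adjoining an outer nesting with empty signature disturbs none of the reachability, availability, or freshness conditions — precisely the ``the tree structure is preserved upward'' observation recorded above.
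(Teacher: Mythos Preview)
Your proposal is correct and follows essentially the same approach as the paper: induction on height, with the observation that both rules commute past every rule of the calculus (your global ``wrap every sequent'' description of the $\nec$ case is just the unfolded form of this permutation). Your treatment is in fact more careful than the paper's one-line proof, in particular the explicit appeal to $\lsub$ to avoid label clashes with fresh labels introduced higher up in the derivation.
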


\begin{proof} All results are shown by induction on the height of the given derivation. The base cases are trivial as any application of one of the rules to an instance of $\idfo$ or $\botl$ is another instance of the rule. The inductive steps are straightforward as well since both rules permute above every rule of $\ncalc$.
\end{proof}

\begin{lemma}\label{lem:wkv-ctrv-admiss}
If $\ncalc \in \{\nnd,\ncd,\nndl,\ncdl\}$, then the $\wkv$ and $\ctrv$ rules are hp-admissible in $\ncalc$.
\end{lemma}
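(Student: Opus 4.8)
The plan is to prove both statements simultaneously by induction on the height of the given derivation, carrying out the argument for $\nndl$; the case of $\ncdl$ is identical after replacing the \nd\ side conditions by the \cd\ side conditions, and the statements for $\nnd$ and $\ncd$ are recovered by ignoring the $\lin$ cases (as explained before \lem~\ref{lem:sub-admiss}). For $\wkv$ we are given a proof of $\nsa\{\va; \nsb\}_{w}$ and must produce a proof of $\nsa\{\va, x; \nsb\}_{w}$ of no greater height, and dually for $\ctrv$ the given proof is of $\nsa\{\va, x, x; \nsb\}_{w}$ and the target is $\nsa\{\va, x; \nsb\}_{w}$. The base cases are immediate: $\idfo$ and $\botl$ place no constraint on signatures, so an instance of either rule remains an instance after inserting a variable into, or deleting one of two copies of a variable from, any signature.

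For the inductive step I would inspect the last rule $(r)$ of the given derivation and permute $\wkv$ (respectively $\ctrv$) above it, applying the induction hypothesis to the premise(s) and then reapplying $(r)$. The point to check is that the side conditions of $(r)$ survive this transformation. Since neither $\wkv$ nor $\ctrv$ alters the tree $tr(\nsa)$, every side condition of the form $\rable{w}{u}$ or $\rtable{w}{u}$ (e.g.\ on $\idfo$, $\impl$, $\allli$, $\alllii$) is preserved; moreover the component on which we weaken or contract is modified, but never deleted or merged, by any rule of $\nndl$ (in particular by $\doms$, $\impr$, $\existsrii$, $\alllii$, and $\lin$), so it persists all the way up the derivation and the transformation is well defined up to routine multiset bookkeeping (for instance with $\doms$). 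For the availability conditions of $\existsri$ and $\allli$, note that inserting a variable into a signature can only enlarge the set of available variables, while deleting one of the two copies of $x$ leaves $x$---and every other variable---available exactly where it was before; hence these conditions also survive.

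The only real obstacle is the interaction of $\wkv$ with the rules carrying a freshness proviso, namely $\existsrii$, $\existsl$, $\alllii$, and $\allr$: if the variable $x$ inserted by $\wkv$ coincides with the eigenvariable $y$ of such a rule, then naively permuting $\wkv$ upward would make $y$ occur in the conclusion of that rule, violating freshness. This is circumvented by appealing to the hp-admissibility of $\psub$ (\lem~\ref{lem:sub-admiss}): before permuting, rename $y$ throughout the subderivation to a genuinely fresh variable $z$ with $z \neq x$---which preserves the height---so that the eigenvariable becomes $z$ and the permutation goes through as above. (Equivalently, one may note at the outset that, after finitely many hp-admissible applications of $\psub$, it is no loss of generality to assume the eigenvariables of the given derivation are pairwise distinct and distinct from $x$, whence no freshness clash can arise.) For $\ctrv$ no such precaution is needed, since the contracted variable $x$ occurs in the conclusion of $(r)$ and hence freshness already forces the eigenvariable to differ from $x$. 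I expect this freshness bookkeeping---rather than the routine permutations over the other rules---to be the one point that genuinely needs care.
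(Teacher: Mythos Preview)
Your proposal is correct and follows the same route as the paper: a straightforward induction on the height of the derivation, permuting $\wkv$/$\ctrv$ upward through the last rule after checking that side conditions survive. Your treatment is in fact more explicit than the paper's sketch: the paper simply calls both cases ``straightforward'' and displays a single $\ctrv$ permutation, whereas you spell out the one genuinely delicate point---the possible clash between the weakened variable and a fresh eigenvariable in $\existsrii$, $\existsl$, $\alllii$, $\allr$---and resolve it via the hp-admissible $\psub$ of \lem~\ref{lem:sub-admiss}, exactly the device the paper deploys in the subsequent proof of $\wk$-admissibility.
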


\begin{proof} Both rules are proven hp-admissible by induction on the height of the given derivation. The proofs are straightforward in both cases, so we only show the $\existsri$ case for $\ctrv$. If $\ctrv$ is applied on a principal variable $y$ of $\existsri$ as shown below left, then due to the additional copy of the variable $y$, the two rules may be permuted as shown below right.
\begin{flushleft}
\begin{tabular}{c c}
\AxiomC{$\nsa \hol \vb, y,y; \nsc \hor_{v} \hol \va, y; \nsb \hor_{w} \hol \phi(y/x)^{\outp}, \exists x \phi^{\outp}  \hor _{u}$}
\RightLabel{$\existsri$}
\UnaryInfC{$\nsa \hol \vb, y,y; \nsc \hor_{v} \hol \va; \nsb \hor_{w} \hol \phi(y/x)^{\outp}, \exists x \phi^{\outp}  \hor _{u}$}
\RightLabel{$\ctrv$}
\UnaryInfC{$\nsa \hol \vb, y; \nsc \hor_{v} \hol \va; \nsb \hor_{w} \hol \phi(y/x)^{\outp}, \exists x \phi^{\outp}  \hor _{u}$}
\DisplayProof

&

$\leadsto$
\end{tabular}
\end{flushleft}
\begin{flushright}
\AxiomC{$\nsa \hol \vb, y,y; \nsc \hor_{v} \hol \va, y; \nsb \hor_{w} \hol \phi(y/x)^{\outp}, \exists x \phi^{\outp}  \hor _{u}$}
\RightLabel{IH}
\UnaryInfC{$\nsa \hol \vb, y; \nsc \hor_{v} \hol \va, y; \nsb \hor_{w} \hol \phi(y/x)^{\outp}, \exists x \phi^{\outp}  \hor _{u}$}
\RightLabel{$\existsri$}
\UnaryInfC{$\nsa \hol \vb, y; \nsc \hor_{v} \hol \va; \nsb \hor_{w} \hol \phi(y/x)^{\outp}, \exists x \phi^{\outp}  \hor _{u}$}
\DisplayProof
\end{flushright}
\end{proof}

\begin{lemma}\label{lem:wk-rules-admiss}
If $\ncalc \in \{\nipc,\nnd,\ncd,\ngd,\nndl,\ncdl\}$, then the $\wk$ rule is hp-admissible in $\ncalc$.
\end{lemma}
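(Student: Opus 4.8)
The plan is to prove the hp-admissibility of $\wk$ for $\nndl$ by induction on the height of the given derivation $\prf$ of the premise $\nsa\{\nsb\}_{w}$; by \rmk~\ref{rmk:fo-over-prop} and the conventions stated at the opening of this section, the cases for $\nipc$, $\ngd$, $\nnd$, $\ncd$, and $\ncdl$ follow by restricting the first-order cases, disregarding the $\lin$ case, or lightly adapting the argument. The one observation used throughout is that adding the data $\nsc$ at the $w$-component is \emph{monotone} on the tree $tr(\nsa\{\nsb\}_{w})$: it introduces formula occurrences, signature variables, nodes, and edges, but deletes nothing. Hence every side condition referring to the reachability relations ($\rable{}{}$, $\rtable{}{}$) or to availability that holds in some nested sequent still holds after the weakening.

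For the base case, if $\prf$ is a single application of $\idfo$ or $\botl$, then by monotonicity the principal atom occurrence(s) and the required reachability witness survive in $\nsa\{\nsb, \nsc\}_{w}$, so this nested sequent is again an instance of the same initial rule and is derivable at height $1$. For the inductive step, let $(r)$ be the last rule of $\prf$. If $(r)$ carries no eigenvariable or eigenlabel --- that is, $(r)$ is any rule other than $\existsl$, $\existsrii$, $\allr$, $\alllii$, or $\impr$ --- we simply permute $\wk$ upward: the active formulae and components of $(r)$, including the $w$-component (to which $\nsc$ is merely appended), are untouched by the weakening, so we invoke the IH to weaken each premise of $(r)$ at $w$ by $\nsc$ and then reapply $(r)$, whose side conditions still hold by the monotonicity observation; in the $\lin$ case the weakening is carried through to the corresponding location in each of the two restructured premises. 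Height is preserved in every such case.

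If instead $(r)$ has an eigenvariable $y$ and/or an eigenlabel $u$, the danger is that $\nsc$ may contain $y$ or $u$, so that the freshness side condition of $(r)$ fails once $(r)$ is reapplied (and, in the eigenlabel case, the global label-uniqueness assumption may also be broken). Here we first use the hp-admissibility of $\psub$ and $\lsub$ established in \lem~\ref{lem:sub-admiss} to rename, in the subderivation above $(r)$, the eigenvariable/eigenlabel to one occurring neither in $\nsc$ nor in $\nsa\{\nsb\}_{w}$; this leaves the conclusion of $(r)$ unchanged and preserves height. We then proceed exactly as in the previous case. The main obstacle is precisely this clash between the freshness conditions on the quantifier and $\impr$ rules and the arbitrary data introduced by $\wk$, and it is dispatched cleanly by the already-available hp-admissibility of the substitution rules; everything else --- checking that the active data of $(r)$ is undisturbed and that reachability and availability side conditions persist --- is routine given monotonicity.
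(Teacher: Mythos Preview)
Your proposal is correct and follows essentially the same approach as the paper: induction on the height of the derivation, with the non-trivial cases being those rules carrying freshness conditions, which are handled by first invoking the hp-admissibility of the substitution rules (\lem~\ref{lem:sub-admiss}) to rename the eigenvariable or eigenlabel away from anything occurring in the weakened-in data $\nsc$, and then permuting $\wk$ upward. The paper explicitly treats only $\existsl$ (and lists $\existsrii$, $\alllii$, $\allr$ as similar), whereas you additionally flag $\impr$ for its fresh eigenlabel and invoke $\lsub$; this is a harmless extra bit of care rather than a genuinely different strategy.
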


\begin{proof} We prove the lemma by induction on the height of the given derivation for $\nndl$. We note that the base case is trivial as any application of $\wk$ to $\idfo$ or $\botl$ yields another instance of the rule, therefore, we focus on showing the inductive step. The only non-trivial cases of the inductive step occur when $\wk$ is applied to the conclusion of rule with a freshness condition, namely, $\existsl$, $\existsrii$, $\alllii$, or $\allr$. We show how to resolve the non-trivial $\existsl$ case as the remaining cases are similar.

Let us assume that $\wk$ introduces a nested sequent $\nsb$ containing the variable $y$, which is fresh in the $\existsl$ inference. By replacing $y$ with a fresh variable $z$, followed by an application of the hp-admissible rule $\psub$ (see \lem~\ref{lem:sub-admiss} above), and then IH (i.e. $\wk$) and $\existsl$, we obtain the desired conclusion (shown below right).
\begin{center}
\begin{tabular}{c c c}
\AxiomC{$\nsa\{\va, y; \phi(\parama/x)^{\inp}\}$}
\RightLabel{$\existsl$}
\UnaryInfC{$\nsa\{\va; \exists x \phi^{\inp}\}$}
\RightLabel{$\wk$}
\UnaryInfC{$\nsa\{\va; \exists x \phi^{\inp}, \nsb\}$}
\DisplayProof

&

$\leadsto$

&

\AxiomC{$\nsa\{\va, y; \phi(\parama/x)^{\inp}\}$}
\RightLabel{$\psub$}
\UnaryInfC{$\nsa\{\va,z; \phi(z/x)^{\inp}\}$}
\RightLabel{IH}
\UnaryInfC{$\nsa\{\va,z; \phi(z/x)^{\inp}, \nsb\}$}
\RightLabel{$\existsl$}
\UnaryInfC{$\nsa\{\va; \exists x \phi^{\inp}, \nsb\}$}
\DisplayProof
\end{tabular}
\end{center}
\end{proof}

\begin{lemma}\label{lem:nd-cd-admiss} (1) The $\ndr$ rule is hp-admissible in $\nnd$ and $\nndl$; (2) The $\ndr$, $\ddr$, and $\cdr$ rules are hp-admissible in $\ncd$ and $\ncdl$.
\end{lemma}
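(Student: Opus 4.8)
The plan is to prove all three claims by induction on the height of the derivation of the premise, carrying out the argument for $\nndl$ and $\ncdl$; the statements for $\nnd$ and $\ncd$ then follow by simply omitting the $\lin$ case. For the \textbf{base case}, the premise is an instance of $\idfo$ or $\botl$. Relocating variables between signatures (for $\ndr$ and $\ddr$) or deleting variables from a signature (for $\cdr$) leaves the tree $tr(\nsa)$ unchanged and alters no polarized formula, so the conclusion is again an instance of the same initial rule; in particular, the side condition $\dag_{1}(\X) = \rable{w}{u}$ on $\idfo$ depends only on the tree structure and hence survives.

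For the \textbf{inductive step} I would permute the structural rule upward past the last rule $R$ of the given derivation: apply the induction hypothesis to each premise of $R$ and then reapply $R$. This goes through in every case because of four observations. (i) None of $\ndr$, $\ddr$, $\cdr$ changes the tree encoded by a nested sequent, so the reachability side conditions $\dag_{1}$ (and the $\rtable{}{}$ conditions attached to other structural rules) are preserved. (ii) The rule $\ndr$ moves variables toward the root, and since $\rable{}{}$ is transitive this can only \emph{enlarge} the set of labels for which a given variable is available; hence the availability side conditions $\dag_{3}(\nd)$ and $\dag_{5}(\nd)$ remain satisfied. For $\ddr$ and $\cdr$ --- which occur only in the constant-domain systems, where every quantifier rule uses the side condition $y \in \vars$ rather than availability --- this point does not arise at all. (iii) Since the three rules only relocate or delete variables already present in the conclusion, they never introduce a fresh variable, so the freshness side conditions $\dag_{2}$, $\dag_{4}$, $\dag_{6}$ (and the freshness conditions of $\existsl$, $\existsrii$, $\alllii$, $\allr$) carry over unchanged; unlike in the proof of \lem~\ref{lem:wk-rules-admiss}, no appeal to the hp-admissibility of $\psub$ is needed. (iv) In $\nndl$ and $\ncdl$ the rule $\lin$ merely linearizes branching structure and never destroys an existing ancestry relation, so whenever $\rtable{w'}{u'}$ holds between two labels in the conclusion of $\lin$ it also holds between them in both premises; thus the structural rule permutes past $\lin$ by performing the identical relocation or deletion in each premise.

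The cases demanding the most care --- and the only ones that are not immediate --- are those in which a component acted on by $\ndr$, $\ddr$, or $\cdr$ coincides with an active component of $R$, namely the quantifier rules $\existsri$, $\existsrii$, $\existsl$, $\allli$, $\alllii$, $\allr$ and the domain-shift rule $\doms$, where one must keep track of exactly which signature the moved or deleted variables land in. By observations (ii) and (iii) no variable clash occurs there, so the permutation is still routine; I would display one representative instance, say $\ndr$ permuted above $\doms$ (relocating the variables into the signature already present below the $\doms$ inference, after which $\doms$ discards its own atom-variables as before) together with a quantifier case such as $\ndr$ above $\allli$, and remark that the remaining cases are analogous. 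The arguments for $\ddr$ and $\cdr$ in $\ncdl$ (hence in $\ncd$) are entirely parallel, using the $\cd$ side conditions, which completes the induction.
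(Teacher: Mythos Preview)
Your proposal is correct and follows essentially the same approach as the paper: induction on derivation height, permuting the structural rule upward past the last inference. The paper singles out $\existsri$, $\allli$, and $\lin$ as the non-trivial cases and displays $\allli$ and $\lin$ explicitly, whereas you organize the argument around four general observations (tree structure unchanged, availability only grows under $\ndr$, no freshness violations, ancestry preserved through $\lin$); this is a slightly more systematic packaging of the same reasoning, and your choice of $\allli$ as a representative quantifier case coincides with the paper's.
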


\begin{proof} We argue claim 1 for $\nndl$ as the other claims are similar. The proof is by induction on the height of the given derivation. As the bases cases are trivial, we only consider the inductive step. Moreover, we note that in the inductive step, the only non-trivial cases occur when $\ndr$ is applied after an application of $\existsri$, $\allli$, or $\lin$. We argue a non-trivial $\allli$ case and omit the $\existsri$ case as it is similar, and show the non-trivial $\lin$ case.

$\allli$. As shown in the inference below left, $\ndr$ shifts the variable $y$ from the $u$-component to the $w$-component. As shown below right, we may resolve the case by first applying $\ndr$, and then since $\rable{w}{u}$ and $\rable{u}{v}$ hold due to the side conditions in the proof below left, it follows that $\rable{w}{v}$, meaning that $\allli$ may be applied after $\ndr$.
\begin{flushleft}
\begin{tabular}{c c}
\AxiomC{$\nsa \hol \va; \nsb, \forall x \phi^{\inp} \hor_{w} \hol \vb,y; \nsc \hor_{u} \hol \nsd, \phi(y/x)^{\inp} \hor_{v}$}
\RightLabel{$\allli$}
\UnaryInfC{$\nsa \hol \va; \nsb, \forall x \phi^{\inp} \hor_{w} \hol \vb,y; \nsc \hor_{u} \hol \nsd \hor_{v}$}
\RightLabel{$\ndr$}
\UnaryInfC{$\nsa \hol \va,y; \nsb, \forall x \phi^{\inp} \hor_{w} \hol \vb; \nsc \hor_{u} \hol \nsd \hor_{v}$}
\DisplayProof

&

$\leadsto$
\end{tabular}
\end{flushleft}
\begin{flushright}
\AxiomC{$\nsa \hol \va; \nsb, \forall x \phi^{\inp} \hor_{w} \hol \vb,y; \nsc \hor_{u} \hol \nsd, \phi(y/x)^{\inp} \hor_{v}$}
\RightLabel{IH}
\UnaryInfC{$\nsa \hol \va,y; \nsb, \forall x \phi^{\inp} \hor_{w} \hol \vb; \nsc \hor_{u} \hol \nsd, \phi(y/x)^{\inp} \hor_{v}$}
\RightLabel{$\allli$}
\UnaryInfC{$\nsa \hol \va,y; \nsb, \forall x \phi^{\inp} \hor_{w} \hol \vb; \nsc \hor_{u} \hol \nsd \hor_{v}$}
\DisplayProof
\end{flushright}

$\lin$. As shown in the proof below, $\ndr$ shifts a collection of variables from the $u$-component to the $w$-component:
\begin{center}
\AxiomC{$\nsa\{\va; \nsb\}_{w}\{[\nsc,[\vb,\vc;\nsd]_{u}]_{v}\}_{i}$}
\AxiomC{$\nsa\{\va; \nsb\}_{w}\{[\vb,\vc;\nsd,[\nsc]_{v}]_{u}\}_{i}$}
\RightLabel{$\lin$}
\BinaryInfC{$\nsa\{\va; \nsb\}_{w}\{[\vb,\vc;\nsd]_{u},[\nsc]_{v}\}_{i}$}
\RightLabel{$\ndr$}
\UnaryInfC{$\nsa\{\va,\vc; \nsb\}_{w}\{[\vb;\nsd]_{u},[\nsc]_{v}\}_{i}$}
\DisplayProof
\end{center}
 By the side condition on the $\ndr$ rule in the proof above, we know that $\rable{w}{i}$. It thus follows that in the premises of $\lin$ (shown above) $\rable{w}{u}$ holds since $\rable{i}{u}$ holds in both premises, implying that $\ndr$ may be applied to each premise of $\lin$, as shown below. A single application of $\lin$ gives the desired conclusion.
\begin{center}
\AxiomC{$\nsa\{\va; \nsb\}_{w}\{[\nsc,[\vb,\vc;\nsd]_{u}]_{v}\}_{i}$}
\RightLabel{IH} 
\UnaryInfC{$\nsa\{\va,\vc; \nsb\}_{w}\{[\nsc,[\vb;\nsd]_{u}]_{v}\}_{i}$}

\AxiomC{$\nsa\{\va; \nsb\}_{w}\{[\vb,\vc;\nsd,[\nsc]_{v}]_{u}\}_{i}$}
\RightLabel{IH} 
\UnaryInfC{$\nsa\{\va,\vc; \nsb\}_{w}\{[\vb;\nsd,[\nsc]_{v}]_{u}\}_{i}$}
\RightLabel{$\lin$}
\BinaryInfC{$\nsa\{\va,\vc; \nsb\}_{w}\{[\vb;\nsd]_{u},[\nsc]_{v}\}_{i}$}
\DisplayProof
\end{center}
\end{proof}

\begin{lemma}\label{lem:hp-invert-prop}
If $\ncalc \in \{\nipc,\nnd,\ncd,\ngd,\nndl,\ncdl\}$, then the $\disl$, $\disr$, $\conl$, $\conr$, $\impl$, and $\impr$ rules are hp-invertible in $\ncalc$.
\end{lemma}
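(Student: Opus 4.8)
The plan is to isolate $\impl$ and then treat the remaining five rules by a single induction. For $\impl$, observe that each of its two premises is obtained from its conclusion merely by adding one polarized formula ($\phi^{\outp}$, respectively $\psi^{\inp}$) to the $u$-component, and each such addition is an instance of $\wk$ (the tree underlying the sequent, and hence the side condition $\rable{w}{u}$, is unaffected). So hp-invertibility of $\impl$ is an immediate corollary of the hp-admissibility of $\wk$ (Lemma~\ref{lem:wk-rules-admiss}), requiring no induction. For each of $\disl$, $\disr$, $\conl$, $\conr$, and $\impr$, I would prove hp-invertibility by induction on the height $h$ of the given derivation $\prf$ of the conclusion, carrying out the argument in $\nndl$ so that by Remark~\ref{rmk:fo-over-prop} it simultaneously covers $\nipc$, $\ngd$, and $\nnd$, and remarking that it transfers verbatim to $\ncd$ and $\ncdl$ since the constant-domain side conditions never interact with a passive propositional formula.

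In the base case, the conclusion is an instance of $\idfo$ or $\botl$; since the principal formula of the rule being inverted is a compound formula, it is never one of the atoms or the $\bot^{\inp}$ witnessing the initial sequent, so deleting that formula and inserting its immediate subformulae (or, for $\impr$, the new nesting $[\phi^{\inp},\psi^{\outp}]_{u}$ with a label $u$ chosen fresh) again yields an instance of the same initial rule, of height $1$. In the inductive step, if $\prf$ ends with the rule being inverted applied precisely to the displayed principal formula occurrence, then its premise(s) are exactly what is required, of height $h-1$; in the $\impr$ subcase, if the last rule used a different fresh label, the hp-admissible rule $\lsub$ (Lemma~\ref{lem:sub-admiss}) renames it to the desired $u$ without increasing height. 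Otherwise the principal formula occurrence is a side formula of the last rule $(r)$, and we apply the induction hypothesis to the premise(s) of $(r)$ — each of height $<h$ — to perform the replacement there, then reapply $(r)$, obtaining a derivation of height $\leq h$.

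The points requiring care in the side-formula subcase are all routine given the already-established machinery: (i) if $(r)$ carries a freshness condition ($\existsl$, $\existsrii$, $\alllii$, $\allr$, or $\impr$ itself), reapplying it after the induction hypothesis is legitimate because replacing a compound formula by its immediate subformulae introduces no new free variables and no new labels, so the condition is preserved; (ii) if $(r)$ is a reachability rule or $\doms$ whose applicability depends on reachability or availability, the replacement alters neither the underlying tree nor any signature, so every such side condition still holds; and (iii) if $(r)=\lin$, the inverted formula merely sits inside one of the active components and is carried, unchanged, to a new structural position in each premise, so the induction hypothesis applies there and $\lin$ reapplies without modification. The only mildly delicate bookkeeping is therefore the management of the fresh label in the $\impr$ case — ensuring it is chosen distinct from every label in $\prf$ and reconciled via $\lsub$ when the last rule is itself an $\impr$ — and this is precisely what Lemmas~\ref{lem:sub-admiss} and~\ref{lem:wk-rules-admiss} are for, so no genuinely new difficulty arises.
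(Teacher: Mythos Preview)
Your proposal is correct and follows essentially the same approach as the paper: you dispatch $\impl$ via hp-admissibility of $\wk$, and handle the remaining rules by induction on derivation height, with the base case resolved because the inverted formula is compound (hence never the witness of an initial rule) and the inductive step by applying IH to the premises and reapplying the last rule. One small expository wrinkle: your point~(i) asserts that the replacement ``introduces no new labels,'' which is literally false for $\impr$ inversion (it introduces the fresh nesting label $u$); you do handle this correctly in your final paragraph by choosing $u$ fresh for the entire derivation, but you should adjust the wording of~(i) accordingly so it does not contradict that later remark.
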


\begin{proof} The hp-invertibility of $\impl$ follows from the hp-admissibility of $\wk$ (\lem~\ref{lem:wk-rules-admiss}). We argue the hp-admissibility of the $\impr$ rule by induction on the height of the given derivation, and note that the remaining cases are simple or similar.

\textit{Base case.} Suppose that $\phi \imp \psi^{\outp}$ occurs in an instance of $\idfo$, as shown below left, or an instance of $\botl$, as shown below right.
\begin{center}
\begin{tabular}{c c}
\AxiomC{}
\RightLabel{$\idfo$}
\UnaryInfC{$\ns \hol \Gamma, \phi \imp \psi^{\outp} \hor_{v} \hol p(\vec{\parama})^{\inp} \hor_{w} \hol p(\vec{\parama})^{\outp} \hor_{u}$}
\DisplayProof

&

\AxiomC{}
\RightLabel{$\botl$}
\UnaryInfC{$\nx \hol \Gamma, \phi \imp \psi^{\outp} \hor_{v} \hol  \bot^{\inp} \hor_{w} $}
\DisplayProof
\end{tabular}
\end{center}
 It is simple to verify the invertibility of the $\impr$ rule in these cases as witnessed by the $\idfo$ instance below left, and the $\botl$ instance below right.
\begin{center}
\begin{tabular}{c c}
\AxiomC{}
\RightLabel{$\idfo$}
\UnaryInfC{$\ns \hol [\emptyset; \phi^{\inp}, \psi^{\outp}]_{u} \hor_{v} \hol p(\vec{\parama})^{\inp} \hor_{w} \hol p(\vec{\parama})^{\outp} \hor_{u}$}
\DisplayProof

&

\AxiomC{}
\RightLabel{$\botl$}
\UnaryInfC{$\nx \hol \Gamma, [\emptyset; \phi^{\inp}, \psi^{\outp}]_{u} \hor_{v} \hol  \bot^{\inp} \hor_{w} $}
\DisplayProof
\end{tabular}
\end{center}

\textit{Inductive step.} With the exception of the $\impr$ case, all cases are resolved by invoking IH, and then applying the corresponding rule. For example, suppose that the last rule applied in the given derivation is $\allli$ and is of the form shown below left. As shown below right, the case is resolved by invoking IH and then applying $\allli$ as the path $\rable{w}{u}$ is still present after IH has been applied.
\begin{flushleft}
\begin{tabular}{c c}
\AxiomC{$\nsa \hol \forall x \phi^{\inp}, \psi \imp \chi^{\outp} \hor_{w} \hol  \nsb, \phi(y/x)^{\inp} \hor_{u}$}
\RightLabel{$\allli$}
\UnaryInfC{$\nsa \hol \forall x \phi^{\inp}, \psi \imp \chi^{\outp} \hor_{w} \hol  \nsb \hor_{u}$}
\DisplayProof

&

$\leadsto$
\end{tabular}
\end{flushleft}
\begin{flushright}
\AxiomC{$\nsa \hol \forall x \phi^{\inp}, \psi \imp \chi^{\outp} \hor_{w} \hol  \nsb, \phi(y/x)^{\inp} \hor_{u}$}
\RightLabel{IH}
\UnaryInfC{$\nsa \hol \forall x \phi^{\inp}, [\emptyset; \psi^{\inp}, \chi^{\outp}]_{v} \hor_{w} \hol  \nsb, \phi(y/x)^{\inp} \hor_{u}$}
\RightLabel{$\allli$}
\UnaryInfC{$\nsa \hol \forall x \phi^{\inp}, [\emptyset; \psi^{\inp}, \chi^{\outp}]_{v} \hor_{w} \hol  \nsb \hor_{u}$}
\DisplayProof
\end{flushright}
 If the last rule applied in the given derivation is $\impr$, then either the formula we aim to invert is principal, or it is not. In the latter case, we invoke IH, and then apply the corresponding rule, and in the former case, shown below, the desired conclusion is obtained by taking the proof of the premise.
\begin{center}
\begin{tabular}{c c c}
\AxiomC{$\nx \hol \Gamma, \nest{\emptyset;  \phi^{\inp}, \psi^{\outp}}{u} \hor_{w} $}
\RightLabel{$\impr$}
\UnaryInfC{$\nx \hol \Gamma, \phi \imp \psi^{\outp} \hor_{w} $}
\DisplayProof

&

$\leadsto$

&

\AxiomC{$\nx \hol \Gamma, \nest{\emptyset; \phi^{\inp}, \psi^{\outp}}{u} \hor_{w} $}
\DisplayProof
\end{tabular}
\end{center} 
\end{proof}

\begin{lemma}\label{lem:hp-invert-fo}
If $\ncalc \in \{\nnd,\ncd,\nndl,\ncdl\}$, then the $\doms$, $\existsl$, $\existsri$, $\existsrii$, $\allli$, $\alllii$, and $\allr$ rules are hp-invertible in $\ncalc$.
\end{lemma}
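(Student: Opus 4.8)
The plan is to split the seven rules into two groups according to the shape of their premises. For $\doms$, $\existsri$, $\existsrii$, $\allli$, and $\alllii$, the premise is obtained from the conclusion merely by adjoining variables to a signature and/or adjoining a single polarized formula (one of $\phi(y/x)^{\inp}$, $\phi(y/x)^{\outp}$) to a component; hence hp-invertibility of each of these is an immediate consequence of the hp-admissibility of $\wk$ and $\wkv$ (Lemmas~\ref{lem:wk-rules-admiss} and~\ref{lem:wkv-ctrv-admiss}). Concretely, given a proof of the conclusion one applies $\wkv$ finitely many times to introduce the extra signature variables (needed for $\doms$, $\existsrii$, $\alllii$) and $\wk$ once to introduce the extra formula (needed for $\existsri$, $\existsrii$, $\allli$, $\alllii$), obtaining a proof of the premise of no greater height. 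The freshness side conditions on $\existsrii$ and $\alllii$ ensure the eigenvariable $y$ is free for $x$ in $\phi$, so $\phi(y/x)$ is literally the formula that appears and no bound-variable renaming affecting the conclusion is triggered; the reachability side conditions are irrelevant here, since adding signature variables and formulae leaves the underlying tree $tr(\nsa)$ unchanged.

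The remaining rules $\existsl$ and $\allr$ genuinely modify the principal position --- $\exists x \phi^{\inp}$ gets replaced by $\phi(y/x)^{\inp}$ together with a new signature variable, and $\forall x \phi^{\outp}$ gets replaced by a fresh nested component $[y; \phi(y/x)^{\outp}]_{u}$ --- so these I would handle by induction on the height of a derivation $\prf$ of the conclusion, in the style of the $\impr$ case of Lemma~\ref{lem:hp-invert-prop}. In the base case the conclusion is an instance of $\idfo$ or $\botl$; since neither $\exists x \phi^{\inp}$ nor $\forall x \phi^{\outp}$ can be principal in an initial rule, performing the prescribed replacement (and, for $\existsl$, growing the signature) again yields an instance of the same initial rule. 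In the inductive step, if the formula to be inverted is not principal in the last rule of $\prf$ (this includes the $\lin$ case), I apply the induction hypothesis to the premise(s) and reapply that rule --- every side condition survives because only signature variables and formulae are added, which preserves all $\rable{}{}$ paths. If the last rule of $\prf$ is $\existsl$ (resp.\ $\allr$) applied to the very formula being inverted, then its premise is --- up to the choice of eigenvariable (resp.\ eigenvariable and eigenlabel) --- exactly the premise we seek, so we conclude by a final application of the hp-admissible rule $\psub$ (resp.\ $\psub$ and $\lsub$) from Lemma~\ref{lem:sub-admiss}.

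The one genuinely delicate point, which I expect to be the main obstacle, is the management of eigenvariables and eigenlabels in the $\existsl$ and $\allr$ inductions: the variable $y$ (and label $u$) fixed in the instance under inversion may collide with a fresh name introduced higher up in $\prf$. I would neutralize this at the outset by using the hp-admissibility of $\psub$ and $\lsub$ (Lemma~\ref{lem:sub-admiss}) to rename so that $y$ and $u$ do not occur anywhere in $\prf$; the induction, in particular the principal-formula case where $\psub$/$\lsub$ is invoked again to align names, then goes through without clashes. As in the earlier lemmas, I would carry out the argument explicitly for $\nndl$: the case of $\nnd$ follows by disregarding the $\lin$ case, and the cases of $\ncd$ and $\ncdl$ by the routine modifications to the first-order side conditions.
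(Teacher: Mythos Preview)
Your proposal is correct and follows essentially the same approach as the paper: the paper handles $\doms$, $\existsri$, $\existsrii$, $\allli$, $\alllii$ by the hp-admissibility of $\wk$ and $\wkv$, and defers $\existsl$ and $\allr$ to an induction on height ``argued similarly to Lemma~\ref{lem:hp-invert-prop}.'' Your write-up is in fact more detailed than the paper's --- in particular your explicit treatment of eigenvariable and eigenlabel collisions via $\psub$ and $\lsub$ is a point the paper leaves entirely implicit.
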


\begin{proof} The hp-invertibility of $\doms$, $\existsri$, $\existsrii$, $\allli$, and $\alllii$ follows from the hp-admissibility of $\wk$ (\lem~\ref{lem:wk-rules-admiss}) and $\wkv$ (\lem~\ref{lem:wkv-ctrv-admiss}). The $\existsl$ and $\allr$ cases are argued similarly to \lem~\ref{lem:hp-invert-prop} above.
\end{proof}

\begin{lemma}\label{lem:ctrl-hp-admiss}
If $\ncalc \in \{\nipc,\nnd,\ncd,\ngd,\nndl,\ncdl\}$, then the $\ctrl$ rule is hp-admissible in $\ncalc$.
\end{lemma}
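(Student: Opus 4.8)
The plan is to prove the lemma by induction on the height of the given derivation, carrying out the argument for $\nndl$; by \rmk~\ref{rmk:fo-over-prop} and the reasoning used in the preceding lemmas, the cases for $\nipc$, $\ngd$, and $\nnd$ follow by dropping the first-order cases or the $\lin$ case, and the cases for $\ncd$ and $\ncdl$ are obtained by the analogous modifications. Write the premise of $\ctrl$ as $\nsa\{\nsb, \phi^{\inp}, \phi^{\inp}\}_{w}$; we must produce a derivation of $\nsa\{\nsb, \phi^{\inp}\}_{w}$ of no greater height. For the base case, if $\nsa\{\nsb, \phi^{\inp}, \phi^{\inp}\}_{w}$ is an instance of $\idfo$ or $\botl$, then so is $\nsa\{\nsb, \phi^{\inp}\}_{w}$, as these initial rules only require a single occurrence of the relevant (atomic) input formula.

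For the inductive step, let $(r)$ be the last rule applied. If no occurrence of the contracted $\phi^{\inp}$ is \emph{principal} in $(r)$ — in particular if $(r)$ is one of the propagation/reachability rules $\impl$, $\allli$, or $\alllii$, all of which merely copy their principal input formula rather than decomposing it — then $\ctrl$ permutes above $(r)$: we invoke the IH on the premise(s) of $(r)$ and then reapply $(r)$. For the rules carrying a freshness condition ($\existsl$, $\existsrii$, $\alllii$, $\allr$) no variable or label clash is created, since $\ctrl$ only deletes material, so these permutations are routine. The interesting cases are those in which one occurrence of $\phi^{\inp}$ is principal in $(r)$, and here we exploit the hp-invertibility of the logical rules (\lem~\ref{lem:hp-invert-prop} and \lem~\ref{lem:hp-invert-fo}): if $\phi = \psi \land \chi$ and $(r) = \conl$, we invert $\conl$ on the surviving copy of $(\psi \land \chi)^{\inp}$ in the premise, obtaining a derivation of the same height of $\nsa\{\nsb, \psi^{\inp}, \psi^{\inp}, \chi^{\inp}, \chi^{\inp}\}_{w}$, apply the IH twice to contract $\psi^{\inp}$ and $\chi^{\inp}$, and then apply $\conl$; if $\phi = \psi \lor \chi$ and $(r) = \disl$, we invert $\disl$ on the surviving copy in each of the two premises to obtain derivations of $\nsa\{\nsb, \psi^{\inp}, \psi^{\inp}\}_{w}$ and $\nsa\{\nsb, \chi^{\inp}, \chi^{\inp}\}_{w}$, contract each via the IH, and reapply $\disl$.

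I expect the main obstacle to be the case $\phi = \exists x \psi$ with $(r) = \existsl$, where the interaction of $\ctrl$ with the freshness condition forces some extra bookkeeping. Here the premise is $\nsa\{\va, y; \psi(y/x)^{\inp}, (\exists x \psi)^{\inp}\}_{w}$ with $y$ fresh; we invert $\existsl$ (\lem~\ref{lem:hp-invert-fo}) on the surviving $(\exists x \psi)^{\inp}$ using a further fresh variable $z$, obtaining a derivation of $\nsa\{\va, y, z; \psi(y/x)^{\inp}, \psi(z/x)^{\inp}\}_{w}$ of the same height. We then apply the hp-admissible substitution rule $\psub$ (\lem~\ref{lem:sub-admiss}) to replace $z$ by $y$, the hp-admissible rule $\ctrv$ (\lem~\ref{lem:wkv-ctrv-admiss}) to contract the duplicated $y$ in the signature, and the IH to contract the duplicated $\psi(y/x)^{\inp}$; a final application of $\existsl$ (its freshness condition still being met, since $y$ does not occur in $\nsa\{\va; (\exists x\psi)^{\inp}\}_{w}$) delivers $\nsa\{\va; (\exists x \psi)^{\inp}\}_{w}$, and every step preserves height. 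Collecting all the cases completes the induction.
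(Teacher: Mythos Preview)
Your proposal is correct and follows essentially the same approach as the paper's own proof: induction on height (argued for $\nndl$), trivial base case, free permutation when the contracted formula is not principal, and use of hp-invertibility (\lem~\ref{lem:hp-invert-prop}, \lem~\ref{lem:hp-invert-fo}) together with $\ctrv$ and the IH in the principal cases, with $\existsl$ singled out as the representative non-trivial instance. Your treatment of the $\existsl$ case is in fact slightly more explicit than the paper's, which inverts directly to $\va,y,y;\phi(y/x)^{\inp},\phi(y/x)^{\inp}$ without spelling out the intermediate $\psub$ step you include.
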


\begin{proof} By induction on the height of the given derivation for $\nndl$.

\textit{Base case.} Any application of $\ctrl$ to $\idfo$ or $\botl$ yields another instance of the rule, showing the hp-admissibility of $\ctrl$ in these cases.

\textit{Inductive step.} Let us suppose that our derivation ends with an application of a rule $\ru$ followed by an application of $\ctrl$. If neither of the contracted formulae in $\ctrl$ are the principal formula of $\ru$, then we can freely permute $\ctrl$ above $\ru$ to obtain the same conclusion, but with the height of $\ctrl$ decreased by one. Therefore, let us assume that a principal formula of $\ru$ serves as one of the contracted formulae in $\ctrl$. We show the case where $\ru$ is $\existsl$ and note that the remaining cases are similar. The $\existsl$ case is resolved as shown below:
\begin{center}
\begin{tabular}{c c c}
\AxiomC{$\nx \hol \va,y; \phi(\parama/x)^{\inp}, \exists x \phi^{\inp}  \hor_{w} $}
\RightLabel{$\existsl$}
\UnaryInfC{$\nx \hol \va; \exists x \phi^{\inp}, \exists x \phi^{\inp}\hor_{w} $}
\RightLabel{$\ctrl$}
\UnaryInfC{$\nx \hol \va; \exists x \phi^{\inp}\hor_{w} $}
\DisplayProof

&

$\leadsto$

&

\AxiomC{$\nx \hol \va,y; \phi(y/x)^{\inp}, \exists x \phi^{\inp}  \hor_{w} $}
\RightLabel{\lem~\ref{lem:hp-invert-fo}}
\UnaryInfC{$\nx \hol \va,y,y; \phi(\parama/x)^{\inp}, \phi(\parama/x)^{\inp}  \hor_{w} $}
\RightLabel{$\ctrv$}
\UnaryInfC{$\nx \hol \va,y; \phi(\parama/x)^{\inp}, \phi(\parama/x)^{\inp}  \hor_{w} $}
\RightLabel{IH}
\UnaryInfC{$\nx \hol \va; \phi(\parama/x)^{\inp}  \hor_{w} $}
\RightLabel{$\existsl$}
\UnaryInfC{$\nx \hol \va; \exists x \phi^{\inp}\hor_{w} $}
\DisplayProof
\end{tabular}
\end{center}
\end{proof}

\begin{lemma}\label{lem:lft-admiss}
If $\ncalc \in \{\nipc,\nnd,\ncd,\ngd,\nndl,\ncdl\}$, then the $\lft$ rule is hp-admissible in $\ncalc$.
\end{lemma}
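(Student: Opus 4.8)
The plan is to prove the claim for $\nndl$ by induction on the height of the derivation of the premise $\nsa\{\nsb\}_{w}\{\nsc,\nsd^{\inp}\}_{u}$ of the $\lft$ inference; by \rmk~\ref{rmk:fo-over-prop} and the conventions fixed at the start of this section, the argument then specializes to the remaining five calculi by discarding the first-order cases and/or the $\lin$ case. The single fact that drives nearly every case is that $\lft$ \emph{leaves the tree of a nested sequent unchanged}: it merely relocates a multiset of input formulae between two fixed components, moving no signature variables and no nestings. Hence the reachability relation $\rable{}{}$, the availability of variables (\dfn~\ref{def:available}), and every signature are untouched, so that all $\dag$ side conditions and all freshness conditions are automatically preserved when $\lft$ is permuted upward.

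For the base case, if the premise is an instance of $\botl$, the conclusion of $\lft$ is again an instance of $\botl$ (with the moved $\bot^{\inp}$, if any, now sitting at $w$). If it is an instance $\ns'\{p(\vec{x})^{\inp}\}_{w'}\{p(\vec{x})^{\outp}\}_{u'}$ of $\idfo$ with $\rable{w'}{u'}$, then either $p(\vec{x})^{\inp}$ is not among the relocated formulae (so the instance survives verbatim), or $u=w'$ and $p(\vec{x})^{\inp}$ is moved to $w$, in which case $\rtable{w}{u}$ gives $\rable{w}{u}$, whence $\rable{w}{u'}$ by transitivity of $\rable{}{}$, and the conclusion is again an instance of $\idfo$.

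For the inductive step, let the derivation end with a rule $(r)$ above $\lft$. If no active formula of $(r)$ lies in the relocated multiset $\nsd^{\inp}$, apply the induction hypothesis to the premise(s) of $(r)$ and re-apply $(r)$; its side conditions survive by the observation above, and any freshness condition survives because $\lft$ creates no new variables or labels (renaming the eigenvariable of $(r)$ first with the hp-admissible $\psub$ of \lem~\ref{lem:sub-admiss}, as in the proof of \lem~\ref{lem:wk-rules-admiss}, should a clash need to be avoided). If a principal formula of $(r)$ belongs to $\nsd^{\inp}$, it is an input formula at $u$, so $(r)\in\{\disl,\conl,\existsl,\impl,\allli,\alllii,\doms\}$; for $\disl,\conl,\existsl$ we call the induction hypothesis with the relocated multiset enlarged by the immediate input subformulae that $(r)$ introduces and then re-apply $(r)$ inside $w$, while for the reachability rules $\impl,\allli,\alllii$ (whose principal formula persists into the premises) we call the induction hypothesis as is, re-apply $(r)$ inside $w$, and recover its side condition by composing $\rtable{w}{u}$ with the original reachability condition via transitivity.

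The two cases I expect to need genuine care are $\doms$ and $\lin$. When $(r)=\doms$ acts on the very atom $p(\vec{x})^{\inp}$ that is being relocated --- so that its premise carries $\vec{x}$ in the signature of $u$ --- moving $p(\vec{x})^{\inp}$ up to $w$ by the induction hypothesis leaves $\vec{x}$ stranded in $u$'s signature; I would repair this by then invoking the hp-admissible $\ndr$ rule (\lem~\ref{lem:nd-cd-admiss}) to shift $\vec{x}$ up to $w$ and closing with a fresh application of $\doms$ at $w$, which reabsorbs $\vec{x}$ while keeping the total height bounded (for $\ncdl$ one instead simply deletes the stranded variables with the hp-admissible $\cdr$). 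For $\lin$, the premises encode a different tree than the conclusion, so one must check that $\lft$ is still applicable to them; this holds because $\lin$-restructuring only lengthens a root-to-node path by one edge and never destroys an ancestor--descendant relationship, so $\rtable{w}{u}$ in the conclusion of $\lin$ entails $\rtable{w}{u}$ in both premises --- one then applies $\lft$ to each premise and re-applies $\lin$, whose relevant parent--child structure $\lft$ has not disturbed. Everything else is a routine rule permutation of the kind already carried out in the earlier admissibility lemmas.
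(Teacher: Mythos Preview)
Your overall strategy matches the paper's: induction on derivation height for $\nndl$, permuting $\lft$ above each rule, and relying on the fact that $\lft$ leaves the tree structure, reachability, availability, and signatures untouched. Your base case, your treatment of $\impl$ and $\allli$, and your handling of $\lin$ are all correct and essentially identical to what the paper does.

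There is, however, a genuine gap in your treatment of $\existsl$. You group it with $\disl$ and $\conl$ as a routine case, but it is not: the premise of $\existsl$ adds a fresh variable $y$ to the \emph{signature} of the $u$-component, not merely an input subformula. After you invoke the induction hypothesis to relocate $\phi(y/x)^{\inp}$ from $u$ to $w$, the variable $y$ remains stranded in $u$'s signature, so you cannot ``re-apply $\existsl$ inside $w$'' --- that application would need $y$ in $w$'s signature. This is exactly the phenomenon you correctly diagnose for $\doms$, and the repair is the same: first invoke the hp-admissible $\ndr$ rule (\lem~\ref{lem:nd-cd-admiss}) to shift $y$ from $u$ up to $w$, and only then re-apply $\existsl$ at $w$. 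The paper in fact singles out $\existsl$ (alongside $\lin$) as the non-trivial case to display, precisely for this reason. A related sub-case occurs for $\alllii$: when the component carrying the fresh $y$ lies strictly between $w$ and $u$, your ``compose with $\rtable{w}{u}$ via transitivity'' argument does not yield the required $\rable{}{}$-relation to the relocated principal formula, and again $\ndr$ is needed to push $y$ upward before re-applying the rule.
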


\begin{proof} We prove the result by induction on the height of the given derivation for $\nndl$.

\textit{Base case.} One can easily verify that any application of $\lft$ to $\botl$ yields another instance of the rule, resolving the case. If $\lft$ is applied to an instance of $\idfo$ and no principal formula of $\idfo$ is active in $\lft$, then the conclusion will also be an instance of $\idfo$. In the alternative case, our inference will be of the shape shown below. Observe that since $\rable{w}{u}$ and $\rable{u}{v}$ hold, we know that $\rable{w}{v}$ holds, showing that the conclusion is an instance of $\idfo$.
\begin{center}
\AxiomC{}
\RightLabel{$\idfo$}
\UnaryInfC{$\nsa\{\nsb\}_{w}\{\nsc_{1}, \nsc_{2}^{\inp}, p(\vec{x})^{\inp}\}_{u}\{\nsd,p(\vec{x})^{\outp}\}_{v}$}
\RightLabel{$\lft$}
\UnaryInfC{$\nsa\{\nsb,\nsc_{2}^{\inp}, p(\vec{x})^{\inp}\}_{w}\{\nsc_{1}\}_{u}\{\nsd,p(\vec{x})^{\outp}\}_{v}$}
\DisplayProof
\end{center}

\textit{Inductive step.} We show how $\lft$ permutes above the $\existsl$ and $\lin$ rules as the remaining cases are simple or similar.

$\existsl$. In the non-trivial $\existsl$ case the $\lft$ rule shifts the principal formula of the $\existsl$ inference as shown below left. The case is resolved by applying the hp-admissibility of $\ndr$ (\lem~\ref{lem:nd-cd-admiss} above), followed by $\lft$, and then $\existsl$ as shown below right:
\begin{flushleft}
\begin{tabular}{c c}
\AxiomC{$\nsa \hol \va; \nsb \hor_{u} \hol  \vb, y; \nsc, \nsd^{\inp},  \phi(y/x)^{\inp}  \hor_{w} $}
\RightLabel{$\existsl$}
\UnaryInfC{$\nsa \hol \va; \nsb \hor_{u} \hol  \vb; \nsc, \nsd^{\inp},  \exists x \phi^{\inp}  \hor_{w} $}
\RightLabel{$\lft$}
\UnaryInfC{$\nsa \hol \va; \nsb, \nsd^{\inp},  \exists x \phi^{\inp} \hor_{u} \hol  \vb; \nsc,  \hor_{w} $}
\DisplayProof

&

$\leadsto$
\end{tabular}
\end{flushleft}
\begin{flushright}
\AxiomC{$\nsa \hol \va; \nsb \hor_{u} \hol  \vb,y; \nsc, \nsd^{\inp},  \phi(y/x)^{\inp}  \hor_{w} $}
\RightLabel{$\ndr$}
\UnaryInfC{$\nsa \hol \va,y; \nsb \hor_{u} \hol  \vb; \nsc, \nsd^{\inp},  \phi(y/x)^{\inp}  \hor_{w} $}
\RightLabel{IH}
\UnaryInfC{$\nsa \hol \va,y; \nsb,\nsd^{\inp},  \phi(y/x)^{\inp}  \hor_{u} \hol  \vb; \nsc \hor_{w} $}
\RightLabel{$\existsl$}
\UnaryInfC{$\nsa \hol \va; \nsb,\nsd^{\inp},  \exists x \phi^{\inp}  \hor_{u} \hol  \vb; \nsc \hor_{w} $}
\DisplayProof
\end{flushright}

$\lin$. We show one of the non-trivial $\lin$ cases as the other cases are simple or similar. As can be seen below left, the $\lft$ rule shifts data from the $v$-component to the $w$-component. The case is resolved by applying $\lft$ to the premises of $\lin$ and then applying $\lin$ as shown below right.
\begin{flushleft}
\begin{tabular}{c c}
\AxiomC{$\nsa\{\nsd\}_{w}\{[\nsb,\nsb^{\inp},[\nsc]_{u}]_{v}\}_{i}$}
\AxiomC{$\nsa\{\nsd\}_{w}\{[\nsc,[\nsb,\nsb^{\inp}]_{v}]_{u}\}_{i}$}
\RightLabel{$\lin$}
\BinaryInfC{$\nsa\{\nsd\}_{w}\{[\nsb,\nsb^{\inp}]_{v},[\nsc]_{u}\}_{i}$}
\RightLabel{$\lft$}
\UnaryInfC{$\nsa\{\nsd, \nsb^{\inp}\}_{w}\{[\nsb]_{v},[\nsc]_{u}\}_{i}$}
\DisplayProof

&

$\leadsto$
\end{tabular}
\end{flushleft}
\begin{flushright}
\AxiomC{$\nsa\{\nsd\}_{w}\{[\nsb,\nsb^{\inp},[\nsc]_{u}]_{v}\}_{i}$}
\RightLabel{IH}
\UnaryInfC{$\nsa\{\nsd,\nsb^{\inp}\}_{w}\{[\nsb,[\nsc]_{u}]_{v}\}_{i}$}

\AxiomC{$\nsa\{\nsd\}_{w}\{[\nsc,[\nsb,\nsb^{\inp}]_{v}]_{u}\}_{i}$}
\RightLabel{IH}
\UnaryInfC{$\nsa\{\nsd,\nsb^{\inp}\}_{w}\{[\nsc,[\nsb]_{v}]_{u}\}_{i}$}
\RightLabel{$\lin$}
\BinaryInfC{$\nsa\{\nsd, \nsb^{\inp}\}_{w}\{[\nsb]_{v},[\nsc]_{u}\}_{i}$}
\DisplayProof
\end{flushright}
\end{proof}

\begin{lemma}\label{lem:mrg-admiss}
If $\ncalc \in \{\nipc,\nnd,\ncd,\ngd,\nndl,\ncdl\}$, then the $\mrg$ rule is hp-admissible in $\ncalc$.
\end{lemma}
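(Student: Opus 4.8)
The plan is to establish hp-admissibility of $\mrg$ by induction on the height of the given derivation for $\nndl$, permuting the $\mrg$ inference upward past the last rule $\ru$ of the derivation; the other calculi in the statement are then handled by omitting the first-order cases and/or the $\lin$ case, exactly as in the preceding lemmas. The organizing observation is that $\mrg$ merely deletes the label $u$, relocating its signature into $\va$, its polarized formulae into $\nsb$, and its sub-nestings so that they become nestings of the $w$-component; hence every path in the tree of the sequent is either unchanged or shortened by one step. Three consequences follow and carry the whole argument: the reachability relation $\rable{}{}$ is monotone under $\mrg$ (every reachability fact of the premise, with occurrences of $u$ replaced by $w$, persists in the conclusion), availability of variables is preserved, and nothing that was fresh in an inference of the derivation becomes non-fresh, since $\mrg$ introduces no new label or variable. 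In particular, unlike the $\wk$ case of \lem~\ref{lem:wk-rules-admiss}, no auxiliary renaming via $\psub$ or $\lsub$ is needed.

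The base case is immediate: applying $\mrg$ to an instance of $\botl$ or $\idfo$ yields another instance, since a $\bot^{\inp}$ or a principal atom that was in the $u$-component now sits in the $w$-component, and any witnessing $\rable{w'}{u'}$ for $\idfo$ still holds by the monotonicity just noted. For the inductive step, most cases are routine: when $\ru$ acts within a component disjoint from $w$ and $u$, strictly inside the subtree rooted at $u$, or within the $w$-component, I would apply the induction hypothesis to the premise(s) of $\ru$ to carry out the merge there, and then reapply $\ru$ (within the enlarged $w$-component where relevant), using the three observations above to see that the side conditions of $\ru$ survive. The freshness-constrained rules $\existsl$, $\existsrii$, $\alllii$, and $\allr$ pose no extra difficulty for the same reason. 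I would also note that the merged component $[\vb;\nsc]_u$ is never the nesting freshly introduced by an $\impr$ or $\allr$ inference, as that nesting does not occur in the conclusion of the inference; so $\impr$ and $\allr$ fall under the routine cases as well.

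The main obstacle is the $\lin$ case, the only genuinely new one. Writing $\lin$'s conclusion as $\nsa'\{[\Theta]_v,[\Lambda]_u\}_z$ with premises $\nsa'\{[\Theta,[\Lambda]_u]_v\}_z$ and $\nsa'\{[\Lambda,[\Theta]_v]_u\}_z$, I would split on where the parent-child pair being merged by $\mrg$ sits. If it is disjoint from $\{z,v,u\}$, or lies inside $\Theta$ or $\Lambda$, I apply the induction hypothesis to both premises of $\lin$ and reapply $\lin$. If the merged component is the child $v$ of $z$ (symmetrically, $u$ of $z$), then it already appears one level deeper in the left premise $\nsa'\{[\Theta,[\Lambda]_u]_v\}_z$, so a single application of the induction hypothesis to that premise merges $v$ into $z$ and simultaneously relocates $u$ so as to become a child of $z$ — which is precisely the target sequent (and the height strictly decreases). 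Finally, if $z$ itself is the child being merged into its parent $z'$, I apply the induction hypothesis to both premises of $\lin$ (merging $z$ into $z'$ in each); the two resulting sequents have exactly the left- and right-premise shapes of a $\lin$ inference with parent $z'$, so a single application of $\lin$ yields the target. In every sub-case the height does not increase, which completes the induction.
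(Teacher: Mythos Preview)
Your proposal is correct and follows essentially the same approach as the paper: an induction on height for $\nndl$, permuting $\mrg$ upward, with the key $\lin$ sub-case (merging one of the two linearized siblings into the parent) resolved by discarding one premise and applying the induction hypothesis to the other. Your case split for $\lin$ is more exhaustive than the paper's (which displays only that key sub-case), and your organizing observations about preservation of $\rable{}{}$, availability, and freshness make explicit what the paper leaves implicit; one tiny wording issue is that the case where the merge parent is $z$ but the merged child is a sibling of $v,u$ (other than $v,u$ themselves) is not literally ``disjoint from $\{z,v,u\}$,'' though it is of course handled by the same IH-on-both-premises-then-$\lin$ argument.
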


\begin{proof} We prove the result by induction on the height of the given derivation for $\nndl$.

\textit{Base case.} Applying $\mrg$ to $\botl$ yields another instance of the rule, showing that the claim holds in this case. Therefore, let us consider an application of $\mrg$ to an instance of $\idfo$. Note that we only consider a non-trivial case below, which occurs when $\mrg$ is applied to a component containing a principal formula of $\idfo$; all remaining cases are similar.

Let us suppose that our instance of $\idfo$ is as shown below left. By the side condition on $\idfo$, we know that $\rable{w}{u}$ and $\rable{u}{v}$. By the former fact, it follows that the end sequent in the proof shown below is an instance of $\idfo$.
\begin{center}
\AxiomC{}
\RightLabel{$\idfo$}
\UnaryInfC{$\nsa \hol p(\vec{x})^{\inp} \hor_{w} \hol \va; \nsb,  \nest{\vb; \nsc, p(\vec{x})^{\outp}}{v} \hor_{u}$}
\RightLabel{$\mrg$}
\UnaryInfC{$\nsa \hol p(\vec{x})^{\inp} \hor_{w} \hol \va,\vb; \nsb, \nsc, p(\vec{x})^{\outp} \hor_{u}$}
\DisplayProof
\end{center}

\textit{Inductive step.} With the exception of the $\impl$, $\existsri$, $\existsrii$, $\allli$, $\alllii$, and $\lin$ cases, all other cases are easily resolved by invoking IH (i.e. applying $\mrg$) and then applying the corresponding rule. We show a non-trivial $\allli$ and $\lin$ case below, omitting the other cases as they are simple or similar. 

$\allli$. For the $\allli$ case, we suppose that $y$ is available for $v$. Therefore, since $\rable{w}{u}\rable{}{v}$ holds by the side condition on $\allli$, we have that $y$ is available for $u$. We may permute $\mrg$ and $\allli$ as shown below right.
\begin{flushleft}
\begin{tabular}{c c}
\AxiomC{$\nsa \hol \forall x \phi^{\inp} \hor_{w} \hol \va; \nsb, [\vb; \nsc, \phi(\parama/x)]_{v} \hor_{u}$}
\RightLabel{$\allli$}
\UnaryInfC{$\nsa \hol \forall x \phi^{\inp} \hor_{w} \hol \va; \nsb, [\vb; \nsc]_{v} \hor_{u}$}
\RightLabel{$\mrg$}
\UnaryInfC{$\nsa \hol \forall x \phi^{\inp} \hor_{w} \hol \va, \vb; \nsb, \nsc \hor_{u}$}
\DisplayProof

&

$\leadsto$
\end{tabular}
\end{flushleft}
\begin{flushright}
\AxiomC{$\nsa \hol \forall x \phi^{\inp} \hor_{w} \hol \va; \nsb, [\vb; \nsc, \phi(\parama/x)]_{v} \hor_{u}$}
\RightLabel{IH}
\UnaryInfC{$\nsa \hol \forall x \phi^{\inp} \hor_{w} \hol \va, \vb; \nsb, \nsc, \phi(\parama/x) \hor_{u}$}
\RightLabel{$\allli$}
\UnaryInfC{$\nsa \hol \forall x \phi^{\inp} \hor_{w} \hol \va, \vb; \nsb, \nsc \hor_{u}$}
\DisplayProof
\end{flushright}

$\lin$. For the $\lin$ case, we suppose that the $w$- and $v$-components are fused via an application of $\mrg$. We may derive the desired conclusion by applying IH (i.e. $\mrg$) to the left premise of $\lin$. 
\begin{flushleft}
\begin{tabular}{c c c}
\AxiomC{$\nsa\{\va;\nsb, [\vb; \nsc,[\nsd]_{u}]_{v}\}_{w}$}
\AxiomC{$\nsa\{\va;\nsb, [\nsd,[\vb; \nsc]_{v}]_{u}\}_{w}$}
\RightLabel{$\lin$}
\BinaryInfC{$\nsa\{\va;\nsb, [\vb; \nsc]_{v},[\nsd]_{u}\}_{w}$}
\RightLabel{$\mrg$}
\UnaryInfC{$\nsa\{\va,\vb;\nsb,\nsc,[\nsd]_{u}\}_{w}$}
\DisplayProof

&

$\leadsto$
\end{tabular}
\end{flushleft}
\begin{flushright}
\AxiomC{$\nsa\{\va;\nsb, [\vb; \nsc,[\nsd]_{u}]_{v}\}_{w}$}
\RightLabel{IH}
\UnaryInfC{$\nsa\{\va,\vb;\nsb,\nsc,[\nsd]_{u}\}_{w}$}
\DisplayProof
\end{flushright}
\end{proof}

We now argue that the $\ex$ rule is hp-admissible in $\nipc$, $\nnd$, and $\ncd$, while being strictly admissible in $\ngd$, $\nnd$, and $\ncd$. This discrepancy relies on the fact that to permute $\ex$ above $\lin$, we require two subsequent applications of $\lin$ to derive the same conclusion, thus potentially growing the size of the derivation.

\begin{lemma}\label{lem:ex-admiss} (1) The $\ex$ rule is hp-admissible in $\nipc$, $\nnd$, and $\ncd$; (2) The $\ex$ rule is admissible in $\ngd$, $\nndl$, and $\ncdl$.
\end{lemma}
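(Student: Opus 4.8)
The plan is to prove both parts at once by induction on the height of the derivation of the premise of $\ex$, permuting the $\ex$ inference upward through the derivation; the distinction between (1) and (2) will surface only in the interaction of $\ex$ with $\lin$. Read bottom-up, $\ex$ subdivides an edge $w \to v$ in the tree of a nested sequent by inserting a fresh node with empty signature and empty flat part between $w$ and $v$. The observation that drives all the permutations is that such a subdivision can only \emph{extend} the relation $\rable{}{}$ and never removes a formula from a component or a variable from a signature; hence every side condition of our calculi -- the $\dag$-conditions on $\idfo$ and $\impl$, the availability conditions on the quantifier rules, the reachability conditions $\dag_{1},\dots,\dag_{6}$, and the freshness conditions -- survives an application of $\ex$. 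The base case is then immediate: if the premise of $\ex$ is an instance of $\idfo$ (or $\id$) or of $\botl$, the conclusion is again an instance of the same initial rule.

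For the inductive step, suppose the derivation ends with a rule $(r)$. Among the rules of our calculi, the only ones that alter the underlying tree are $\impr$ and $\allr$, which delete a \emph{leaf} node when read top-down, and $\lin$, which rearranges the two edges leaving a branching node. If $(r)\neq\lin$, then the edge subdivided by $\ex$ is present in every premise of $(r)$ -- for $\impr$ and $\allr$ this is because the deleted leaf does not occur in the conclusion, hence is not an endpoint of the $\ex$-edge -- so we apply the induction hypothesis to each premise of $(r)$, choosing the fresh node to avoid the labels mentioned by $(r)$, and then re-apply $(r)$; all side conditions still hold by the observation above, and since no $\lin$ is involved the reconstruction does not increase the height. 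As none of $\nipc$, $\nnd$, $\ncd$ contains $\lin$, this already proves part (1).

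The case $(r)=\lin$ is the crux. Write the instance of $\lin$ with conclusion $\nsa\{[\nsb]_{v},[\nsc]_{u}\}_{w}$ and premises $\nsa\{[\nsb,[\nsc]_{u}]_{v}\}_{w}$ and $\nsa\{[\nsc,[\nsb]_{v}]_{u}\}_{w}$, derivable with heights $h_{1}$ and $h_{2}$. If $\ex$ subdivides an edge other than $w\to v$ or $w\to u$, that edge occurs in both premises, so we invoke the induction hypothesis on each and re-apply $\lin$ once, obtaining admissibility though not height-preservation. If $\ex$ subdivides $w\to v$ by a fresh empty node $o$ (the case $w\to u$ being symmetric), the target is $\nsa\{[\nsc]_{u},[\emptyset; [\nsb]_{v}]_{o}\}_{w}$, which we derive by two nested applications of $\lin$: first linearize the children $o$ and $u$ of $w$, with premises $\nsa\{[\emptyset; [\nsb]_{v},[\nsc]_{u}]_{o}\}_{w}$ and $\nsa\{[\nsc,[\emptyset; [\nsb]_{v}]_{o}]_{u}\}_{w}$; then linearize, in the first of these, the children $v$ and $u$ of $o$, with premises $\nsa\{[\emptyset; [\nsb,[\nsc]_{u}]_{v}]_{o}\}_{w}$ and $\nsa\{[\emptyset; [\nsc,[\nsb]_{v}]_{u}]_{o}\}_{w}$. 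Each of the three resulting leaves is obtained from $\nsa\{[\nsb,[\nsc]_{u}]_{v}\}_{w}$ or from $\nsa\{[\nsc,[\nsb]_{v}]_{u}\}_{w}$ by subdividing a single edge ($w\to v$, $w\to u$, or $u\to v$) with $o$, hence is derivable by the induction hypothesis applied to the sub-derivations of height $h_{1}$ or $h_{2}$. Since this reconstruction uses $\lin$ twice where the original used it once, the height is not preserved, so we obtain only admissibility; this is part (2).

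I expect the $\lin$ case to be the main obstacle, for the structural reason just seen: the edge $w\to v$ targeted by $\ex$ survives in only one of the two $\lin$-premises, so reproducing it in the other branch forces a single $\lin$ to be split into two, which is exactly why $\ex$ is merely admissible -- rather than height-preservingly admissible -- in $\ngd$, $\nndl$, and $\ncdl$.
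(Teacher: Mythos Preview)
Your proof is correct and follows essentially the same approach as the paper: induction on the height of the derivation, with the key observation that $\ex$ only extends the reachability relation, and the crucial $\lin$ case resolved by exactly the same two-$\lin$ construction using three instances of the induction hypothesis (one on the left premise of $\lin$ subdividing $w\to v$, and two on the right premise subdividing $w\to u$ and $u\to v$). Your account is in fact slightly more explicit than the paper's in its treatment of $\impr$/$\allr$ and in articulating why the subdivided edge persists in the premises of every rule other than $\lin$.
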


\begin{proof} We prove the lemma by induction on the height of the given derivation for $\nndl$. In the inductive step, only when permuting $\ex$ above $\lin$ will the size of the derivation potentially grow, and thus, the following establishes the hp-admissibility of $\ex$ for $\nipc$, $\nnd$, and $\ncd$ which exclude the rule $\lin$.

\textit{Base case.} The $\botl$ case is easily resolved since any application of $\ex$ to $\botl$ gives another instance of the rule. The $\idfo$ is also straightforward; for example, suppose we have an instance of $\idfo$ followed by an application of the $\ex$ rule, as shown below left. Then, due to the side condition on $\idfo$, we know that $\rable{w}{u}$. After applying $\ex$, we can see that $\rable{w}{u}$ still holds, and thus, the conclusion of the proof is an instance of $\idfo$.
\begin{center}
\AxiomC{}
\RightLabel{$\idfo$}
\UnaryInfC{$\ns\{p(\vec{x})\}_{w}\{\nant, [\ncon\{p(\vec{x})\}_{u}]_{v} \}_{z}$}
\RightLabel{$\ex$}
\UnaryInfC{$\ns\{p(\vec{x})\}_{w}\{\nant, [\emptyset; [\ncon\{p(\vec{x})\}_{u}]_{v}]_{i} \}_{z}$}
\DisplayProof
\end{center}

\textit{Inductive step.} One can show via arguments similar to the base case that $\ex$ permutes above each reachability rule $(r)$ in our calculus. Furthermore, with the exception of $\lin$, it is simple to show that $\ex$ permutes above the remaining rules of our calculus. We show how to resolve a non-trivial $\lin$ case below and note that the remaining cases are argued similarly. 
\begin{center}
\AxiomC{$\nsa\{[\nsb,[\nsc]_{u}]_{v}\}_{w}$}
\AxiomC{$\nsa\{[\nsc,[\nsb]_{v}]_{u}\}_{w}$}
\RightLabel{$\lin$}
\BinaryInfC{$\nsa\{[\nsb]_{v},[\nsc]_{u}\}_{w}$}
\RightLabel{$\ex$}
\UnaryInfC{$\nsa\{[\emptyset; [\nsb]_{v}]_{i},[\nsc]_{u}\}_{w}$}
\DisplayProof
\end{center}
 In the proof shown above, the $\ex$ rule is applied in the $v$-component. We first apply the $\ex$ rule to each premise of $\lin$, followed by an application of $\lin$. By applying IH (i.e. $\ex$) to the right premise of $\lin$, we may then apply $\lin$ one last time to obtain the desired conclusion.
\begin{center}
\AxiomC{$\nsa\{[\nsb,[\nsc]_{u}]_{v}\}_{w}$}
\RightLabel{IH}
\UnaryInfC{$\nsa\{[\emptyset; [\nsb,[\nsc]_{u}]_{v}]_{i}\}_{w}$}

\AxiomC{$\nsa\{[\nsc,[\nsb]_{v}]_{u}\}_{w}$}
\RightLabel{IH}
\UnaryInfC{$\nsa\{[\emptyset; [\nsc,[\nsb]_{v}]_{u}]_{i}\}_{w}$}
\RightLabel{$\lin$}
\BinaryInfC{$\nsa\{[\emptyset; [\nsb]_{v}, [\nsc]_{u}]_{i}\}_{w}$}
\AxiomC{$\nsa\{[\nsc,[\nsb]_{v}]_{u}\}_{w}$}
\RightLabel{IH}
\UnaryInfC{$\nsa\{[\nsc,[\emptyset; [\nsb]_{v}]_{i}]_{u}\}_{w}$}
\RightLabel{$\lin$}
\RightLabel{$\lin$}
\BinaryInfC{$\nsa\{[\emptyset; [\nsb]_{v}]_{i},[\nsc]_{u}\}_{w}$}
\DisplayProof
\end{center}
\end{proof}


\begin{lemma}\label{lem:ec-admiss} If $\ncalc \in \{\nipc,\nnd,\ncd\}$, then the $\ec$ rule is hp-admissible in $\ncalc$. 
\end{lemma}

\begin{proof} We argue claim (1) by induction on the height of the given derivation for $\nnd$ and note that claim (2) follows from the fact that each calculus is complete (\thm~\ref{thm:cut-free-comp}) and $\ec$ is sound.

\textit{Base case.} If $\ec$ is applied to $\botl$, then it yields another instance of the rule, showing the hp-admissibility of the rule in this case. Let us now consider applying $\ec$ to an instance of $\idfo$. We consider the following non-trivial case and note that all other cases are similar or simple.
\begin{center}
\AxiomC{}
\RightLabel{$\idfo$}
\UnaryInfC{$\nsa \hol p(\vec{\parama})^{\inp} \hor_{w} \hol \nsb\{[\va; \nsc, p(\vec{\parama})^{\outp}]_{u},[\vb; \nsd]_{v}\}_{j} \hor_{i}$}
\RightLabel{$\ec$}
\UnaryInfC{$\nsa \hol p(\vec{\parama})^{\inp} \hor_{w} \hol \nsb\{[\va,\vb; \nsc, p(\vec{\parama})^{\outp}, \nsd]_{u}\}_{j} \hor_{i}$}
\DisplayProof
\end{center}
 Since $\rable{w}{u}$ holds in the instance of $\idfo$, we have that $\rable{w}{u}$ holds in the conclusion of $\ec$. Therefore, we may take the conclusion above to be an instance of $\idfo$, resolving the case.

\textit{Inductive step.} We consider the $\existsri$ case as the other cases are simple or similar.

$\existsri$. In the non-trivial $\existsri$ case (shown below left), we have that $\rable{w}{v}$, from which it follows that $\rable{w}{u}$ holds in the conclusion of the proof. Therefore, we may apply $\ec$ first and then $\existsri$ second as the side condition will still hold, thus showing that the two rules can be permuted as shown below right.
\begin{flushleft}
\begin{tabular}{c c}
\AxiomC{$\nsa \hol \va,y; \nsb \hor_{w} \hol \nsc\{[\vb; \nsd_{1}]_{u},[\vc; \nsd_{2},\phi(\parama/x)^{\outp}, \exists x \phi^{\outp}]_{v}\}_{j}  \hor _{i}$}
\RightLabel{$\existsri$}
\UnaryInfC{$\nsa \hol \va,y; \nsb \hor_{w} \hol \nsc\{[\vb;\nsd_{1}]_{u},[\vc; \nsd_{2}, \exists x \phi^{\outp}]_{v}\}_{j}  \hor _{i}$}
\RightLabel{$\ec$}
\UnaryInfC{$\nsa \hol \va,y; \nsb \hor_{w} \hol \nsc\{[\vb,\vc; \nsd_{1},\nsd_{2}, \exists x \phi^{\outp}]_{u}\}_{j}  \hor _{i}$}
\DisplayProof
&

$\leadsto$
\end{tabular}
\end{flushleft}
\begin{flushright}
\AxiomC{$\nsa \hol \va,y; \nsb \hor_{w} \hol \nsc\{[\vb; \nsd_{1}]_{u},[\vc; \nsd_{2},\phi(\parama/x)^{\outp}, \exists x \phi^{\outp}]_{v}\}_{j}  \hor _{i}$}
\RightLabel{IH}
\UnaryInfC{$\nsa \hol \va,y; \nsb \hor_{w} \hol \nsc\{[\vb,\vc; \nsd_{1}, \nsd_{2},\phi(\parama/x)^{\outp}, \exists x \phi^{\outp}]_{v}\}_{j}  \hor _{i}$}
\RightLabel{$\existsri$}
\UnaryInfC{$\nsa \hol \va,y; \nsb \hor_{w} \hol \nsc\{[\vb,\vc; \nsd_{1}, \nsd_{2}, \exists x \phi^{\outp}]_{v}\}_{j}  \hor _{i}$}
\DisplayProof
\end{flushright}
\end{proof}

\begin{lemma}\label{lem:lwr-admiss} If $\ncalc \in \{\nipc,\nnd,\ncd\}$, then the $\lwr$ rule is hp-admissible in $\ncalc$. 
\end{lemma}

\begin{proof} We prove claim (1) by induction on the height of the given derivation for $\nnd$ and note that claim (2) follows from the fact that each calculus is complete (\thm~\ref{thm:cut-free-comp}) and $\lwr$ is sound.

\textit{Base case.} The only non-trivial case to consider is when $\lwr$ is applied to an instance of $\idfo$ and a principal formula of $\idfo$ is active in $\lft$, as shown below. Observe that $\rable{w}{v}$ as $\rable{w}{u}$ and $\rable{u}{v}$, showing that the conclusion is an instance of $\idfo$ as well.
\begin{center}
\AxiomC{}
\RightLabel{$\idfo$}
\UnaryInfC{$\nsa\{\nsb,p(\vec{x})^{\inp}\}_{w}\{\nsc_{1}, \nsc_{2}^{\outp}, p(\vec{x})^{\outp}\}_{u}\{\nsd\}_{v}$}
\RightLabel{$\lwr$}
\UnaryInfC{$\nsa\{\nsb,p(\vec{x})^{\inp}\}_{w}\{\nsc_{1}\}_{u}\{\nsd,\nsc_{2}^{\outp}, p(\vec{x})^{\outp}\}_{v}$}
\DisplayProof
\end{center}

\textit{Inductive step.} We consider the non-trivial case of permuting $\lwr$ above $\allr$ as the remaining cases are simple or similar.

$\allr$. In the non-trivial $\allr$ case the $\lwr$ rule shifts the principal formula of $\allr$ as shown below left. The case is resolved as shown below right, and begins by invoking IH. Second, we repeatedly apply the hp-admissibility of $\ex$ (\lem~\ref{lem:ex-admiss}), creating a path whose terminal node is the $u$-component; we apply $\ex$ a sufficient number of times (say, $n$) for this path to be of a length one greater than the path between the $w$-component and the $v$-component. Third, we successively apply the hp-admissible $\ec$ rule (\lem~\ref{lem:ec-admiss}), fusing this path with the path from the $w$-component to the $v$-component until the $u$-component is nested within the $v$-component. A single application of $\allr$ gives the desired conclusion.
\begin{flushleft}
\begin{tabular}{c c}
\AxiomC{$\nsa \hol \nsb, \nsd^{\outp}, \nest{y ; \phi(y/x)^{\outp}}{u} \hor_{w} \hol \nsc \hor_{v} $}
\RightLabel{$\allr$}
\UnaryInfC{$\nsa \hol \nsb, \forall x \phi^{\outp}, \nsd^{\outp} \hor_{w} \hol \nsc \hor_{v}$}
\RightLabel{$\lwr$}
\UnaryInfC{$\nsa \hol \nsb \hor_{w} \hol \nsc, \forall x \phi^{\outp}, \nsd^{\outp} \hor_{v}$}
\DisplayProof

&

$\leadsto$
\end{tabular}
\end{flushleft}
\begin{flushright}
\AxiomC{$\nsa \hol \nsb, \nsd^{\outp}, \nest{y ; \phi(y/x)^{\outp}}{u} \hor_{w} \hol \nsc \hor_{v} $}
\RightLabel{IH}
\UnaryInfC{$\nsa \hol \nsb, \nest{y ; \phi(y/x)^{\outp}}{u} \hor_{w} \hol \nsc, \nsd^{\outp} \hor_{v} $}
\RightLabel{$\ex \times n$}
\UnaryInfC{$\nsa \hol \nsb, \nest{\emptyset; \ldots \nest{y ; \phi(y/x)^{\outp}}{u} \ldots }{i} \hor_{w} \hol \nsc, \nsd^{\outp} \hor_{v} $}
\RightLabel{$\ec \times n$}
\UnaryInfC{$\nsa \hol \nsb \hor_{w} \hol \nsc, \nsd^{\outp}, \nest{y ; \phi(y/x)^{\outp}}{u} \hor_{v} $}
\RightLabel{$\allr$}
\UnaryInfC{$\nsa \hol \nsb \hor_{w} \hol \nsc, \forall x \phi^{\outp}, \nsd^{\outp} \hor_{v}$}
\DisplayProof

\end{flushright}
\end{proof}

\begin{lemma}\label{lem:ctrr-hp-admiss} If $\ncalc \in \{\nipc,\nnd,\ncd\}$, then the $\ctrr$ rule is hp-admissible in $\ncalc$. 
\end{lemma}

\begin{proof} We argue claim (1) by induction on the height of the given derivation for $\nnd$. Claim (2) follows from the fact that each calculus is complete (\thm~\ref{thm:cut-free-comp}) and $\ctrr$ is sound.

\textit{Base case.} Any application of $\ctrr$ to $\idfo$ or $\botl$ yields another instance of the rule, showing the hp-admissibility of $\ctrr$ in these cases.

\textit{Inductive step.} Let us suppose that our derivation ends with an application of a rule $\ru$ followed by an application of $\ctrr$. If neither of the contracted formulae in $\ctrr$ are the principal formula of $\ru$, then we can freely permute $\ctrr$ above $\ru$ to obtain the same conclusion. Therefore, let us assume that the principal formulae of $\ru$ serves as one of the contracted formulae in $\ctrr$. We show the cases where $\ru$ is either $\impr$ or $\allr$, and note that the remaining cases are shown similarly.
\begin{center}
\begin{tabular}{c c c}
\AxiomC{$\nx \hol \Gamma, \phi \imp \psi^{\outp}, \onest{\emptyset;\phi^{\inp}, \psi^{\outp}}{u} \hor_{w} $}
\RightLabel{$\impr$}
\UnaryInfC{$\nx \hol \Gamma, \phi \imp \psi^{\outp}, \phi \imp \psi^{\outp} \hor_{w} $}
\RightLabel{$\ctrr$}
\UnaryInfC{$\nx \hol \Gamma, \phi \imp \psi^{\outp} \hor_{w} $}
\DisplayProof

&

$\leadsto$

&

\AxiomC{$\nx \hol \Gamma, \phi \imp \psi^{\outp}, \onest{\emptyset; \phi^{\inp}, \psi^{\outp}}{u} \hor_{w} $}
\RightLabel{$\lwr$}
\UnaryInfC{$\nx \hol \Gamma, \onest{\emptyset; \phi^{\inp}, \psi^{\outp},\phi \imp \psi^{\outp}}{u} \hor_{w} $}
\RightLabel{\lem~\ref{lem:hp-invert-prop}}
\UnaryInfC{$\nx \hol \Gamma, \onest{\emptyset; \phi^{\inp}, \psi^{\outp},[\emptyset; \phi^{\inp}, \psi^{\outp}]_{v}}{u} \hor_{w} $}
\RightLabel{$\mrg$}
\UnaryInfC{$\nx \hol \Gamma, \onest{\emptyset; \phi^{\inp}, \psi^{\outp},\phi^{\inp}, \psi^{\outp}}{u} \hor_{w} $}
\RightLabel{$\ctrl$}
\UnaryInfC{$\nx \hol \Gamma, \onest{\emptyset; \phi^{\inp}, \psi^{\outp}, \psi^{\outp}}{u} \hor_{w} $}
\RightLabel{IH}
\UnaryInfC{$\nx \hol \Gamma, \onest{\emptyset; \phi^{\inp}, \psi^{\outp}}{u} \hor_{w} $}
\RightLabel{$\impr$}
\UnaryInfC{$\nx \hol \Gamma, \phi \imp \psi^{\outp} \hor_{w} $}
\DisplayProof
\end{tabular}
\end{center}
 The $\allr$ case is resolved as shown below. We leverage the hp-admissible rules $\lwr$, $\mrg$, and $\ctrv$ as well as the hp-invertibility of $\allr$ in our proof.
\begin{flushleft}
\begin{tabular}{c c c}
\AxiomC{$\nx \hol \Gamma, \forall x \phi^{\outp}, \onest{y; \phi(y/x)^{\outp}}{u} \hor_{w} $}
\RightLabel{$\allr$}
\UnaryInfC{$\nx \hol \Gamma, \forall x \phi^{\outp}, \forall x \phi^{\outp} \hor_{w} $}
\RightLabel{$\ctrr$}
\UnaryInfC{$\nx \hol \Gamma, \forall x \phi^{\outp} \hor_{w} $}
\DisplayProof

&

$\leadsto$
\end{tabular}
\end{flushleft}
\begin{flushright}
\AxiomC{$\nx \hol \Gamma, \forall x \phi^{\outp}, \onest{y; \phi(y/x)^{\outp}}{u} \hor_{w} $}
\RightLabel{$\lwr$}
\UnaryInfC{$\nx \hol \Gamma, \onest{y; \forall x \phi^{\outp}, \phi(y/x)^{\outp}}{u} \hor_{w} $}
\RightLabel{\lem~\ref{lem:hp-invert-fo}}
\UnaryInfC{$\nx \hol \Gamma, \onest{y; \phi(y/x)^{\outp}, \onest{y; \phi(y/x)^{\outp}}{v}}{u} \hor_{w} $}
\RightLabel{$\mrg$}
\UnaryInfC{$\nx \hol \Gamma, \onest{y,y; \phi(y/x)^{\outp}, \phi(y/x)^{\outp}}{u} \hor_{w} $}
\RightLabel{$\ctrv$}
\UnaryInfC{$\nx \hol \Gamma, \onest{y; \phi(y/x)^{\outp}, \phi(y/x)^{\outp}}{u} \hor_{w} $}
\RightLabel{IH}
\UnaryInfC{$\nx \hol \Gamma, \onest{y; \phi(y/x)^{\outp}}{u} \hor_{w} $}
\RightLabel{$\allr$}
\UnaryInfC{$\nx \hol \Gamma, \forall x \phi^{\outp} \hor_{w} $}
\DisplayProof
\end{flushright}
\end{proof}

\section{Syntactic Cut-elimination}\label{sec:cut-elim}

 We now show that the intuitionistic calculi $\nipc$, $\nnd$, and $\ncd$ satisfy syntactic cut-elimination, that is, the $\cut$ rule (\fig~\ref{fig:lab-struc-rules}) can be permuted upward in any given derivation and deleted at the initial rules. Syntactic cut-elimination results were first provided in the context of nested systems for propositional modal logics; in particular, \citet{Bru09} showed how to eliminate an additive (i.e. context-sharing) version of cut (similar to the $\cut$ rule we consider) and \citet{Pog09} showed how to eliminate a multiplicative (i.e. context-independent) version of cut. We will first prove our syntactic cut-elimination theorem and then comment on the issues associated with eliminating $\cut$ in the presence of $\lin$, i.e. for the $\ngd$, $\nndl$, and $\ncdl$ calculi.

\begin{theorem}[Cut-elimination]\label{thm:cut-elim-int}
If $\ncalc \in \{\nipc,\nnd,\ncd\}$, then the $\cut$ rule is eliminable in $\ncalc$.
\end{theorem}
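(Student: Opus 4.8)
The plan is to eliminate an \emph{uppermost} occurrence of $\cut$ and then iterate, so it suffices to prove: if $\cut$ is the last rule of a derivation whose two premise-derivations $\prf_{1}$ (ending in a sequent carrying $\phi^{\outp}$) and $\prf_{2}$ (ending in a sequent carrying $\phi^{\inp}$) are cut-free, then the conclusion admits a cut-free derivation in $\ncalc$. I would argue this by a principal induction on the complexity $|\phi|$ of the cut formula, with a secondary induction on the \emph{cut-height} $h_{1}+h_{2}$, where $h_{i}$ is the height of $\prf_{i}$. The argument leans throughout on the hp-admissibility results of \sect~\ref{sec:properties}: weakening (\lem~\ref{lem:wk-rules-admiss}, \lem~\ref{lem:wkv-ctrv-admiss}), the contractions $\ctrl$, $\ctrr$, $\ctrv$ (\lem~\ref{lem:ctrl-hp-admiss}, \lem~\ref{lem:ctrr-hp-admiss}, \lem~\ref{lem:wkv-ctrv-admiss}), the transport rules $\lft$, $\lwr$, $\mrg$ (\lem~\ref{lem:lft-admiss}, \lem~\ref{lem:lwr-admiss}, \lem~\ref{lem:mrg-admiss}), the tree-surgery rules $\ex$, $\ec$ (\lem~\ref{lem:ex-admiss}, \lem~\ref{lem:ec-admiss}), hp-invertibility of every logical rule (\lem~\ref{lem:hp-invert-prop}, \lem~\ref{lem:hp-invert-fo}), and, in the first-order cases, $\psub$ (\lem~\ref{lem:sub-admiss}) for repairing freshness clashes; crucially, \emph{all} of these are hp-admissible (not merely admissible) in $\nipc$, $\nnd$, and $\ncd$.

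First I would dispatch the routine cases. If a premise is an instance of $\botl$, or of $\idfo$/$\id$ whose two displayed atoms are not the cut formula, the conclusion is again such an instance. If $\prf_{1}$ is an $\idfo$-instance whose displayed atom $p(\vec{x})^{\outp}$ \emph{is} the (atomic) cut formula, located at a component $u$ with $p(\vec{x})^{\inp}$ at some $w$ and $\rable{w}{u}$, then from $\prf_{2}$, which carries $p(\vec{x})^{\inp}$ at $u$, I apply hp-admissible $\lft$ to transport that input atom from $u$ up to $w$ and then $\ctrl$ to absorb it, giving the conclusion; the case with the roles of $\prf_{1},\prf_{2}$ swapped is analogous. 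When the cut formula is not principal in the last rule $\ru$ of $\prf_{1}$ or of $\prf_{2}$, I permute $\cut$ above $\ru$ on that side: using hp-invertibility of $\ru$ (or an inverse that is an hp-admissible structural rule, or just $\wk$) to pre-adjust the other premise to the appropriate premise-context, and then invoking the secondary IH; this is legitimate because $\cut$ never alters the tree of a nested sequent, so all side conditions of the reachability rules ($\rable{}{}$, availability, freshness) survive the permutation verbatim. The $\lor$ and $\land$ principal cases are the classical ones, replacing $\cut$ by cuts on the two immediate subformulae (principal IH), and the $\exists$ and $\forall$ principal cases reduce likewise to a single cut on an instance $\phi(y/x)$ of strictly smaller complexity, using $\psub$ to rename the eigenvariable of $\existsl$/$\allr$ apart when needed.

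The one genuinely delicate case --- and the step I expect to be the main obstacle --- is when $\phi = \psi \imp \chi$ is principal on both sides. Then $\prf_{1}$ ends with $\impr$, so its premise carries $\psi^{\inp},\chi^{\outp}$ in a fresh child $u''$ of the cut component $z$, whereas $\prf_{2}$ ends with $\impl$ firing $(\psi \imp \chi)^{\inp}$ at $z$ into a component $u$ with $\rable{z}{u}$, via premises $\prf_{2a}$ (carrying $\psi^{\outp}$ at $u$) and $\prf_{2b}$ (carrying $\chi^{\inp}$ at $u$); note that $\prf_{2a}$ and $\prf_{2b}$ \emph{still contain} $(\psi \imp \chi)^{\inp}$ at $z$, since $\impl$ does not consume its principal formula. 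Writing $\nsb$ and $\ny$ for the ambient contents of the $z$- and $u$-components, the reduction would proceed in three stages. (i) Since $(\psi \imp \chi)^{\inp}$ persists in $\prf_{2a},\prf_{2b}$, I first weaken $\prf_{1}$'s end sequent (hp-admissible $\wk$) so that it agrees, away from the cut formula, with $\prf_{2a}$, resp.\ $\prf_{2b}$, and cut: the cut formula is unchanged but the cut-height strictly drops, so the secondary IH yields cut-free proofs of $\nsa\{\nsb\}_{z}\{\ny, \psi^{\outp}\}_{u}$ and of $\nsa\{\nsb\}_{z}\{\ny, \chi^{\inp}\}_{u}$; this is what erases the top-level $(\psi \imp \chi)^{\inp}$ from $z$. (ii) I relocate the subformulae $\psi^{\inp},\chi^{\outp}$ created by $\impr$ from the fresh node $u''$ onto the component $u$: hp-admissible $\ex$ grows $u''$ into a branch whose length matches that of the $z$-to-$u$ path, and hp-admissible $\ec$ then merges this (empty except at its leaf) branch into that path, producing a cut-free proof of $\nsa\{\nsb\}_{z}\{\ny, \psi^{\inp}, \chi^{\outp}\}_{u}$ (when $u = z$, a single $\mrg$ already suffices). (iii) All relevant occurrences of $\psi$ and of $\chi$ now sit at the single component $u$, so I finish with a cut on $\psi$ followed by a cut on $\chi$ at $u$, each of strictly smaller complexity and hence discharged by the principal IH, arriving at $\nsa\{\nsb\}_{z}$; aligning the contexts at these cuts costs only a few applications of $\wk$.

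I expect stage (ii) to carry the bulk of the combinatorial bookkeeping, and it is also the fragile point of the whole argument: $\ex$ and $\ec$ are hp-admissible \emph{only} in $\nipc$, $\nnd$, $\ncd$ (\lem~\ref{lem:ex-admiss}, \lem~\ref{lem:ec-admiss}), whereas once $\lin$ is present they are merely admissible, with an unbounded height increase, so the secondary induction on cut-height can no longer be closed for $\ngd$, $\nndl$, or $\ncdl$. This is precisely why the theorem is stated only for the three intuitionistic systems, and I would close the section by remarking on this obstruction.
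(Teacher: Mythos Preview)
Your argument is correct and follows the paper's double induction on $(|\phi|,\,h_{1}+h_{2})$ with the same case structure; the tactical differences are minor. In the atomic $\idfo$-principal case you use $\lft+\ctrl$ (dually $\lwr+\ctrr$) on the other premise, whereas the paper simply observes that the conclusion is again an instance of $\idfo$ by transitivity of $\rable{}{}$; your route has the pleasant side effect of absorbing the $\doms$ sub-case uniformly. In stage~(ii) of the $\imp$ reduction you relocate the fresh child via $\ex+\ec$, while the paper instead applies $\lwr$ to the implication $\phi\imp\psi^{\outp}$ itself, then hp-inverts $\impr$ and applies $\mrg$ --- but these are the same manoeuvre, since the paper's hp-admissibility proof for $\lwr$ (\lem~\ref{lem:lwr-admiss}, $\allr$ case) is exactly your $\ex+\ec$ trick. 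One omission: your sketch of the $\forall$/$\exists$ principal cases needs $\ndr$ (\lem~\ref{lem:nd-cd-admiss}) and $\ctrv$ in addition to $\psub$, to reabsorb the eigenvariable into the appropriate signature after relocation; the paper makes this explicit.

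Your closing diagnosis of the obstruction for $\ngd$, $\nndl$, $\ncdl$ is wrong, however. Stage~(ii) does \emph{not} depend on hp-admissibility of $\ex$ and $\ec$: the cuts in stage~(iii) are on $\psi$ and $\chi$, of strictly smaller complexity, so the \emph{principal} induction hypothesis applies regardless of how much the height grew in stage~(ii); mere admissibility would suffice there. The real obstruction is in your non-principal permutation step. When one premise of $\cut$ ends with $\lin$ --- a structural rule with no principal formula, so the cut formula is never principal in it --- permuting $\cut$ upward requires pre-adjusting the other premise to the shape of each $\lin$-premise, i.e.\ it requires hp-invertibility of $\lin$, which is exactly what is not known. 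This is the obstacle the paper isolates after the theorem.
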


\begin{proof} We prove the result for $\nnd$ as the proof for the other nested calculi are subsumed by this case or similar. The result is shown by induction on the lexicographic ordering of pairs $(\vert \phi \vert,h_{1}+h_{2})$, where $\vert \phi \vert$ is the complexity of the cut formula $\phi$, $h_{1}$ is the height of the derivation above the left premise of $\cut$, and $h_{2}$ is the height of the derivation above the right premise of $\cut$. We assume w.l.o.g. that $\cut$ is used once as the last inference in our given proof; the general result follows by successively applying the described procedure to topmost instances of $\cut$ in a given derivation.

1. Suppose that the complexity of the cut formula is $0$, i.e. the cut formula is either an atomic formula $p(\vec{x})$ or $\bot$. 

1.1. Suppose that both premises of $\cut$ are instances of $\idfo$, $\botl$, or $\doms$.

1.1.1. Suppose both premises of $\cut$ are instances of $\idfo$. If the cut formula is not principal in both premises, then the conclusion of $\cut$ is an instance of $\idfo$. Suppose then that the cut formula is principal in both premises, meaning that our $\cut$ is of the following form:
\begin{center}
\AxiomC{}
\RightLabel{$\idfo$}
\UnaryInfC{$\ns\{\Gamma_{1}, p(\vec{x})^{\inp}\}_{w}\{\Gamma_{2}, p(\vec{x})^{\outp}\}_{u}$}

\AxiomC{}
\RightLabel{$\idfo$}
\UnaryInfC{$\ns\{\Gamma_{2}, p(\vec{x})^{\inp}\}_{u}\{\Gamma_{3}, p(\vec{x})^{\outp}\}_{v}$}

\RightLabel{$\cut$}
\BinaryInfC{$\ns\{\Gamma_{1}, p(\vec{x})^{\inp}\}_{w}\{\Gamma_{2}\}_{u}\{\Gamma_{3}, p(\vec{x})^{\outp}\}_{v}$}
\DisplayProof
\end{center}
Since the left and right premises are instances of $\idfo$, $\rable{w}{u}$ and $\rable{u}{v}$ hold, implying that $\rable{w}{v}$ holds. Hence, the conclusion of $\cut$ is an instance of $\idfo$.

1.1.2. If the left premise of $\cut$ is an instance of $\doms$, then applying the hp-invertibility of $\doms$ (\lem~\ref{lem:hp-invert-fo}) to the right premise of $\cut$ lets us apply $\cut$ with the premise of $\doms$, and applying $\doms$ afterward gives the desired conclusion. Let us suppose then that the right premise of $\cut$ is an instance of $\doms$. If the left premise of $\cut$ is an instance of $\doms$, then the previous case applies, and if the left premise of $\cut$ is an instance of $\botl$, then the conclusion will be an instance of $\botl$ as well. Furthermore, if the left premise of $\cut$ is an instance of $\idfo$ and the cut formula is not principal in $\idfo$, then the conclusion is an instance of $\idfo$ as well. Therefore, the only case left that we need to consider is when $\idfo$ is the left premise of $\cut$ and a principal formula of $\idfo$ is the cut formula. If the principal formula in $\doms$ is not the cut formula, then we can apply the hp-invertibility of $\doms$ (\lem~\ref{lem:hp-invert-fo}) to the instance of $\idfo$, and shift the $\cut$ upward to the premise of $\doms$. Let us assume then that the principal formula in $\doms$ is the cut formula, meaning the $\cut$ is of the following form:
\begin{center}
\AxiomC{}
\RightLabel{$\idfo$}
\UnaryInfC{$\ns\{\va; \Gamma_{1}, p(\vec{x})^{\inp}\}_{w}\{\vb; \Gamma_{2}, p(\vec{x})^{\outp}\}_{u}$}

\AxiomC{$\ns\{\va; \Gamma_{1}, p(\vec{x})^{\inp}\}_{w}\{\vb, \vec{x}; \Gamma_{2}, p(\vec{x})^{\inp}\}_{u}$}
\RightLabel{$\doms$}
\UnaryInfC{$\ns\{\va; \Gamma_{1}, p(\vec{x})^{\inp}\}_{w}\{\vb; \Gamma_{2}, p(\vec{x})^{\inp}\}_{u}$}

\RightLabel{$\cut$}
\BinaryInfC{$\ns\{\va; \Gamma_{1}, p(\vec{x})^{\inp}\}_{w}\{\vb; \Gamma_{2}\}_{u}$}
\DisplayProof
\end{center}
 This case can be resolved by invoking the hp-admissibility of $\ndr$ (\lem~\ref{lem:nd-cd-admiss}), as shown below:
\begin{center}
\AxiomC{}
\RightLabel{$\idfo$}
\UnaryInfC{$\ns\{\va; \Gamma_{1}, p(\vec{x})^{\inp}\}_{w}\{\vb; \Gamma_{2}, p(\vec{x})^{\outp}\}_{u}$}

\AxiomC{$\ns\{\va; \Gamma_{1}, p(\vec{x})^{\inp}\}_{w}\{\vb, \vec{x}; \Gamma_{2}, p(\vec{x})^{\inp}\}_{u}$}
\RightLabel{$\ndr$}
\UnaryInfC{$\ns\{\va, \vec{x}; \Gamma_{1}, p(\vec{x})^{\inp}\}_{w}\{\vb; \Gamma_{2}, p(\vec{x})^{\inp}\}_{u}$}
\RightLabel{$\cut$}
\BinaryInfC{$\ns\{\va, \vec{x}; \Gamma_{1}, p(\vec{x})^{\inp}\}_{w}\{\vb; \Gamma_{2}\}_{u}$}
\RightLabel{$\doms$}
\UnaryInfC{$\ns\{\va; \Gamma_{1}, p(\vec{x})^{\inp}\}_{w}\{\vb; \Gamma_{2}\}_{u}$}
\DisplayProof
\end{center}

1.1.3. If the left premise of $\cut$ is an instance of $\botl$, then the conclusion of $\cut$ is an instance of $\botl$. Therefore, let us assume that the right premise of $\cut$ is an instance of $\botl$. We may also assume that the left premise is not an instance of $\doms$ as this case was already considered above; hence, we suppose that the left premise of $\cut$ is an instance of $\idfo$. If the principal $\bot^{\inp}$ in $\botl$ is the cut formula, then the conclusion is an instance of $\idfo$, and if $\bot^{\inp}$ is not the cut formula, then the conclusion is an instance of $\botl$. 
 
1.2 Suppose that exactly one premise of $\cut$ is an instance of $\idfo$, $\botl$, or $\doms$, and the other premise of $\cut$ is a non-initial rule $(r)$. It follows by assumption 1 above that the principal formula of $(r)$ is not the cut formula (as its complexity will be greater than $0$), meaning that we may apply $\cut$ between an instance of $\idfo$, $\botl$, or $\doms$ to the premise(s) of $(r)$, making sure that the contexts match (i.e. applying the hp-invertibility of $(r)$ on $\idfo$, $\botl$, or $\doms$; see \lem~\ref{lem:hp-invert-prop} and \ref{lem:hp-invert-fo}) when we do so. After the $\cut$ instance, we apply $(r)$ to obtain the desired conclusion. We note that in such a case $h_{1}+h_{2}$ has decreased.

1.3. Suppose that neither premise of $\cut$ is an instance of $\idfo$, $\botl$, or $\doms$. Let $\rone$ and $\rtwo$ be the rules used to derive the left and right premises of $\cut$, respectively. Also, we assume that $\rone$ is a two premise rule and $\rtwo$ is a one premise rule; the other cases are argued in a similar fashion. By assumption 1 above, the complexity of the cut formula is $0$, implying that principal formulae of $\rone$ and $\rtwo$ (which are assumed to be non-initial, and therefore, have a complexity greater than $0$) are not cut formulae. From what has been said, our inferences and $\cut$ must be of the form shown below left, and can be resolved as shown below right (with $h_{1}+h_{2}$ decreased). We apply the hp-invertibility of $\rtwo$ (\lem~\ref{lem:hp-invert-prop} and \ref{lem:hp-invert-fo}) to ensure that the contexts match so that $\cut$ may be applied. Moreover, we note that if $\rtwo$ is subject to a side condition, then $\ns_{3}\{\Gamma_{3}\}_{w}$ will satisfy the side condition as well, showing that $\rtwo$ can indeed be applied after the $\cut$ below.
\begin{flushleft}
\begin{tabular}{c c}
\AxiomC{$\Sigma_{1}\{\Gamma_{1},\phi^{\outp}\}_{w}$}
\AxiomC{$\Sigma_{2}\{\Gamma_{2},\phi^{\outp}\}_{w}$}
\RightLabel{$\rone$}
\BinaryInfC{$\ns\{\Gamma,\phi^{\outp}\}_{w}$}
\AxiomC{$\ns_{3}\{\Gamma_{3},\phi^{\inp}\}_{w}$}
\RightLabel{$\rtwo$}
\UnaryInfC{$\ns\{\Gamma,\phi^{\inp}\}_{w}$}
\RightLabel{$\cut$}
\BinaryInfC{$\ns\{\Gamma\}_{w}$}
\DisplayProof

&

$\leadsto$

\end{tabular}
\end{flushleft}
\begin{flushright}
\AxiomC{$\ns_{1}\{\Gamma_{1},\phi^{\outp}\}_{w}$}
\AxiomC{$\ns_{2}\{\Gamma_{2},\phi^{\outp}\}_{w}$}
\RightLabel{$\rone$}
\BinaryInfC{$\ns\{\Gamma,\phi^{\outp}\}_{w}$}
\RightLabel{\lem~\ref{lem:hp-invert-prop} and \ref{lem:hp-invert-fo}}
\UnaryInfC{$\ns_{3}\{\Gamma_{3},\phi^{\outp}\}_{w}$}

\AxiomC{$\ns_{3}\{\Gamma_{3},\phi^{\inp}\}_{w}$}
\RightLabel{$\cut$}
\BinaryInfC{$\ns_{3}\{\Gamma_{3}\}_{w}$}
\RightLabel{$\rtwo$}
\UnaryInfC{$\ns\{\Gamma\}_{w}$}
\DisplayProof
\end{flushright}

2. Suppose that the complexity of the cut formula is greater than $0$. 

2.1 Suppose that the cut formula is not principal in at least one premise of $\cut$ and let $\rtwo$ be the rule deriving that premise with $\rone$ the rule deriving the other premise (whose principal formula may or may not be the cut formula). Then, similar to case 1.3 above, we apply the hp-invertibility of $\rtwo$ to the conclusion of $\rone$ (\lem~\ref{lem:hp-invert-prop} and \ref{lem:hp-invert-fo}) to ensure the contexts match, then cut with the premise(s) of $\rtwo$, thus decreasing $h_{1} + h_{2}$, and last, apply the $\rtwo$ rule.

2.2 Suppose that the cut formula is principal in both premises of $\cut$. To complete our proof, we make a final case distinction on the main connective of the cut formula, and show how to reduce the $\cut$ instance in each case.

2.2.1. If our cut formula is of the form $\phi \lor \psi$, then the case is as shown below left, and may be resolved as shown below right (where the complexity of the cut formulae has decreased). We note that the case where the cut formula is of the form $\phi \land \psi$ is similar, so we omit it.

\begin{flushleft}
\begin{tabular}{c c}
\AxiomC{$\ns\{\Gamma,\phi^{\outp}, \psi^{\outp}\}_{w}$}
\RightLabel{$\disr$}
\UnaryInfC{$\ns\{\Gamma,\phi \lor \phi^{\outp}\}_{w}$}

\AxiomC{$\ns\{\Gamma,\phi^{\inp}\}_{w}$}
\AxiomC{$\ns\{\Gamma,\phi^{\inp}\}_{w}$}
\RightLabel{$\disl$}
\BinaryInfC{$\ns\{\Gamma,\phi \lor \psi^{\inp}\}_{w}$}

\RightLabel{$\cut$}
\BinaryInfC{$\ns\{\Gamma\}_{w}$}
\DisplayProof

&

$\leadsto$

\end{tabular}
\end{flushleft}
\begin{flushright}
\AxiomC{$\ns\{\Gamma, \phi^{\outp}, \psi^{\outp}\}_{w}$}

\AxiomC{$\ns\{\Gamma, \phi^{\inp}\}_{w}$}
\RightLabel{$\wk$}
\UnaryInfC{$\ns\{\Gamma, \phi^{\inp}, \psi^{\outp}\}_{w}$}
\RightLabel{$\cut$}
\BinaryInfC{$\ns\{\Gamma, \psi^{\outp}\}_{w}$}
\AxiomC{$\ns\{\Gamma, \psi^{\inp}\}_{w}$}
\RightLabel{$\cut$}
\BinaryInfC{$\ns\{\Gamma\}_{w}$}
\DisplayProof
\end{flushright}

2.2.2. We now consider the case where the cut formula is of the form $\phi \imp \psi$, as shown below.
\begin{center}
\begin{tabular}{c c}
$\prf =$

&

\AxiomC{$\ns\{\Gamma, \phi \imp \psi^{\inp}\}_{w}\{\Delta, \phi^{\outp}\}_{u}$}
\AxiomC{$\ns\{\Gamma,  \phi \imp \psi^{\inp}\}_{w}\{\Delta, \psi^{\inp}\}_{u}$}
\RightLabel{$\impl$}
\BinaryInfC{$\ns\{\Gamma, \phi \imp \psi^{\inp}\}_{w}\{\Delta\}_{u}$}
\DisplayProof
\end{tabular}
\end{center}
\begin{center}
\AxiomC{$\ns\{\Gamma, [\emptyset; \phi^{\inp}, \psi^{\outp}]_{v}\}_{w}\{\Delta\}_{u}$}
\RightLabel{$\impr$}
\UnaryInfC{$\ns\{\Gamma, \phi \imp \psi^{\outp}\}_{w}\{\Delta\}_{u}$}

\AxiomC{$\prf$}

\RightLabel{$\cut$}
\BinaryInfC{$\ns\{\Gamma\}_{w}\{\Delta\}_{u}$}
\DisplayProof
\end{center}

 We may resolve the case as shown below. Observe that each $\cut$ in $\prf_{1}$ and $\prf_{2}$ is of height $h_{1} + h_{2} - 1$ and the other two cuts are on formulae of smaller complexity, thus allowing for their elimination by IH. Moreover, we apply the hp-admissibility of $\wk$ (\lem~\ref{lem:wk-rules-admiss}), $\lwr$ (\lem~\ref{lem:lwr-admiss}), and $\mrg$ (\lem~\ref{lem:mrg-admiss}) to aid us on concluding the case. We note that $\lwr$ must be applied a sufficient number (say, $n$) times to shift $\phi \imp \psi^{\outp}$ in the correct component. The final applications of $\cut$ are on formulae of less complexity.
\begin{flushleft}
\begin{tabular}{c c}
$\prf_{1} =$

&

\AxiomC{$\ns\{\Gamma, [\emptyset; \phi^{\inp}, \psi^{\outp}]_{v}\}_{w}\{\Delta\}_{u}$}
\RightLabel{$\impr$}
\UnaryInfC{$\ns\{\Gamma, \phi \imp \psi^{\outp}\}_{w}\{\Delta\}_{u}$}
\RightLabel{$\wk$}
\UnaryInfC{$\ns\{\Gamma, \phi \imp \psi^{\outp}\}_{w}\{\Delta,\phi^{\outp}\}_{u}$}

\AxiomC{$\ns\{\Gamma, \phi \imp \psi^{\inp}\}_{w}\{\Delta, \phi^{\outp}\}_{u}$}

\RightLabel{$\cut$}
\BinaryInfC{$\ns\{\Gamma\}_{w}\{\Delta,\phi^{\outp}\}_{u}$}
\DisplayProof
\end{tabular}
\end{flushleft}

\begin{flushleft}
\begin{tabular}{c c}
$\prf_{2} =$

&

\AxiomC{$\ns\{\Gamma, [\emptyset; \phi^{\inp}, \psi^{\outp}]_{v}\}_{w}\{\Delta\}_{u}$}
\RightLabel{$\impr$}
\UnaryInfC{$\ns\{\Gamma, \phi \imp \psi^{\outp}\}_{w}\{\Delta\}_{u}$}
\RightLabel{$\wk$}
\UnaryInfC{$\ns\{\Gamma, \phi \imp \psi^{\outp}\}_{w}\{\Delta,\psi^{\inp}\}_{u}$}

\AxiomC{$\ns\{\Gamma, \phi \imp \psi^{\inp}\}_{w}\{\Delta, \psi^{\inp}\}_{u}$}

\RightLabel{$\cut$}
\BinaryInfC{$\ns\{\Gamma\}_{w}\{\Delta,\psi^{\inp}\}_{u}$}
\DisplayProof
\end{tabular}
\end{flushleft}

\begin{flushright}
\AxiomC{$\prf_{1}$}

\AxiomC{$\ns\{\Gamma, [\emptyset; \phi^{\inp}, \psi^{\outp}]_{v}\}_{w}\{\Delta\}_{u}$}
\RightLabel{$\impr$}
\UnaryInfC{$\ns\{\Gamma, \phi \imp \psi^{\outp}\}_{w}\{\Delta\}_{u}$}
\RightLabel{$\lwr \times n$}
\UnaryInfC{$\ns\{\Gamma\}_{w}\{\Delta,\phi \imp \psi^{\outp}\}_{u}$}
\RightLabel{\lem~\ref{lem:hp-invert-prop}}
\UnaryInfC{$\ns\{\Gamma\}_{w}\{\Delta,[\emptyset; \phi^{\inp}, \psi^{\outp}]_{v}\}_{u}$}
\RightLabel{$\mrg$}
\UnaryInfC{$\ns\{\Gamma\}_{w}\{\Delta,\phi^{\inp}, \psi^{\outp}\}_{u}$}

\AxiomC{$\prf_{2}$}
\RightLabel{$\wk$}
\UnaryInfC{$\ns\{\Gamma\}_{w}\{\Delta, \phi^{\inp},\psi^{\inp}\}_{u}$}

\RightLabel{$\cut$}
\BinaryInfC{$\ns\{\Gamma\}_{w}\{\Delta, \phi^{\inp}\}_{u}$}

\RightLabel{$\cut$}
\BinaryInfC{$\ns\{\Gamma\}_{w}\{\Delta\}_{u}$}
\DisplayProof
\end{flushright}

2.2.3. We now consider the case where the cut formula is of the form $\forall x \phi$ and argue the case where the formula is introduced by $\allli$ and $\allr$, omitting the $\alllii$ and $\allr$ case as it is similar.
\begin{flushleft}
\begin{tabular}{c c}
$\prf = $

&

\AxiomC{$\ns\{\va,z; \Pi\}_{i}\{\Gamma, [y; \phi(y/x)^{\outp}]_{v}\}_{w}\{\vb; \nsc\}_{u}$}
\RightLabel{$\allr$}
\UnaryInfC{$\ns\{\va,z; \Pi\}_{i}\{\Gamma, \forall x \phi^{\outp}\}_{w}\{\vb; \nsc\}_{u}$}
\DisplayProof
\end{tabular}
\end{flushleft}
\begin{center}
\AxiomC{$\prf$}

\AxiomC{$\ns\{\va,z; \Pi\}_{i}\{\Gamma, \forall x \phi^{\inp}\}_{w}\{\vb; \nsc, \phi(z/x)^{\inp}\}_{u}$}
\RightLabel{$\allli$}
\UnaryInfC{$\ns\{\va,z; \Pi\}_{i}\{\Gamma, \forall x \phi^{\inp}\}_{w}\{\vb; \nsc\}_{u}$}

\RightLabel{$\cut$}
\BinaryInfC{$\ns\{\va,z; \Pi\}_{i}\{\Gamma\}_{w}\{\vb; \nsc\}_{u}$}
\DisplayProof
\end{center}

 The case is resolved as shown below. Since the $\cut$ in $\prf_{2}$ is of height $h_{1} + h_{2} - 1$ and the other cut is on a formula of smaller complexity, each of these cuts may be eliminated by IH. As in the previous case, we apply hp-admissible rules and $\lwr$ a sufficient number of times (say, $n$) to shift $\phi(y/x)^{\outp}$ into the correct component. Moreover, we apply $\ndr$ a sufficient number of times (say, $k$) to shift the variable $z$ to the $i$-component, which may then be removed by an application of $\ctrv$.
\begin{flushleft}
\begin{tabular}{c c}
$\prf_{1} = $

&

\AxiomC{$\ns\{\va,z; \Pi\}_{i}\{\Gamma, [y; \phi(y/x)^{\outp}]_{v}\}_{w}\{\vb; \nsc\}_{u}$}
\RightLabel{$\allr$}
\UnaryInfC{$\ns\{\va,z; \Pi\}_{i}\{\Gamma, \forall x \phi^{\outp}\}_{w}\{\vb; \nsc\}_{u}$}
\RightLabel{$\wk$}
\UnaryInfC{$\ns\{\va,z; \Pi\}_{i}\{\Gamma, \forall x \phi^{\outp}\}_{w}\{\vb; \nsc,\phi(z/x)^{\inp}\}_{u}$}
\DisplayProof
\end{tabular}
\end{flushleft}

\begin{flushleft}
\begin{tabular}{c c}
$\prf_{2} = $

&

\AxiomC{$\prf_{1}$}
\AxiomC{$\ns\{\va,z; \Pi\}_{i}\{\Gamma, \forall x \phi^{\inp}\}_{w}\{\vb; \nsc, \phi(z/x)^{\inp}\}_{u}$}

\RightLabel{$\cut$}
\BinaryInfC{$\ns\{\va,z; \Pi\}_{i}\{\Gamma\}_{w}\{\vb; \nsc,\phi(z/x)^{\inp}\}_{u}$}
\DisplayProof
\end{tabular}
\end{flushleft}
\begin{flushright}
\AxiomC{$\ns\{\va,z; \Pi\}_{i}\{\Gamma, [y; \phi(y/x)^{\outp}]_{v}\}_{w}\{\vb; \nsc\}_{u}$}
\RightLabel{$\allr$}
\UnaryInfC{$\ns\{\va,z; \Pi\}_{i}\{\Gamma, \forall x \phi^{\outp}\}_{w}\{\vb; \nsc\}_{u}$}
\RightLabel{$\lwr \times n$}
\UnaryInfC{$\ns\{\va,z; \Pi\}_{i}\{\Gamma\}_{w}\{\vb; \nsc,\forall x \phi^{\outp}\}_{u}$}
\RightLabel{\lem~\ref{lem:hp-invert-fo}}
\UnaryInfC{$\ns\{\va,z; \Pi\}_{i}\{\Gamma\}_{w}\{\vb; \nsc, [y; \phi(y/x)^{\outp}]_{v}\}_{u}$}
\RightLabel{$\mrg$}
\UnaryInfC{$\ns\{\va,z; \Pi\}_{i}\{\Gamma\}_{w}\{\vb,y; \nsc, \phi(y/x)^{\outp}\}_{u}$}
\RightLabel{$\psub$}
\UnaryInfC{$\ns\{\va,z; \Pi\}_{i}\{\Gamma\}_{w}\{\vb,z; \nsc, \phi(z/x)^{\outp}\}_{u}$}
\RightLabel{$\ndr \times k$}
\UnaryInfC{$\ns\{\va,z,z; \Pi\}_{i}\{\Gamma\}_{w}\{\vb; \nsc, \phi(z/x)^{\outp}\}_{u}$}
\RightLabel{$\ctrv$}
\UnaryInfC{$\ns\{\va,z; \Pi\}_{i}\{\Gamma\}_{w}\{\vb; \nsc, \phi(z/x)^{\outp}\}_{u}$}

\AxiomC{$\prf_{2}$}

\RightLabel{$\cut$}
\BinaryInfC{$\ns\{\va,z; \Pi\}_{i}\{\Gamma\}_{w}\{\Delta\}_{u}$}
\DisplayProof
\end{flushright}

2.2.4. Let us consider the case where the cut formula is of the form $\exists x \phi$. We show the case where $\existsri$ derives the principal formulae in the left premise of $\cut$ and omit the $\existsrii$ case as it is similar.
\begin{center}
\AxiomC{$\ns\{\va; \Gamma, \exists x \phi^{\outp}, \phi(z/x)^{\outp}\}_{w}$}
\RightLabel{$\existsri$}
\UnaryInfC{$\ns\{\vb,z; \nsc\}_{u}\{\va; \Gamma, \exists x \phi^{\outp}\}_{w}$}

\AxiomC{$\ns\{\vb,z; \nsc\}_{u}\{\va,y; \Gamma, \phi(y/x)^{\inp}\}_{w}$}
\RightLabel{$\existsl$}
\UnaryInfC{$\ns\{\vb,z; \nsc\}_{u}\{\va; \Gamma, \exists x \phi^{\inp}\}_{w}$}

\RightLabel{$\cut$}
\BinaryInfC{$\ns\{\vb,z; \nsc\}_{u}\{\va; \Gamma\}_{w}$}
\DisplayProof
\end{center}
 The case is resolved as shown below bottom. Since the $\cut$ in $\prf_{2}$ is of height $h_{1} + h_{2} - 1$ and the cut formula of the other cut is of smaller complexity, each of these cuts may be eliminated by IH. As in the previous case, we apply $\nd$ a sufficient number (say, $n$) of times to shift $z$ to the $i$-component, where it may be removed by an application of $\ctrv$.
 \begin{flushleft}
\begin{tabular}{c c}
$\prf_{1} =$

&

\AxiomC{$\ns\{\vb,z; \nsc\}_{u}\{\va,y; \Gamma, \phi(y/x)^{\inp}\}_{w}$}
\RightLabel{$\existsl$}
\UnaryInfC{$\ns\{\vb,z; \nsc\}_{u}\{\va; \Gamma, \exists x \phi^{\inp}\}_{w}$}
\RightLabel{$\wk$}
\UnaryInfC{$\ns\{\vb,z; \nsc\}_{u}\{\va; \Gamma, \exists x \phi^{\inp}, \phi(z/x)^{\outp}\}_{w}$}
\DisplayProof
\end{tabular}
\end{flushleft}
\begin{flushleft}
\begin{tabular}{c c}
$\prf_{2} =$

&

\AxiomC{$\ns\{\vb,z; \nsc\}_{u}\{\va; \Gamma, \exists x \phi^{\outp}, \phi(z/x)^{\outp}\}_{w}$}

\AxiomC{$\prf_{1}$}

\RightLabel{$\cut$}
\BinaryInfC{$\ns\{\vb,z; \nsc\}_{u}\{\va; \Gamma, \phi(z/x)^{\outp}\}_{w}$}
\DisplayProof
\end{tabular}
\end{flushleft}
\begin{center}
\AxiomC{$\prf_{2}$}

\AxiomC{$\ns\{\vb,z; \nsc\}_{u}\{\va,y; \Gamma, \phi(y/x)^{\inp}\}_{w}$}
\RightLabel{$\psub$}
\UnaryInfC{$\ns\{\vb,z; \nsc\}_{u}\{\va,z; \Gamma, \phi(z/x)^{\inp}\}_{w}$}
\RightLabel{$\ndr \times n$}
\UnaryInfC{$\ns\{\vb,z,z; \nsc\}_{u}\{\va; \Gamma, \phi(z/x)^{\inp}\}_{w}$}
\RightLabel{$\ctrv$}
\UnaryInfC{$\ns\{\vb,z; \nsc\}_{u}\{\va; \Gamma, \phi(z/x)^{\inp}\}_{w}$}
\RightLabel{$\cut$}
\BinaryInfC{$\ns\{\vb,z; \nsc\}_{u}\{\Gamma\}_{w}$}
\DisplayProof
\end{center}
 This concludes the proof of the cut-elimination theorem.
\end{proof}

 Although we have syntactic cut-elimination for the intuitionistic systems, it is not clear how to prove such a theorem for the G\"odel-Dummett calculi. One issue is that the $\lin$ rule appears to resist permutations with $\cut$. This is in spite of the fact that $\cut$ is not required for completeness in each of these systems by the cut-free completeness theorem (\thm~\ref{thm:cut-free-comp}). As a case in point, let us consider the case where $\cut$ is applied between $\lin$ and a unary rule $(r)$ from one of our nested calculi, as shown below:
\begin{flushleft}
\begin{tabular}{c c c}
$\prf$

&

$=$

&

\AxiomC{$\Sigma\{[\nsb,\phi^{\outp},[\nsc]_{u}]_{v}\}_{w}$}
\AxiomC{$\Sigma\{[\nsc,[\nsb,\phi^{\outp}]_{v}]_{u}\}_{w}$}
\RightLabel{$\lin$}
\BinaryInfC{$\Sigma\{[\nsb,\phi^{\outp}]_{v},[\nsc]_{u}\}_{w}$}
\DisplayProof
\end{tabular}
\end{flushleft}
\begin{flushright}
\AxiomC{$\ns'$}
\RightLabel{$(r)$}
\UnaryInfC{$\Sigma\{[\nsb, \phi^{\inp}]_{v},[\nsc]_{u}\}_{w}$}
\AxiomC{$\prf$}
\RightLabel{$\cut$}
\BinaryInfC{$\Sigma\{[\nsb]_{v},[\nsc]_{u}\}_{w}$}
\DisplayProof
\end{flushright}


 If we could show the hp-invertibility of $\lin$, indicated by $\star$ below, then the following would demonstrate how to eliminate $\cut$ in the case above.
\begin{center}
\begin{tabular}{c c @{\hskip 3em} c c}
$\prf_{1} =$

&

\AxiomC{$\ns'$}
\RightLabel{$(r)$}
\UnaryInfC{$\Sigma\{[\nsb, \phi^{\inp}]_{v},[\nsc]_{u}\}_{w}$}
\RightLabel{$\star$}
\UnaryInfC{$\Sigma\{[\nsb, \phi^{\inp},[\nsc]_{u}]_{v}\}_{w}$}
\DisplayProof

&

$\prf_{2} =$

&

\AxiomC{$\ns'$}
\RightLabel{$(r)$}
\UnaryInfC{$\Sigma\{[\nsb, \phi^{\inp}]_{v},[\nsc]_{u}\}_{w}$}
\RightLabel{$\star$}
\UnaryInfC{$\Sigma\{[\nsc,[\nsb, \phi^{\inp}]_{v}]_{u}\}_{w}$}
\DisplayProof
\end{tabular}
\end{center}
\begin{center}
\AxiomC{$\prf_{1}$}
\AxiomC{$\Sigma\{[\nsb,\phi^{\outp},[\nsc]_{u}]_{v}\}_{w}$}
\RightLabel{$\cut$}
\BinaryInfC{$\Sigma\{[\nsb,[\nsc]_{u}]_{v}\}_{w}$}
\AxiomC{$\prf_{2}$}
\AxiomC{$\Sigma\{[\nsc,[\nsb,\phi^{\outp}]_{v}]_{u}\}_{w}$}
\RightLabel{$\cut$}
\BinaryInfC{$\Sigma\{[\nsc,[\nsb]_{v}]_{u}\}_{w}$}
\RightLabel{$\lin$}
\BinaryInfC{$\Sigma\{[\nsb]_{v},[\nsc]_{u}\}_{w}$}
\DisplayProof
\end{center}

 However, it is not clear if the $\lin$ rule is hp-invertible, thus obstructing the above cut-elimination strategy. 
 We leave the question of syntactic cut-elimination for $\ngd$, $\nndl$, and $\ncdl$ open and defer the problem to future work.


\section{Concluding Remarks and Possible Extensions}
\label{sec:conclusion}

 In this paper, we gave a unified nested sequent presentation of propositional and first-order intuitionistic and G\"odel-Dummett logics. We showed how to capture both non-constant and constant domain reasoning by means of reachability rules, which relied on an extension of the nested sequent formalism that included signatures in nested sequents. In addition, we defined a novel structural rule $\lin$, which captures the linearity property of Kripke frames for G\"odel-Dummett logics. Our analytic systems were shown to possess a variety of (hp-)admissibility and (hp-)invertibility properties, are sound and cut-free complete, and syntactic cut-elimination was shown for $\nipc$, $\nnd$, and $\ncd$. As such, our intuitionistic systems serve as viable base systems for the development of a general nested proof theory for intermediate logics.
 
 In future work, we aim to consider further extensions of $\nipc$, $\nnd$, and $\ncd$ to capture other intermediate logics within the formalism of nested sequents. In particular, we aim to investigate the nested proof theory of intermediate logics whose frames satisfy \emph{disjunctive linear conditions (DLC)} or \emph{disjunctive branching conditions (DBC)}. We define a DLC to be a formula of the form
$$
\bigwedge_{1 \leq i \leq n} w_{i} \leq w_{i+1} \rightarrow C_{1} \quad \text{ such that } \quad C_{1} = \bigvee_{1 \leq j \leq k} A_{j}
$$
 where $A_{j} \in \{w_{i+1} \leq w_{i} \ \vert \ 1 \leq i \leq n\}$. The antecedent of a DLC consists of a linear sequence of related worlds, and the consequent contains a disjunction of relations, each of which stipulates that a successor world relates to its predecessor in the linear sequence of the antecedent. We define a DBC to be a formula of the form
$$
\bigwedge_{1 \leq i \leq n} w \leq w_{i} \rightarrow C_{2} \quad \text{ such that } \quad C_{2} = \bigvee_{1 \leq j \leq k} A_{j}
$$
 where $A_{j} \in \{u = v, u \leq v' \ \vert \ u,v \in \{w_{1}, \ldots, w_{n}\}, v' \in \{w,w_{1}, \ldots, w_{n}\}\}$. The antecedent of a DBC consists of a tree of depth one with a root world $w$ that relates to $n$ children worlds, and the consequent consists of a disjunction of equations identifying children worlds and relations that relate worlds occurring in the antecedent. 
 
 Such frame conditions appear to be readily convertible into nested structural rules. For example, each DLC appears to correspond to a structural rule, which we dub $\dlc$, of the following form:

\medskip
 
\begin{center}
\AxiomC{$\Big\{\nsa\{\nsb_{1}, [\ldots [\nsb_{i},\nsb_{i+1},[\nsb_{i+2},\ldots [\nsb_{n}]_{w_{n}} \ldots]_{w_{i+2}} ]_{w_{i}} \ldots ]_{w_{2}} \}_{w_{1}} \ \Big\vert \  w_{i+1} \leq w_{i} \in C_{1}  \Big\}$}
\UnaryInfC{$\nsa\{\nsb_{1}, [ \ldots [\nsb_{n}]_{w_{n}} \ldots]_{w_{2}} \}_{w_{1}}$}
\DisplayProof
\end{center}
 
\medskip

 The conclusion of a $\dlc$ contains a nested linear sequence of components of depth $n$ corresponding to the (linear) antecedent of a DLC. Each premise `merges' a parent $w_{i}$-component with its child $w_{i+1}$-component \iffi $w_{i+1} \leq w_{i}$ occurs in the consequent of the DLC. 
 
 Known intermediate logics appear to admit a nested sequent characterization by means of the above rules. For example, the intermediate logic of \emph{bounded-depth 2} ($\bdtwo$) (see~\citet{GabSheSkv09}) is obtained from \ipc \ by imposing the following frame condition on \ipc-frames: for each world $w$, $u$, and $v$, if $w \leq u \leq v$, then $u \leq w$ or $v \leq u$. We observe that this condition is in fact a DLC, and thus, we may transform the condition into the following $\dlc$ structural rule:
 
\medskip

\begin{center}
\AxiomC{$\nsa\{\nsb_{1}, \nsb_{2}, [\nsb_{3}]_{w_{3}}\}_{w_{1}}$}
\AxiomC{$\nsa\{\nsb_{1}, [\nsb_{2}, \nsb_{3}]_{w_{2}}\}_{w_{1}}$}
\RightLabel{$(bd_{2})$}
\BinaryInfC{$\nsa\{\nsb_{1}, [\nsb_{2}, [\nsb_{3}]_{w_{3}}]_{w_{2}}\}_{w_{1}}$}
\DisplayProof
\end{center}

\medskip

 We note that the logic $\bdtwo$ may be obtained from \ipc \ by extending \ipc's axiomatization with the $\bdtwo$ axiom $\phi \lor (\phi \imp (\psi \lor (\psi \imp \bot)))$. Indeed, one can show that the above structural rule derives this axiom if we add it to our nested calculus $\nipc$. Therefore, we should be able to provide a nested calculus for \emph{Smetanich logic} (see~\citet{ChaZak97}) as well (which is axiomatized by adding the $\bdtwo$ axiom to the axioms of \gd), by extending $\ngd$ with the $(bd_{2})$ rule above.
 
 Each DBC condition also appears to correspond to a nested structural rule, which we dub $\dbc$, of the following form:
 
\medskip

\begin{center}
\AxiomC{$\nsd_{1}, \ldots, \nsd_{k}$}
\UnaryInfC{$\nsa\{\nsb, [\nsc_{1}]_{w_{1}}, \ldots, [\nsc_{j}]_{w_{j}}, \ldots, [\nsc_{k}]_{w_{k}}, \ldots, [\nsc_{n}]_{w_{n}}\}_{w}$}
\DisplayProof
\end{center}

\medskip

 The premises fall into three distinct classes depending on the relations that occur in the consequent $C_{2}$ of the given DBC. We let $1 \leq m \leq k$ and define each premise accordingly:

\begin{enumerate}

\item If the $m^{th}$ disjunct of $C_{2}$ is the relation $w_{j} \leq w$, then the premise is:
$$
\nsd_{m} = \nsa\{\nsb, \nsc_{j}, [\nsc_{1}]_{w_{1}}, \ldots, [\nsc_{k}]_{w_{k}}, \ldots, [\nsc_{n}]_{w_{n}}\}_{w}
$$
 where the $w_{j}$-component is merged into the $w$-component.
 
\item If the $m^{th}$ disjunct of $C_{2}$ is the relation $w_{j} \leq w_{k}$, then the premise is:
$$
\nsd_{m} = \nsa\{\nsb, [\nsc_{1}]_{w_{1}}, \ldots, [\nsc_{j}, [\nsc_{k}]_{w_{k}}]_{w_{j}}, , \ldots, [\nsc_{n}]_{w_{n}}\}_{w}
$$
 where the $w_{k}$-component is placed within the $w_{j}$-component. 
 
\item If the $m^{th}$ disjunct of $C_{2}$ is the equation $w_{j} = w_{k}$, then the premise is:
$$
\nsd_{m} = \nsa\{\nsb, [\nsc_{1}]_{w_{1}}, \ldots, [\nsc_{j},\nsc_{k}]_{w_{j}}, \ldots, [\nsc_{n}]_{w_{n}}\}_{w}
$$
 where the $w_{k}$-component and $w_{j}$-component are contracted.

\end{enumerate}

 We observe that the connectivity condition imposed on \gd-frames (see \dfn~\ref{def:frame-model}) falls within the class of DBCs. In fact, the linearity rule $\lin$ serves as an example of a $\dbc$ structural rule, having two premises determined by case 2 in the three cases described above. Moreover, as classical logic is characterizable over intuitionistic frames satisfying \emph{symmetry} (i.e. for any two worlds $w$ and $u$, if $w \leq u$, then $u \leq w$), we could transform our intuitionistic systems into classical systems via the addition of the following $\dbc$ structural rule, whose premise is obtained from case 1 above.

\begin{center}
\AxiomC{$\nsa\{\nsb_{1}, \nsb_{2}\}_{w}$}
\RightLabel{$(sym)$}
\UnaryInfC{$\nsa\{\nsb_{1}, [\nsb_{2}]_{u}\}_{w}$}
\DisplayProof
\end{center}

 Indeed, one can derive the law of the excluded middle $\phi \lor (\phi \imp \bot)$ by adding the above rule to $\nipc$, $\nnd$, or $\ncd$, thus yielding a nested system for propositional or first-order classical logic.
 
 The DLC and DBC conditions are special in that the structural rules they generate naturally correspond to reasoning within tree structures. Moreover, it is conceivable that our cut-free completeness theorem (\thm~\ref{thm:cut-free-comp}) could be adapted to cover intermediate logics satisfying DLC and DBC conditions, or that terminating proof-search algorithms could be defined with such rules (in the propositional setting). 
 


\section*{Funding} Work supported by the European Research Council (ERC) Consolidator Grant 771779. 

\bibliographystyle{apacite}
\bibliography{bibliography}

\appendix

\section{Cut-free Completeness Theorem}\label{app:completeness}

 We let $\ncalc \in \{\nipc, \nnd, \ncd, \ngd, \nndl, \ncdl\}$ and prove the cut-free completeness of $\ncalc$ by extracting a counter-model from failed proof-search. First, we introduce useful terminology. We define a \emph{pseudo-derivation} to be an object constructed by applying rules from $\ncalc$ bottom-up (potentially an infinite number of times) to an arbitrary nested sequent (which serves as the \emph{conclusion} of the pseudo-derivation). Note that a derivation is a pseudo-derivation with all top sequents axiomatic. We define a \emph{branch} $\branch$ in a pseudo-derivation to be a path of nested sequents satisfying: (1) the conclusion of the pseudo-derivation is the $1^{st}$ element of the path, (2) if a nested sequent in the pseudo-derivation is the $n^{th}$ element in the path and is not an instance of $\id$ or $\botl$, then one of its premises is the $(n+1)^{th}$ element of the path.
 For a nested sequent $\ns$, we use the notation $w : A^{\io} \in \ns$ to indicate that $\ns\{A^{\io}\}_{w}$ with $\io \in \{\inp,\outp\}$, the notation $x : \va(w) \in \ns$ to indicate that the variable $x$ occurs in the signature $\va$ of the $w$-component of $\ns$, and the notation $\lab(\ns)$ to be the set of all labels occurring in $\ns$. We define a nested sequent $\ns$ to be \emph{linear} \iffi for every $w,u \in \lab(\ns)$, either $\rable{w}{u}$ or $\rable{u}{w}$. The following lemmas are useful in our proof:

\begin{lemma}\label{lem:rable-preserved-up} Let $\ncalc \in \{\nipc, \nnd, \ncd, \ngd, \nndl, \ncdl\}$.\\
\begin{enumerate}

\item If $\rable{w}{u}$ holds for the conclusion of a rule $(r)$ in $\ncalc$, then $\rable{w}{u}$ holds for the premises of $(r)$;

\item If $w : p(x_{1},\ldots,x_{n}) \in \ns$ and $\ns$ is the conclusion of a rule $(r)$ in $\ncalc$ with $\ns_{1}, \ldots, \ns_{n}$ the premises of $(r)$, then for $1 \leq i \leq n$, $w : p(x_{1},\ldots,x_{n}) \in \ns_{i}$ holds;

\item If $x : \va(w) \in \ns$ and $\ns$ is the conclusion of a rule $(r)$ in $\ncalc$ with $\ns_{1}, \ldots, \ns_{n}$ the premises of $(r)$, then for $1 \leq i \leq n$, $x$ occurs in the signature of the $w$-component of $\ns_{i}$.

\end{enumerate}
\end{lemma}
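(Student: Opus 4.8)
The statement to prove is Lemma~\ref{lem:rable-preserved-up}, which asserts that three properties are preserved when passing from the conclusion of a rule to its premises: (1) the reachability relation $\rable{w}{u}$, (2) the occurrence of an atomic formula $w : p(\vec{x})$, and (3) the occurrence of a variable $x$ in the signature of a $w$-component.

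\begin{proof}[Proof plan]
The plan is to prove all three claims simultaneously by a case analysis on the rule $(r)$ applied, inspecting each rule of the calculus $\ncalc$ as displayed in \fig~\ref{fig:nested-calculi}, \fig~\ref{fig:nested-calculi-fo}. The unifying observation is that every rule of every calculus we consider acts \emph{additively} and only ever \emph{adds} structure (nestings, signature variables, formulae) or rearranges an existing branch into a deeper linear chain; no rule ever deletes a label, removes a formula occurrence, prunes a signature variable, or severs an edge in the tree $tr(\ns)$ in a way that destroys a path that was present in the conclusion.

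For claim (1), I would group the rules into three kinds. First, the purely local logical rules ($\disl$, $\disr$, $\conl$, $\conr$, $\impl$, $\idfo$-style initial rules, $\doms$, and the quantifier rules that operate within fixed components) leave the underlying tree $tr(\ns)$ unchanged up to the addition of signature variables and formulae, so the edge relation $E$ of $tr(\ns)$ is identical in conclusion and premise(s), whence $\rable{w}{u}$ transfers verbatim. Second, the rules $\impr$, $\existsl$, $\existsrii$, $\allr$, $\alllii$ (and $\existsri$ in the non-constant cases when it introduces a child) introduce a \emph{new} nested component $[\,\cdot\,]_{v}$ with $v$ fresh; this only enlarges $E$, so any path witnessing $\rable{w}{u}$ in the conclusion survives in the premise. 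Third, the rule $\lin$ is the interesting one: its conclusion $\nsa\{[\nsb]_{v},[\nsc]_{u}\}_{w}$ has edges $(w,v)$ and $(w,u)$, while its left premise $\nsa\{[\nsb,[\nsc]_{u}]_{v}\}_{w}$ has edges $(w,v)$ and $(v,u)$, and symmetrically on the right. One checks that in either premise both $v$ and $u$ remain reachable from $w$ (indeed $\rable{w}{v}$ and $\rable{w}{u}$ still hold, the latter now via $v$ in the left premise), and no other reachable pair is lost, since all of $tr(\nsa)$ outside the displayed components is untouched; hence $\rable{w'}{u'}$ is preserved for every pair $w',u'$.

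Claims (2) and (3) are then nearly immediate by the same case analysis: I would note that no rule has an occurrence of $p(\vec{x})^{\io}$ or a signature variable appearing only in its conclusion, i.e. every such occurrence in the conclusion either literally reappears in each premise (for local and tree-growing rules, and for $\lin$, where $\nsb$, $\nsc$ and all of $\nsa$ are carried up unchanged) or is the principal formula of an initial rule, in which case $(r)$ has no premises and the claim is vacuous. The side conditions on the quantifier rules do not cause trouble here because they only restrict when a rule may fire, not what the premises look like. The only point demanding slight care is that the notation $w : p(\vec{x}) \in \ns$ refers to a fixed label $w$; since labels are unique and never renamed by a rule application (only $\lsub$ and $\nec$ rename or add labels, and these are admissible rather than primitive rules of $\ncalc$), the $w$-component of a premise is the ``same'' component as that of the conclusion, possibly with extra material. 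I do not anticipate a genuine obstacle; the main thing to get right is the bookkeeping for $\lin$ in claim (1), verifying that linearizing a binary branch into a chain preserves, rather than creates or destroys, the reachable pairs among the affected labels.
\end{proof}
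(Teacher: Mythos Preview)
Your proposal is correct and takes the same approach as the paper, which simply records ``By inspection of the rules of $\ncalc$''; you have merely spelled out that inspection in detail. A minor inaccuracy: $\existsl$, $\existsrii$, $\alllii$, and $\existsri$ do not introduce a new nested component (only $\impr$ and $\allr$ do), but since your argument for both the ``local'' and ``tree-growing'' groups reaches the same conclusion, this misclassification is harmless.
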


\begin{proof} By inspection of the rules of $\ncalc$.
\end{proof}

 In essence, the above lemma states that the $\rable{}{}$ relation is preserved bottom-up in rule applications and the position of atomic formulae and variables is bottom-up fixed.
 
\begin{lemma}\label{lem:fresh-delete}
Let $\ncalc \in \{\nnd, \ncd, \nndl, \ncdl\}$. If $\va,y; \nsa$ is derivable in $\ncalc$ with $y$ not occurring in $\va; \nsa$, then $\va; \nsa$ is derivable in $\ncalc$.
\end{lemma}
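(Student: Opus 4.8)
\textbf{Proof plan for Lemma~\ref{lem:fresh-delete}.} The plan is to proceed by induction on the height of the given derivation of $\va,y;\nsa$ in $\ncalc$, tracking the fresh variable $y$ through every rule application and showing that each inference can be mimicked once $y$ is removed from the root signature. First I would handle the base cases: if $\va,y;\nsa$ is an instance of $\idfo$ or $\botl$, then deleting $y$ from the root signature leaves another instance of the same initial rule, since these rules place no constraint on signatures beyond the availability side condition of $\idfo$, and by hypothesis $y$ does not occur in $\va;\nsa$ (in particular it is not among the $\vec{x}$ of any principal atom rooted at $w_0$). For the inductive step I would distinguish cases according to the last rule $(r)$ applied. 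For every rule that does not touch the root signature of $\va,y;\nsa$ (which is the vast majority — all propositional rules, and all first-order rules whose active component is not $w_0$, or whose signature manipulation does not involve $y$), I would simply apply the induction hypothesis to the premise(s) and re-apply $(r)$; here I must check that the side conditions of $(r)$ (the $\rable{}{}$-conditions, availability, freshness) are unaffected by deleting $y$ from the root, which follows because removing a variable that does not occur elsewhere cannot create or destroy any $\rable{}{}$-path and cannot make a previously-fresh variable non-fresh.

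The genuinely delicate cases are those rules that read from or write to the root signature, namely $\doms$, $\existsl$, $\existsri$, $\existsrii$, $\allr$, $\allli$, and $\alllii$, together with the structural rules $\ndr$, $\ddr$, $\lin$ when the root component is active. The point to watch is the following: it could happen that $(r)$ is precisely the rule that \emph{introduced} the occurrence of $y$ into the root signature when read bottom-up (e.g. $\existsrii$ or $\alllii$ applied with $w = w_0$ and $y$ the fresh variable of that rule, or $\ndr/\ddr$ shifting a $y$ from a child into the root). In that situation, deleting $y$ from the conclusion must be accompanied by a corresponding modification of the premise: for $\existsrii$ and $\alllii$ the variable $y$ appears in the premise both in the root signature and inside an instantiated formula $\phi(y/x)$, so after removing $y$ from the root of the conclusion I would, by the induction hypothesis applied to a suitably $\psub$-renamed premise (renaming $y$ to a genuinely fresh $z$ that I then also delete), recover a premise from which $(r)$ re-derives the target; for $\ndr$ and $\ddr$ I would instead note that if $y$ is shifted \emph{into} the root it must have come from a child component where it still occurs, so deleting $y$ from the root of the conclusion corresponds to deleting it from the child in the premise, which is again an instance of the induction hypothesis. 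The reachability side conditions $\dag_1,\ldots,\dag_6$ must be re-verified in each sub-case, but each is a statement about $\rable{}{}$-paths and freshness, both of which are insensitive to the removal of a nowhere-else-occurring variable.

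I expect the main obstacle to be the bookkeeping in exactly these signature-writing cases, in particular making sure that when $(r)$ is $\existsrii$ or $\alllii$ and the deleted $y$ coincides with that rule's own fresh eigenvariable, the induction is invoked on a derivation of the \emph{same} height — which it is, since $\psub$ is height-preserving by Lemma~\ref{lem:sub-admiss}, so renaming $y$ to a new fresh $z$ and then applying the induction hypothesis to delete $z$ stays within the inductive measure. A clean way to organize this is to observe that, by Lemma~\ref{lem:rable-preserved-up}(3), once $y$ sits in the root signature of some node along the branch it stays there going upward until the (unique, reading top-down) rule that introduces it; so either $y$ is never introduced and the root of the topmost sequent also carries a spurious $y$ that can be dropped at the axiom, or it is introduced by one of the finitely many signature-writing rules, which are handled as above. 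With that structural observation in hand, every case reduces to "apply the induction hypothesis (possibly after a height-preserving $\psub$) and re-apply $(r)$," and the remaining work is the routine side-condition checking, which I would state but not belabor.
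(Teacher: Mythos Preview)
Your overall induction-on-height shape and the base cases are fine, but the inductive step has a genuine gap precisely where the paper's proof does its only real work. You claim that the side conditions $\dag_{1},\ldots,\dag_{6}$ are ``statements about $\rable{}{}$-paths and freshness, both of which are insensitive to the removal of a nowhere-else-occurring variable.'' That is false for $\dag_{3}(\nd)$ and $\dag_{5}(\nd)$: these require \emph{availability} of the instantiating variable, and availability of $y$ can be witnessed solely by the occurrence of $y$ in the root signature (indeed, by hypothesis $y$ occurs nowhere else). Concretely, if the last rule is $\allli$ (or $\existsri$) with $y$ as the instantiating variable, then the premise contains $\phi(y/x)^{\inp}$ (resp.\ $\phi(y/x)^{\outp}$), so $y$ \emph{does} occur in the premise and the induction hypothesis does not apply to it; and in the conclusion with $y$ deleted from the root, $y$ is no longer available, so re-applying $\allli$ (resp.\ $\existsri$) is illegal. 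This is exactly the case the paper singles out, and its fix is not a $\psub$-renaming trick but a rule \emph{swap}: replace the $\allli$ inference by an $\alllii$ inference (resp.\ $\existsri$ by $\existsrii$) anchored at the root, which is licensed because $y$ is fresh in $\va;\nsa$ and $w_{0}\rable{}{u}$ always holds. You never mention this substitution.

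Conversely, the case you spend the most effort on---$\existsrii$ or $\alllii$ with $y$ as that rule's fresh eigenvariable---cannot occur: $y$ sits in the root signature of the conclusion $\va,y;\nsa$, so $y$ is not fresh there, hence cannot be the eigenvariable of a rule whose side condition demands freshness in the conclusion. (Relatedly, $\ndr$ and $\ddr$ are admissible, not primitive, rules of $\ncalc$, so they do not appear in derivations and need no case.) Once you redirect attention from the impossible $\existsrii/\alllii$ case to the actual $\existsri/\allli$ case and use the $\existsrii/\alllii$ replacement, the proof goes through exactly as in the paper.
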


\begin{proof} The lemma is shown by induction on the height of the given derivation. We prove the lemma for $\nndl$ as the remaining cases are similar.

\textit{Base case.} The $\botl$ case is simple, so we show the $\idfo$ case. Suppose we have an instance of $\idfo$ as shown below left, where the variable $y$ does not occur anywhere else in the nested sequent. Then, as shown below right, $y$ may be deleted as this is still an instance of $\idfo$.
\begin{center}
\begin{tabular}{c c c}
\AxiomC{}
\RightLabel{$\idfo$}
\UnaryInfC{$\va,y; \nsa \hol p(\vec{x})^{\inp} \hor_{w} \hol p(\vec{x})^{\outp} \hor_{u}$}
\DisplayProof

&

$\leadsto$

&

\AxiomC{}
\RightLabel{$\idfo$}
\UnaryInfC{$\va; \nsa \hol p(\vec{x})^{\inp} \hor_{w} \hol p(\vec{x})^{\outp} \hor_{u}$}
\DisplayProof
\end{tabular}
\end{center}

\textit{Inductive step.} Most cases of the induction step are trivial with the exception of the $\existsri$ and $\allli$ cases. We show how the $\allli$ case is resolved and note that the $\existsri$ case is similar.

 Suppose that the variable $y$ is active in an $\allli$ inference as shown below left. By substituting the $\allli$ inference for an $\alllii$ inference, as shown below right, we obtain the desired conclusion.
\begin{center}
\begin{tabular}{c c c}
\AxiomC{$\va,y; \nsa \hol \forall x \phi^{\inp} \hor_{w} \hol  \nsb, \phi(y/x)^{\inp} \hor_{u}$}
\RightLabel{$\allli$}
\UnaryInfC{$\va,y; \nsa \hol \forall x \phi^{\inp} \hor_{w} \hol  \nsb \hor_{u}$}
\DisplayProof

$\leadsto$

&

\AxiomC{$\va,y; \nsa \hol \forall x \phi^{\inp} \hor_{w} \hol  \nsb, \phi(y/x)^{\inp} \hor_{u}$}
\RightLabel{$\alllii$}
\UnaryInfC{$\va; \nsa \hol \forall x \phi^{\inp} \hor_{w} \hol  \nsb \hor_{u}$}
\DisplayProof
\end{tabular}
\end{center}
\end{proof}

 We now prove our cut-free completeness result. We take a nested sequent $\vec{x}; \phi(\vec{x})$ with $\vec{x}$ all free variables in $\phi(\vec{x})$ and a fresh label $z$ that does not occur in $\vec{x}; \phi(\vec{x})$, and apply rules from $\ncalc$ bottom-up on $z,\vec{x}; \phi(\vec{x})$ with the goal of finding a proof thereof. If a proof is found, then by \lem~\ref{lem:fresh-delete}, we know that $\vec{x}; \phi(\vec{x})$ is derivable in $\ncalc$, and if a proof is not found, then we provide a counter-model for $z,\vec{x}; \phi(\vec{x})$, which is also a counter-model for $\vec{x}; \phi(\vec{x})$ by \dfn~\ref{def:sequent-semantics}. We note that the inclusion of the fresh variable $z$ in our input is required to ensure that the domains of the counter-model are non-empty, which explains its presence. We refer to this special variable $z$ as the \emph{starting variable} and fix it throughout the course of the proof.

\begin{customthm}{\ref{thm:cut-free-comp}}
Let $\mathrm{L} \in \{\ipc, \nd, \cd, \gd, \ndl, \cdl\}$ and $\vec{x}; \phi(\vec{x})$ be a nested sequent with $\vec{x}$ all free variables in $\phi(\vec{x})$. If a nested sequent $\vec{x}; \phi(\vec{x})$ is $\mathrm{L}$-valid, then it is derivable in $\ncalc_{\mathrm{L}}$.
\end{customthm}

\begin{proof} We prove the theorem for $\nndl$ as the other cases are similar. Let $\vec{x}; \phi(\vec{x})$ be a nested sequent with $\vec{x}$ all free variables in $\phi(\vec{x})$ and let $z$ be our starting variable. We define a proof-search algorithm $\prove$ that applies rules from $\nndl$ bottom-up, generating a pseudo-derivation of $z,\vec{x}; \phi(\vec{x})$. If a proof is found, then by \lem~\ref{lem:fresh-delete}, we know that $\vec{x}; \phi(\vec{x})$ is derivable in $\nndl$, and if a proof is not found, then we construct a \ndl-model $M$ witnessing the \ndl-invalidity of $\vec{x}; \phi(\vec{x})$. Let us now describe the proof-search algorithm $\prove$.\\

\noindent
$\prove$. We take $z,\vec{x}; \phi(\vec{x})$ as input and continue to the next step.\\

$\idfo$ and $\botl$. Let $\branch_{1}, \ldots, \branch_{n}$ be all branches in the pseudo-derivation so far constructed with $\ns_{1}, \ldots, \ns_{n}$ the top sequents of each branch, respectively. For each branch $\branch_{i}$ such that $\ns_{i}$ is of the form $\idfo$ or $\botl$, halt $\prove$. If $\prove$ has halted on $\branch_{i}$ for each $1 \leq i \leq n$, then return $\success$ as we have found a proof of the input. If $\prove$ has not halted on $\branch_{i}$ for each $1 \leq i \leq n$, then let $\branch_{j_{1}}, \ldots, \branch_{j_{k}}$ be the remaining branches for which $\prove$ did not halt. For each branch, copy the top sequent above itself and continue to the next step.

\medskip

$\doms$. Let $\branch_{1}, \ldots, \branch_{n}$ be all branches in the pseudo-derivation so far constructed with $\ns_{1}, \ldots, \ns_{n}$ the top sequents of each branch, respectively. We successively consider each $\branch_{i}$ for $1 \leq i \leq n$, performing a set of operations which extend the branch with $\disl$ rules bottom-up. Suppose that $\branch_{1},\ldots,\branch_{k}$ have already been processed, so that $\branch_{k+1}$ is the current branch under consideration. Let $\ns_{k+1}$ be of the form
$$
\ns\{\va_{1}; p_{1,1}(\vec{x}_{1,1})^{\inp}, \ldots, p_{1,m_{1}}(\vec{x}_{1,m_{1}})^{\inp}\}_{w_{1}} \cdots \{\va_{\ell}; p_{\ell,1}(\vec{x}_{\ell,1})^{\inp}, \ldots, p_{\ell,m_{\ell}}(\vec{x}_{\ell,m_{\ell}})^{\inp}\}_{w_{\ell}}
$$
 with $p_{i,j}(\vec{x}_{i,j})^{\inp}$ all atomic input formulae in $\ns_{k+1}$ for $1 \leq i \leq \ell$ and $1 \leq j \leq m_{i}$. We successively consider each atomic input formula, and apply the $\doms$ rule bottom-up in each case. This yields a new branch extending $\branch_{k+1}$, whose top sequent incorporates the variables from all atomic input formulae into the signatures of their respective components. Once each branch $\branch_{i}$ has been processed for each $1 \leq i \leq n$, we continue to the next step.

\medskip

$\disl$. Let $\branch_{1}, \ldots, \branch_{n}$ be all branches in the pseudo-derivation so far constructed with $\ns_{1}, \ldots, \ns_{n}$ the top sequents of each branch, respectively. We successively consider each $\branch_{i}$ for $1 \leq i \leq n$, performing a set of operations which extend the branch with $\disl$ rules bottom-up. Suppose that $\branch_{1},\ldots,\branch_{k}$ have already been processed, so that $\branch_{k+1}$ is the current branch under consideration. Let $\ns_{k+1}$ be of the form
$$
\ns\{\phi_{1} \lor \psi_{1}^{\inp}\}_{w_{1}} \cdots \{\phi_{m} \lor \psi_{m}^{\inp}\}_{w_{m}}
$$
 with $\phi_{i} \lor \psi_{i}^{\inp}$ all disjunctive input formulae in $\ns_{k+1}$. We successively consider each disjunctive input formula, and apply the $\disl$ rule bottom-up in each case. This yields $2^{m}$ new branches extending $\branch_{k+1}$, each having a top sequent of the form
$$
\ns\{\chi_{1}^{\inp}\}_{w_{1}} \cdots \{\chi_{n}^{\inp}\}_{w_{n}}
$$
 where $\chi_{i} \in \{\phi_{i},\psi_{i}\}$ for $1 \leq i \leq n$. Once each branch $\branch_{i}$ has been processed for each $1 \leq i \leq n$, we continue to the next step.
 
\medskip 
 
$\disr$. Let $\branch_{1}, \ldots, \branch_{n}$ be all branches in the pseudo-derivation so far constructed with $\ns_{1}, \ldots, \ns_{n}$ the top sequents of each branch, respectively. We successively consider each $\branch_{i}$ for $1 \leq i \leq n$, performing a set of operations which extend the branch with $\disr$ rules bottom-up. Suppose that $\branch_{1},\ldots,\branch_{k}$ have already been processed, so that $\branch_{k+1}$ is the current branch under consideration. Let $\ns_{k+1}$ be of the form
$$
\ns\{\phi_{1} \lor \psi_{1}^{\outp}\}_{w_{1}} \cdots \{\phi_{n} \lor \psi_{n}^{\outp}\}_{w_{n}}
$$
 with $\phi_{i} \lor \psi_{i}^{\outp}$ all disjunctive output formulae in $\ns_{k+1}$. We successively consider each disjunctive output formula, and apply the $\disr$ rule bottom-up in each case. This extends $\branch_{k+1}$, so that it now has a top sequent of the form
$$
\ns\{\phi_{1}^{\outp}, \psi_{1}^{\outp}\}_{w_{1}} \cdots \{\phi_{n}^{\outp}, \psi_{n}^{\outp}\}_{w_{n}}
$$
 Once each branch $\branch_{i}$ has been processed for each $1 \leq i \leq n$, we continue to the next step.
 
\medskip

$\conl$. Similar to the $\disr$ case above.

\medskip

$\conr$. Similar to the $\disl$ case above.

\medskip

$\impl$. Let $\branch_{1}, \ldots, \branch_{n}$ be all branches in the pseudo-derivation so far constructed with $\ns_{1}, \ldots, \ns_{n}$ the top sequents of each branch, respectively. We successively consider each $\branch_{i}$ for $1 \leq i \leq n$, performing a set of operations which extend the branch with $\impl$ rules bottom-up. Suppose that $\branch_{1},\ldots,\branch_{k}$ have already been processed, so that $\branch_{k+1}$ is the current branch under consideration. Let $\ns_{k+1}$ be of the form
$$
\ns\{\phi_{1} \imp \psi_{1}^{\inp}\}_{w_{1}} \cdots \{\phi_{m} \imp \psi_{m}^{\inp}\}_{w_{m}}
$$
 with $\phi_{i} \imp \psi_{i}^{\inp}$ all implicational input formulae in $\ns_{k+1}$. We successively consider each implicational input formula, and apply the $\impl$ rule bottom-up in each case. Suppose we have already processed $\phi_{1} \imp \psi_{1}^{\inp}, \ldots, \phi_{\ell} \imp \psi_{\ell}^{\inp}$, so that $\phi_{\ell+1} \imp \psi_{\ell+1}^{\inp}$ is the current implicational input formula under consideration. For each label $u$ occurring in the top nested sequent of the branches extending $\branch_{k+1}$ such that $\rable{w_{\ell+1}}{u}$, successively apply the $\impl$ rule bottom-up. Once each branch $\branch_{i}$ has been processed for each $1 \leq i \leq n$, we continue to the next step.

\medskip

$\impr$. Let $\branch_{1}, \ldots, \branch_{n}$ be all branches in the pseudo-derivation so far constructed with $\ns_{1}, \ldots, \ns_{n}$ the top sequents of each branch, respectively. We successively consider each $\branch_{i}$ for $1 \leq i \leq n$, performing a set of operations which extend the branch with $\impr$ rules bottom-up. Suppose that $\branch_{1},\ldots,\branch_{k}$ have already been processed, so that $\branch_{k+1}$ is the current branch under consideration. Let $\ns_{k+1}$ be of the form
$$
\ns\{\phi_{1} \imp \psi_{1}^{\outp}\}_{w_{1}} \cdots \{\phi_{m} \imp \psi_{m}^{\outp}\}_{w_{m}}
$$
 with $\phi_{i} \imp \psi_{i}^{\outp}$ all implicational output formulae in $\ns_{k+1}$. We successively consider each implicational output formula, and apply the $\impr$ rule bottom-up in each case. This extends $\branch_{k+1}$, so that it now has a top sequent of the form
$$
\ns\{[\emptyset; \phi_{1}^{\inp}, \psi_{1}^{\outp}]_{v_{1}}\}_{w_{1}} \cdots \{[\emptyset; \phi_{m}^{\inp}, \psi_{m}^{\outp}]_{v_{m}}\}_{w_{n}}
$$
 Once each branch $\branch_{i}$ has been processed for each $1 \leq i \leq n$, we continue to the next step.

\medskip

$\existsl$. Let $\branch_{1}, \ldots, \branch_{n}$ be all branches in the pseudo-derivation so far constructed with $\ns_{1}, \ldots, \ns_{n}$ the top sequents of each branch, respectively. We successively consider each $\branch_{i}$ for $1 \leq i \leq n$, performing a set of operations which extend the branch with $\existsl$ rules bottom-up. Suppose that $\branch_{1},\ldots,\branch_{k}$ have already been processed, so that $\branch_{k+1}$ is the current branch under consideration. Let $\ns_{k+1}$ be of the form
$$
\ns\{\va_{1}; \exists x_{1} \phi_{1}^{\inp}\}_{w_{1}} \cdots \{\va_{m}; \exists x_{m} \phi_{m}^{\inp}\}_{w_{m}}
$$
 with $\exists x_{m} \phi_{i}^{\inp}$ all existential input formulae in $\ns_{k+1}$. We successively consider each existential input formula, and apply the $\existsl$ rule bottom-up in each case. This extends $\branch_{k+1}$, so that it now has a top sequent of the form
$$
\ns\{\va_{1},y_{1}; \phi_{1}(y_{1}/x_{1})^{\inp}\}_{w_{1}} \cdots \{\va_{m},y_{m}; \phi_{m}(y_{m}/x_{m})^{\inp}\}_{w_{m}}
$$
 with $y_{1},\ldots,y_{m}$ fresh variables. Once each branch $\branch_{i}$ has been processed for each $1 \leq i \leq n$, we continue to the next step.

\medskip

$\existsri$. Let $\branch_{1}, \ldots, \branch_{n}$ be all branches in the pseudo-derivation so far constructed with $\ns_{1}, \ldots, \ns_{n}$ the top sequents of each branch, respectively. We successively consider each $\branch_{i}$ for $1 \leq i \leq n$, performing a set of operations which extend the branch with $\existsri$ rules bottom-up. Suppose that $\branch_{1},\ldots,\branch_{k}$ have already been processed, so that $\branch_{k+1}$ is the current branch under consideration. Let $\ns_{k+1}$ be of the form
$$
\ns\{\exists x_{1} \phi_{1}^{\outp}\}_{w_{1}} \cdots \{\exists x_{m} \phi_{m}^{\outp}\}_{w_{m}}
$$
 with $\exists x_{i} \phi_{i}$ all existential output formulae in $\ns_{k+1}$. We successively consider each existential output formula, and apply the $\existsri$ rule bottom-up in each case. Suppose we have already processed $\exists x_{1} \phi_{1}^{\outp}, \ldots, \exists x_{\ell} \phi_{\ell}^{\outp}$, so that $\exists x_{\ell+1} \phi_{\ell+1}$ is the current existential output formula under consideration. For each label $u$ occurring in the top nested sequent $\nsb$ of the branch extending $\branch_{k+1}$ such that $\rable{u}{w_{\ell+1}}$ and $y : \vb(u) \in \nsb$, successively apply the $\existsri$ rule bottom-up instantiating $\exists x_{\ell+1} \phi_{\ell+1}$ with each label $y$. Once each branch $\branch_{i}$ has been processed for each $1 \leq i \leq n$, we continue to the next step.
 
 \medskip

$\allli$. Similar to the $\existsri$ case above.

\medskip

$\allr$. Similar to the $\impr$ and $\existsl$ cases above.

\medskip

$\lin$. Let $\branch_{1}, \ldots, \branch_{n}$ be all branches in the pseudo-derivation so far constructed with $\ns_{1}, \ldots, \ns_{n}$ the top sequents of each branch, respectively. We successively consider each $\branch_{i}$ for $1 \leq i \leq n$, performing a set of operations which extend the branch with $\lin$ rules bottom-up. Suppose that $\branch_{1},\ldots,\branch_{k}$ have already been processed, so that $\branch_{k+1}$ is the current branch under consideration with $\ns_{k+1}$ its top sequent. Choose any pair of nestings $[\nsb]_{u},[\nsc]_{v}$ occurring side-by-side at any depth in $\ns_{k+1}$ and bottom-up apply the $\lin$ rule to this pair; repeat this process until all branches extending $\branch_{k+1}$ have linear nested sequents as top sequents. We are guaranteed that this process will terminate as $\ns_{k+1}$ is finite and each bottom-up application of $\lin$ shifts branching toward the leaves of a nested sequent until it is completely linearized. Once each branch $\branch_{i}$ has been processed for each $1 \leq i \leq n$, we continue to step $\idfo$ and $\botl$ above.

\medskip

\noindent
This concludes the description of $\prove$.\\

If $\prove$ returns $\success$, then by \lem~\ref{lem:fresh-delete}, a proof of the input has been found with the caveat that all redundant inferences generated by the $\idfo$ and $\botl$ step must be contracted. If $\prove$ does not halt, then it generates an infinite tree with finite branching. Hence, by K\"onig's lemma, we know that an infinite branch $\branch$ exists. We define a model $M = (W,\leq,D,V)$ by means of this branch accordingly:
\begin{itemize}

\item $w \in W$ \iffi $w = w_{0}$ or $w$ is a label occurring in $\branch$;

\item $w \leq u$ \iffi there exists a nested sequent $\nsb$ in $\branch$ such that $w,u \in \lab(\nsb)$ and $\rable{w}{u}$;

\item $x \in D(w)$ \iffi there exists a nested sequent $\nsb$ in $\branch$ such that (1) $u,w \in \lab(\nsb)$, (2) $\rable{u}{w}$, and (3) $x : \vb(u) \in \nsb$;

\item $(x_{1}, \ldots, x_{n}) \in V(p,w)$ \iffi there exists a nested sequent $\nsb$ in $\branch$ such that (1) $u,w \in \lab(\nsb)$, (2) $\rable{u}{w}$, and (3) $ u : p(x_{1}, \ldots, x_{n})^{\inp} \in \nsb$.

\end{itemize}
 Let us verify that $M$ is indeed an \nd-model. First, we know that $W \neq \emptyset$ because $w_{0} \in W$. Second, we show that $\leq$ is reflexive, transitive, and connected. The $\leq$ relation is reflexive by definition, so let us first show that it is transitive, whereby we assume that for $w,u,v \in W$, $w \leq u$ and $u \leq v$. Then, there exists a nested sequent $\ns_{1}$ such that $\rable{w}{u}$ holds and a nested sequent $\ns_{2}$ such that $\rable{u}{v}$ holds. We know that either $\ns_{1}$ occurs above $\ns_{2}$, vice-versa, or the two are identical. We suppose the first case without loss of generality. By \lem~\ref{lem:rable-preserved-up}, we know that $\rable{u}{v}$ holds for $\ns_{1}$ as well, and thus, $\rable{w}{v}$ holds for $\ns_{1}$, showing that $w \leq v$. We now show that $M$ is connected and suppose that for $w,u,v \in W$, $w \leq u$ and $w \leq v$. Then, there exists a nested sequent $\ns_{1}$ such that $\rable{w}{u}$ holds and a nested sequent $\ns_{2}$ such that $\rable{w}{v}$ holds. By the $\lin$ step in $\prove$, we know that there will exist a nested sequent $\ns_{3}$ above $\ns_{1}$ and $\ns_{2}$ that is \emph{linear}. Thus, either $\rable{u}{v}$ or $\rable{v}{u}$ will hold in $\ns_{3}$, showing that either $u \leq v$ or $v \leq u$.
 
 We now show that $M$ satisfies the (ND) property. Suppose for $w,u \in W$ that $x \in D(w)$ and $w \leq u$. By the first fact, there exists a nested sequent $\ns_{1}$ in $\branch$ containing the labels $v$ and $w$ such that $\rable{v}{w}$ and $x \in \va$ with $\va$ the signature of the $v$-component. By the second fact, we know there exists a nested sequent $\ns_{2}$ in $\branch$ containing the labels $w$ and $u$ such that $\rable{w}{u}$. We know that either $\ns_{1}$ occurs above $\ns_{2}$ in $\branch$, vice-versa, or both are identical. We suppose the first case without loss of generality. Therefore, by \lem~\ref{lem:rable-preserved-up}, we know that $\rable{v}{u}$ holds in $\ns_{1}$, implying that $\rable{v}{u}$ holds as well. Thus, $x \in D(u)$ by the definition of $M$ above.
 
 We must additionally show that (i) for each $w \in W$, $D(w) \neq \emptyset$, and (ii) for each $w \in W$ and $p \in \pred$, if $\ari{p} = n$, then $V(p,w) \subseteq D(w)^{n}$. (i) Since our input is of the form $z,\vec{x};\phi(\vec{x})$ with $z$ a fresh variable, and because every rule of $\nnd$ bottom-up preserves the place of variables (\lem~\ref{lem:rable-preserved-up}), we have that $z$ will occur in the signature of the $w_{0}$-component (i.e. root) of each nested sequent in $\branch$. As $w_{0} \leq w$ for every label $w \in W$, it follows by the definition of $M$ that $z \in D(w)$ for every $w \in W$. (ii) Let $w \in W$ and $p \in \pred$ with $\ari{p} = n$. Assume $(x_{1}, \ldots, x_{n}) \in V(p,w)$. Then, there exists a nested sequent $\nsb$ in $\branch$ such that $u,w \in \lab(\nsb)$, $\rable{u}{w}$, and $ u : p(x_{1}, \ldots, x_{n})^{\inp} \in \nsb$. By the $\doms$ step of $\prove$, $x_{1} : \vb(u) \in \nsb$, $\ldots$, $x_{n} : \vb(u) \in \nsb$, which shows that $x_{1}, \ldots, x_{n} \in D(w)$ by the definition of $D$ above, and so, $V(p,w) \subseteq D(w)^{n}$.
 
 Last, we must show that $M$ satisfies the monotonicity condition (M).  Assume we have $w, u \in W$ such that $w \leq u$ and $(x_{1}, \ldots, x_{n}) \in V(p,w)$. By the first fact, there exists a nested sequent $\ns_{1}$ in $\branch$ containing the labels $w$ and $u$ such that $\rable{w}{u}$. By the second fact, there exists a nested sequent $\ns_{2}$ in $\branch$ containing the labels $v$ and $w$ such that $\rable{v}{w}$ and $p(x_{1}, \ldots, x_{n})^{\inp}$ occurs in the $v$-component of $\ns_{1}$. We know that either $\ns_{1}$ occurs above $\ns_{2}$ in $\branch$, vice-versa, or both are identical. We suppose the first case without loss of generality. Then, by \lem~\ref{lem:rable-preserved-up} we know (1) that $\rable{v}{w}$ holds for $\ns_{1}$, and (2) $p(x_{1}, \ldots, x_{n})^{\inp}$ occurs in the $v$-component of $\ns_{1}$. By (1), we have that $\rable{v}{u}$ holds, which implies that $(x_{1}, \ldots, x_{n}) \in V(p,u)$ by (2).
 
 By the argument above, we know that $M$ is indeed an \nd-model. Let us now define the assignment $\mu$ to be the identity function on $D(W)$ (mapping all other variables in $\vars \setminus D(W)$ arbitrarily). We now prove for each $\nsb \in \branch$, (1) if $w : \xi^{\inp} \in \nsb$, then $M,w,\mu \Vdash \xi$, and (2) if $w : \xi^{\outp} \in \nsb$, then $M,w,\mu \not\Vdash \xi$. We prove (1) and (2) by mutual induction on the complexity of $\xi$, and let $\nsb$ be an arbitrary nested sequent in $\branch$.
 
\begin{itemize}

\item $w : p(x_{1},\ldots,x_{n})^{\inp} \in \nsb$. By the definition of $V$, we know that $(x_{1},\ldots,x_{n}) \in V(p,w)$, from which it follows that $M,w,\mu \Vdash p(x_{1},\ldots,x_{n})$.

\item $w : p(x_{1},\ldots,x_{n})^{\outp} \in \nsb$. Suppose there exists a nested sequent $\nsc$ in $\branch$ such that (1) $u,w \in \lab(\nsb)$, (2) $\rable{u}{w}$, and (3) $p(x_{1}, \ldots, x_{n})^{\inp}$ occurs in the $u$-component of $\nsc$. Then, either $\nsc$ occurs above $\nsb$ in $\branch$, vice-versa, or $\nsc = \nsb$. By \lem~\ref{lem:rable-preserved-up}, we have that $\nsc$ is an instance of $\idfo$ in the first case, and $\nsb$ is an instance of $\idfo$ in the second and third cases. Therefore, by the $\idfo$ step of $\prove$, we would have that $\branch$ is finite, in contradiction to our assumption. By the definition of $V$, we have that $(x_{1},\ldots,x_{n}) \not\in V(p,w)$, showing that $M,w,\mu \not\Vdash p(x_{1},\ldots,x_{n})$.

\item $w : \bot^{\inp} \in \nsb$. Then, by the $\botl$ step of $\prove$ proof-search will terminate on $\branch$. This implies that $\branch$ is finite in contradiction to our assumption, from which it follows that for every nested sequent $\nsc$ in $\branch$, $u : \bot^{\inp} \not\in \nsc$ for every $u \in \lab(\nsc)$.

\item $w : \bot^{\outp} \in \nsb$. The case is trivial as $M,w,\mu \not\Vdash \bot$.

\item $w : \psi \lor \chi^{\inp} \in \nsb$. By the $\disl$ step of $\prove$, we know that a nested sequent $\nsc$ exists in $\branch$ such that either $w : \psi^{\inp} \in \nsc$ or $w : \chi^{\inp} \in \nsc$. By IH, we have that $M,w,\mu \Vdash \psi$ or $M,w,\mu \Vdash \chi$, showing that $M,w,\mu \Vdash \psi \lor \chi$.

\item $w : \psi \lor \chi^{\outp} \in \nsb$. By the $\disr$ step of $\prove$, we know that a nested sequent $\nsc$ exists in $\branch$ such that $w : \psi^{\outp}, w : \chi^{\outp} \in \nsc$. By IH, we have that $M,w,\mu \not\Vdash \psi$ and $M,w,\mu \not\Vdash \chi$, showing that $M,w,\mu \not\Vdash \psi \lor \chi$.

\item $w : \psi \land \chi^{\inp} \in \nsb$. By the $\conl$ step of $\prove$, we know that a nested sequent $\nsc$ exists in $\branch$ such that $w : \psi^{\inp}, w : \chi^{\inp} \in \nsc$. By IH, we have that $M,w,\mu \Vdash \psi$ and $M,w,\mu \Vdash \chi$, showing that $M,w,\mu \Vdash \psi \land \chi$.

\item $w : \psi \land \chi^{\outp} \in \nsb$. By the $\conr$ step of $\prove$, we know that a nested sequent $\nsc$ exists in $\branch$ such that either $w : \psi^{\outp} \in \nsc$ or $w : \chi^{\outp} \in \nsc$. By IH, we have that either $M,w,\mu \not\Vdash \psi$ or $M,w,\mu \not\Vdash \chi$, showing that $M,w,\mu \not\Vdash \psi \land \chi$.

\item $w : \psi \imp \chi^{\inp} \in \nsb$. Suppose that $M,w,\mu \Vdash \psi$ and let $w \leq u$ for an arbitrary $u$ in $W$. By the definition of $\leq$ we know that there exists a nested sequent $\nsc$ in $\branch$ such that $\rable{w}{u}$. Hence, by \lem~\ref{lem:rable-preserved-up}, $\rable{w}{u}$ will hold for every nested sequent above $\nsc$ in $\branch$. At some point during the computation of $\prove$, the $\impl$ step will be reached, showing that for some $\nsd$ in $\branch$ above $\nsb$ and $\nsc$ either $u : \psi^{\outp}$ or $u : \chi^{\inp}$. By IH, we have that either $M,u,\mu \not\Vdash \psi$ or $M,u,\mu \Vdash \chi$, showing that $M,w,\mu \Vdash \psi \imp \chi$.

\item $w : \psi \imp \chi^{\outp} \in \nsb$. By the $\impr$ step of $\prove$, there will exist a nested sequent $\nsc$ above $\nsb$ of the form $\nsc\{[\psi^{\inp},\chi^{\outp}]_{v}\}_{w}$. By IH, we have that $M,v,\mu \Vdash \psi$ and $M,v,\mu \not\Vdash \chi$, and by the definition of $\leq$, we know that $w \leq u$. Therefore, $M,w,\mu \not\Vdash \psi \imp \chi$.

\item $w : \exists x \psi^{\inp} \in \nsb$. By the $\existsl$ step of $\prove$, there will exist a nested sequent $\nsc$ above $\nsb$ of the form $\nsc\{\va,y; \psi(y/x)^{\inp}\}_{w}$ with $y$ fresh. By the definition of $D$, we know that $y \in D(w)$, and by IH, we have that $M,w,\mu[y/y] \Vdash \psi(y/x)$, showing that $M,w,\mu \Vdash \exists x \psi$.

\item $w : \exists x \psi^{\outp} \in \nsb$. Let $y$ be an arbitrary variable in $D(w)$. By the definition of $D$, there exists a nested sequent $\nsc$ in $\branch$ such that $u,w \in \lab(\nsb)$, $\rable{u}{w}$, and $y$ occurs in the signature of the $u$-component of $\nsc$. There will be a nested sequent above $\nsb$ and $\nsc$ in $\branch$ for which $\rable{u}{w}$ will hold and $y$ will occur in the signature of its $u$-component, and for which $\existsri$ will be applied bottom-up. This will yield a nested sequent $\nsd$ in $\branch$ such that $w : \psi(y/x)^{\outp} \in \nsd$. By IH, $M,w,\mu \not\Vdash \psi(y/x)$, which implies that $M,w,\mu \not\Vdash \exists x \psi$ since $y$ was arbitrarily chosen.

\item $w : \forall x \psi^{\inp} \in \nsb$. Let $y$ be an arbitrary variable in $D(u)$ and suppose that $w \leq u$ holds. By the definition of $D$, there exists a nested sequent $\nsc$ in $\branch$ such that $v,u \in \lab(\nsb)$, $\rable{v}{u}$, and $y$ occurs in the signature of the $v$-component of $\nsc$. By the definition of $\leq$, there exists a nested sequent $\nsd$ in $\branch$ such that $w,u \in \lab(\nsd)$ and $\rable{w}{u}$. Hence, there will be a nested sequent above $\nsb$, $\nsc$, and $\nsd$ in $\branch$ such that $\rable{v}{u}$, $\rable{w}{u}$, and $y$ occurs in the signature of its $v$-component, for which the $\allli$ step of $\prove$ will be applicable, and thus the nested sequent above it in $\branch$ will contain $\psi(y/x)$ in its $u$-component. By IH, we have that $M,u,\mu \Vdash \psi(y/x)$, from which it follows that $M,w,\mu \Vdash \forall x \psi$ by our assumptions.

\item $w : \forall x \psi^{\outp} \in \nsb$. By the $\allr$ step of $\prove$, there will exist a nested sequent $\nsc$ above $\nsb$ of the form $\nsc\{[y; \psi(y/x)^{\outp}]_{u}\}_{w}$ with $y$ fresh. By the definition of $D$, we know that $y \in D(u)$, and by the definition of $\leq$, we know that $w \leq u$. By IH, we have that $M,u,\mu \not\Vdash \psi(y/x)$, which implies that $M,w,\mu \not\Vdash \forall x \psi$.

\end{itemize}

 If we define $\iota$ to be an $M$-interpretation that is the identity function on the set $\lab$ of labels, then by the proof above, $M,\iota,\mu,w_{0} \not\models \vec{x}; \phi(\vec{x})$. Thus, we have shown that if a nested sequent of the form $\vec{x}; \phi(\vec{x})$ does not have a proof in $\nndl$, then it is \ndl-invalid, implying that every such \ndl-valid nested sequent is provable in $\nndl$.
\end{proof}

\section{Errata and Notes on Published Version}

\noindent
\textbf{1.} Although it is neither mentioned in this manuscript nor in the published version~\citep{LyoTF23}, the $\doms$, $\existsrii$, and $\alllii$ rules are admissible in $\ncd$ and $\ncdl$. First, the $\doms$ step of $\prove$ in the proof of cut-free completeness (\thm~\ref{thm:cut-free-comp}) can be omitted, and in the counter-model construction we define the domain $D(w) := \vars$ for each $w \in W$ of the extracted counter-model $M$. The semantic condition encoded by $\doms$, namely `$V(p,w) \subseteq D(w)^{n}$ with $\ari{p} = n$', will trivially hold, showing that $\doms$ is unneeded and admissible. Second, the $\existsrii$ and $\alllii$ rules are used in the proof of \lem~\ref{lem:fresh-delete}, which is used in the proof of cut-free completeness to (1) remove the starting variable $z$ if a proof of the input is found and (2) to ensure that all domains of the extracted counter-model are non-empty (as they contain $z$) if a proof of the input is not found. As mentioned above, in the counter-model construction we define the domain $D(w) := \vars$ for each $w \in W$ in $\ncd$ and $\ncdl$ cases, meaning all domains will be non-empty by definition, and thus, cut-free completeness can be shown for $\ncd$ and $\ncdl$ without considering the $\existsrii$ and $\alllii$ rules.\\

\noindent
\textbf{2.} I have changed the cut-free completeness theorem (\thm~\ref{thm:cut-free-comp}) in this arXiv manuscript from the one stated in the published version of the paper~\citep{LyoTF23}. In the published version of the paper, the cut-free completeness theorem reads: 
\begin{flushleft}
(1) ``Let $\mathrm{L} \in \{\ipc, \nd, \cd, \gd, \ndl, \cdl\}$. If a nested sequent $\nsa$ is $\mathrm{L}$-valid, then it is derivable in $\ncalc_{\mathrm{L}}$.''
\end{flushleft}
whereas in this manuscript it reads:
\begin{flushleft}
(2) ``Let $\mathrm{L} \in \{\ipc, \nd, \cd, \gd, \ndl, \cdl\}$ and $\vec{x}; \phi(\vec{x})$ be a nested sequent with $\vec{x}$ all free variables in $\phi(\vec{x})$. If a nested sequent $\vec{x}; \phi(\vec{x})$ is $\mathrm{L}$-valid, then it is derivable in $\ncalc_{\mathrm{L}}$.''
\end{flushleft}
I have discovered that the proof of the former claim (stated in the published version~\citep{LyoTF23}) contains an error and is incorrect, though the new proof of the latter claim (which resides in \app~\ref{app:completeness} of this manuscript) fixes this error. Below, I will explain the issue in the proof of statement (1) and clarify how the new version of cut-free completeness solves the pinpointed issue.

Let us take the nested sequent $\nsa = z; \forall x p(x) \imp p(y)$ as input to the $\prove$ algorithm (which is the same in both the proof of claim (1) and claim (2)), where $z$ is the starting variable. $\prove$ constructs the following proof in a bottom-up manner, and after the first $\impr$ inference, repeatedly applies the $\allli$ rule, adding redundant copies of $p(z)$, ad infinitum.
\begin{center}
\AxiomC{$\vdots$}
\RightLabel{$\allli$}
\UnaryInfC{$z ; [\emptyset; \forall x p(x)^{\inp}, p(z)^{\inp}, p(z)^{\inp}, p(y)^{\outp}]_{u}$}
\RightLabel{$\allli$}
\UnaryInfC{$z ; [\emptyset; \forall x p(x)^{\inp}, p(z)^{\inp}, p(y)^{\outp}]_{u}$}
\RightLabel{$\allli$}
\UnaryInfC{$z ; [\emptyset; \forall x p(x)^{\inp}, p(y)^{\outp}]_{u}$}
\RightLabel{$\impr$}
\UnaryInfC{$z ;\forall x p(x) \imp p(y)^{\outp}$}
\DisplayProof
\end{center}
The proof itself is an infinite branch $\branch$, from which we may extract a $\ndl$-model $M = (W,\leq,D,V)$ such that $W = \{w_{0},u\}$, $\leq = \{(w_{0},w_{0}),(w_{0},u),(u,u)\}$, $D(w_{0}) = D(u) = \{z\}$, $V(p,w_{0}) = \emptyset$, and $V(p,u) = \{z\}$. Moreover, as in the proof of claim (1), we define the $M$-assignment $\mu : \vars \to D(W)$ such that $\mu$ is the identity function on the elements (which are variables) in $D(W)$ and $\mu$ maps every other variable in $\vars \setminus D(W)$ arbitrarily into $D(W)$. We now make the important observation that $D(W) = \{z\}$, meaning $\mu(y) = z$. In the proof of claim (1) in the published version~\citep{LyoTF23}, we argue the following by a mutual induction on the complexity of $\xi$:
\begin{flushleft}
``For each $\nsb \in \branch$, (1) if $w : \xi^{\inp} \in \nsb$, then $M,w,\mu \Vdash \xi$, and (2) if $w : \xi^{\outp} \in \nsb$, then $M,w,\mu \not\Vdash \xi$.''
\end{flushleft}
The issue that arises is the following: although $p(y)^{\outp}$ is an output formula occurring in, e.g., the nested sequent $z ; [\emptyset; \forall x p(x)^{\inp}, p(y)^{\outp}]_{u}$ of $\branch$, we have that $M,u,\mu \Vdash p(y)$ (rather than $M,u,\mu \not\Vdash p(y)$) because $\mu(y) = z \in \{z\} = V(p,u)$. Therefore, the above claim does not hold in the proof of claim (1) in the published version~\citep{LyoTF23}. The problem is the following:
\begin{flushleft}
\textit{We cannot ignore the free variables of the nested sequent input into $\prove$ as they are relevant in defining a counter-model of the input.}
\end{flushleft}

The current manuscript fixes this issue by taking the free variables of the input into account. Notice, for instance, that the above example does not cause problems in the proof of claim (2) since $z,y;\forall x p(x) \imp p(y)^{\outp}$ would be input into $\prove$, and the following proof would be found:
\begin{center}
\AxiomC{ }
\RightLabel{$\id$}
\UnaryInfC{$z,y ; [\emptyset; \forall x p(x)^{\inp}, p(z)^{\inp}, p(y)^{\inp}, p(y)^{\outp}]_{u}$}
\RightLabel{$\allli \times 2$}
\UnaryInfC{$z,y ; [\emptyset; \forall x p(x)^{\inp}, p(y)^{\outp}]_{u}$}
\RightLabel{$\impr$}
\UnaryInfC{$z,y ;\forall x p(x) \imp p(y)^{\outp}$}
\DisplayProof
\end{center}

\noindent
\textbf{3.} An interesting consequence of formulating the statement of cut-free completeness as in this manuscript (i.e. as claim (2) in note \textbf{2} above) is that the domain shift rule $\doms$ appears admissible in $\nnd$ and $\nndl$. (NB. By note \textbf{1} above, we then have that $\doms$ is admissible in every nested calculus considered in this manuscript.) The reason being, due to the shape of the nested sequent considered in cut-free completeness, viz. $\vec{x}; \phi(\vec{x})$, it appears that one can prove the following lemma, where we let $\prove'$ be $\prove$ without the $\doms$ case:

\begin{lemma}
Let $\vec{x}; \phi(\vec{x})$ be a nested sequent with $\vec{x}$ all free variables in $\phi(\vec{x})$ and let $\prf$ be the pseudo-derivation constructed by $\prove'(\vec{x}; \phi(\vec{x}))$. For any nested sequent $\nsa$ occurring in $\prf$, if $w : \psi(\vec{y})^{\io} \in \nsa$ with $\io \in \{\inp,\outp\}$ and $\vec{y} = y_{1}, \ldots, y_{k}$ all free variables in $\psi(\vec{y})^{\io}$, then there exist $u_{1}, \ldots, u_{k} \in \lab(\nsa)$ such that $\rable{u_{i}}{w}$ and $y_{i} : \va_{i}(u_{i}) \in \nsa$ for $1 \leq i \leq k$.
\end{lemma}

The above lemma is proven by considering $\prf$ in a bottom-up manner, observing that $\vec{x}; \phi(\vec{x})$ satisfies the property mentioned in the lemma, and then checking that each inference rule preserves this property when applied bottom-up (which is straightforward to verify by inspecting the rules of $\nnd$ and $\nndl$). Then, by the definition of $D$ and $V$ in the counter-model $M$ constructed in the proof of claim (2) in note \textbf{2} above, i.e. \thm~\ref{thm:cut-free-comp} of this manuscript, it directly follows that for each $p \in \pred$ and $w \in W$ of $M$, $V(p,w) \subseteq D(w)^{n}$ with $\ari{p} = n$. Therefore, the $\doms$ rule is not needed to ensure this condition.\\

\end{document}